\newtheorem{theorem}{Theorem}[section]
\numberwithin{theorem}{section}
\newtheorem{lemma}[theorem]{Lemma}%[section]
\newtheorem{exam}[theorem]{Example}%[section]
\newtheorem{remark}[theorem]{Remark}%[section]
\newtheorem{defn}[theorem]{Definition}%[section]
\newtheorem{corollary}[theorem]{Corollary}%[section]
\newenvironment{proof}{\noindent\\ \noindent\relax{\sc
     Proof}}{{\samepage\par\nopagebreak\hbox
     to\hsize{\hfill$\Box$}}}
\newcommand{\be}{\begin{equation}} \newcommand{\ee}{\end{equation}}
\newcommand{\bd}{\begin{displaymath}} \newcommand{\ed}{\end{displaymath}}
\newcommand{\ben}{\begin{enumerate}} \newcommand{\een}{\end{enumerate}}
\newcommand{\bi}{\begin{itemize}} \newcommand{\ei}{\end{itemize}}
\newcommand{\ud}{\mathrm{d}}
\newcommand{\E}[1]{\operatorname{E}\left[ #1 \right]}
\newcommand{\Expectation}[1]{\operatorname{E}\left[ #1 \right]}
\newcommand{\variance}[1]{\operatorname{Var}\left[ #1 \right]}
\newcommand{\covariance}[2]{\operatorname{Cov}\left[ #1,#2 \right]}
\newcommand*\circledchar[1]{
\begin{tikzpicture}
\node[draw,circle,inner sep=2.5pt] {#1};
\end{tikzpicture}}
\newcounter{rEPP}
\newcounter{rEexpPsi}
\newcounter{rEtauP}
\begin{document}

%\linenumbers

\title{A Central Limit Theorem for Punctuated Equilibrium}

\date{}

\author{K. Bartoszek \thanks{{krzysztof.bartoszek@liu.se, krzbar@protonmail.ch, 
Department of Computer and Information Science, Link\"oping University, 581 83 Link\"oping, Sweden }}}

\maketitle

\begin{abstract}
Current evolutionary biology models usually assume that a phenotype undergoes
gradual change. This is in stark contrast to biological intuition, which 
indicates that change can also be punctuated---the phenotype can jump.
Such a jump could especially occur at speciation, i.e. dramatic 
change occurs that drives the species apart. Here we derive 
a Central Limit Theorem for punctuated equilibrium. We show that,
if adaptation is fast, for weak convergence to normality to hold, 
the variability in the occurrence of change has to disappear with time.
\end{abstract}

\noindent
Keywords : 
Branching diffusion process, Conditioned branching process,
Central Limit Theorem, L\'evy process,
Punctuated equilibrium, Yule--Ornstein--Uhlenbeck with jumps process
\\~\\ \noindent
AMS subject classification : 
60F05, 60J70, 60J85, 62P10, 92B99

\section{Introduction}

A long--standing debate in evolutionary biology is whether 
changes take place at times of speciation 
(punctuated equilibrium \citet[][]{NEldSGou1972,SGouNEld1993})
or gradually over time (phyletic gradualism, see references in \citet[][]{NEldSGou1972}). 
Phyletic gradualism is in line with Darwin's original envisioning of evolution
(\citet{NEldSGou1972}).
On the other hand, the theory of punctuated equilibrium was an answer to  what
fossil data was indicating (\citet{NEldSGou1972,SGouNEld1977,SGouNEld1993}).
A complete unbroken fossil series was rarely observed, rather
distinct forms separated by
long periods of stability (\citet{NEldSGou1972}). 
Darwin saw 
``the fossil record more 
as an embarrassment than as an aid to his theory'' (\citet{NEldSGou1972})
in the discussions with Falconer at the birth of the theory of evolution.
Evolution with jumps was proposed
under the name ``quantum evolution'' (\citet{GSim1947}) to the scientific community.
However, only later  (\citet{NEldSGou1972}) was punctuated equilibrium re--introduced 
into contemporary mainstream evolutionary theory.
Mathematical modelling of punctuated evolution on phylogenetic trees
seems to be still in its infancy 
(but see \citet{FBok2002,FBok2008,FBok2010,TMatFBok2008,AMooDSch1998,AMooSVamDSch1999}).
The main reason is that we do not seem to have sufficient understanding
of the stochastic properties of these models. 
An attempt was made in this direction (\citet{KBar2014})---to derive the tips' mean,
variance, covariance and interspecies correlation for a branching
Ornstein--Uhlenbeck (OU) process with jumps at speciation, alongside 
a way of quantitatively
assessing the effect of both types of evolution. 
Very recently \citet{PBasMMarSRob2017} considered the problem
from a statistical point of view and proposed an Expectation--Maximization
algorithm for a phylogenetic Brownian motion with jumps
and OU with jumps in the drift function model. This work
is very important to indicate as it includes estimation software
for a punctuated equilibrium model, something not readily available earlier.
\citet{SPenMHofAOli2017} also recently looked into estimation
procedures for bifurcating Markov chains.

Combining jumps with an OU process is attractive
from a biological point of view. It is consistent with the
original motivation behind punctuated equilibrium.
At branching, dramatic events occur that drive species 
apart. But then stasis between these jumps 
does not mean that no change takes place, rather
that during it ``fluctuations of little or no accumulated consequence'' occur (\citet{SGouNEld1993}).
The OU process fits into this idea because
if the adaptation rate is large enough, then the process reaches stationarity 
very quickly and oscillates around the optimal state.
This then can be interpreted as stasis between the jumps---the small
fluctuations.  
\citet{EMay1982}  supports this sort of reasoning by hypothesizing  that 
``The further removed in time a species from the original speciation 
event that originated it, the more its genotype will
have become stabilized and the more it is likely to resist change.'' 
It should perhaps be noted at this point, that a Reviewer pointed out that the modelling
approach presented in this work is not the same as the ``classical
view of punctuated equilibrium''. One would expect the jump to take place
in the direction of the optimum trait value. However, here at speciation
the jump is allowed to take place in any direction, also away from the optimum.
Then, after the jump, a relaxation period occurs and the trait
is allowed to evolve back to the optimum. Such a view on the jumps
is similar to e.g. \citet{FBok2002,FBok2008}'s modelling approach, however, there
the Brownian motion (BM) process was considered so no optimum parameter was present.
All the presented here results, concern the balance between the relaxation phenomena
and the jumps' magnitudes and chances of occurring. One way of maybe thinking
about jumps going against the optimum, is that at the speciation event
a short--lived (as afterwords evolution goes again in the direction
of the previous optimum), random environmental niche appeared that allowed part of the species'
population to break--off and form a new species. However, to make this any more formal
one would have to link it with models for the environment, fitness and trait
dependent speciation, which is beyond the scope of this paper. 

In this work we build up on previous results (\citet{KBar2014,KBarSSag2015}) and
study in detail the asymptotic behaviour of the average of the tip
values of a branching OU process, with jumps at speciation points,
evolving on a pure birth tree. 
To the best of our knowledge the work here is one of the 
first to consider the effect of jumps on a 
branching OU process in a phylogenetic context (but also look at \citet{PBasMMarSRob2017}).
It is possible that some of the results could be special
subcases of general results on branching Markov
processes (e.g. \citet{RAbrJDel2012,VBanJDelLMarVTra2011,BCloMHai2015,AMar2019,YRenRSonRZha2014,YRenRSonRZha2015,YRenRSonRZha2017}).
However, these studies use a very heavy functional analysis
apparatus, which unlike the direct one here, 
could be difficult for the applied reader. \citet{VBanJDelLMarVTra2011,JGuy2007,SPenHDjeAGui2014}'s works
are worth pointing out as they connect their results on bifurcating Markov processes  
with biological settings where branching phenomena are applicable, e.g. cell growth. 

In the work here we can observe the (well known)
competition between the tree's speciation and OU's adaptation (drift)
rates, resulting in a phase transition when the latter is half the former
(the same as in the no jumps case  
\citet{RAdaPMil2014,RAdaPMil2015,CAneLHoSRoc2017,KBarSSag2015}).
We show here that if variability in jump occurrences disappears with time 
or the model is in the critical regime (plus a bound assumption on the jumps' magnitude and chances of occurring),
then the contemporary sample mean will be asymptotically normally distributed. Otherwise
the weak limit can be characterized as a ``normal distribution with a
random variance''. Such probabilistic characterizations are important 
as tools for punctuated phylogenetic models are starting to be developed
(e.g. \citet{PBasMMarSRob2017}). 
This is partially due to an uncertainty of what is estimable, especially
whether the contribution of gradual and punctuated change
may be disentangled (but \citet{FBok2010} indicates that they should be distinguishable).
Large sample size distributional approximations will allow for choosing seeds
for numerical maximum likelihood procedures and sanity checks if the results of numerical
procedures make sense. For example in the one--dimensional OU case it is known that (for a Yule tree)
the sample average is a consistent estimator of the long term mean and the sample variance of 
the OU process' stationary variance (\citet{KBarSSag2015}). In the BM (Yule tree) case one can have
a consistent estimator of the diffusion coefficient \citep{KBarSSag2015JTB}.
Hence, from these sample statistics one can construct starting values for numerical estimation
procedures (as e.g. mvSLOUCH does now, \citep{KBarJPiePMosSAndTHan2012}).

Often a key ingredient in studying branching Markov processes is a 
``Many--to--One'' formula---the law of the trait
of an uniformly sampled individual in an ``average'' population (e.g. \citet{AMar2019}). The approach
in this paper is that on the one hand we condition
on the population size, $n$, but then to obtain the law (and its limit) of the contemporary
population, we consider moments of uniformly sampled species and the covariance between a 
uniformly sampled pair of species.

The strategy to study the limit behaviour is to first condition
on a realization of the Yule tree and jump pattern (on which branches after speciation did the jump
take place). This is, as 
conditional on the phylogeny and jump locations, the collection of the contemporary tips' trait values
will have a multivariate normal distribution, and hence their sample average will be normally distributed. 
We are able to represent (under the above conditioning) the variance of the sample average in terms of 
transformations of the number of speciation events on randomly selected lineage, 
time to coalescent of randomly selected pair of tips
and the number of common speciation events for a randomly selected pair of tips.
We consider the conditional (on the tree and jump pattern) expectation
of these transformations and then look at the rate of decay to $0$ 
of the variances of these conditional expectations. If this rate of decay
is fast enough, then they will converge to a constant and 
the normality of an appropriately scaled average of tips species will be retained in the limit.
Very briefly this rate of decay depends on how the product of the probability and variance
of the jump behaves along the nodes of the tree. We do not necessarily 
assume (as previously in \citet{KBar2014}) that the jumps are homogeneous on the whole tree.

First, in Section \ref{secNotation} we provide a series of formal definitions 
that introduce key random variables associated with the phylogeny
that are necessary for this study.
Afterwords, in Section \ref{secModelIntro} we introduce the considered probabilistic model
and the concepts from Section \ref{secNotation} in a more intuitive manner.
Then, in Section \ref{secMainRes} we present the main results. Section 
\ref{secKeyConvLem} is devoted to a series of technical convergence lemmata that
characterize the speed of decay of the effect of jumps 
on the variance and covariance of tip species. Finally, in Section \ref{secProof}
we calculate the first two moments of a scaled sample average,
introduce a particular random variable related to the model and put this
together with the previous convergence lemmata to prove the 
Central Limit Theorems (CLTs) of this paper.
It should be acknowledged at this point that 
in the original arXiv preprint of this paper  the convergence
to normality results were stated in an incomplete manner.
In particular the limiting normality in the critical regime was not described correctly.
The current characterization was noticed during the 
collaboration with Torkel Erhardsson \citep{KBarTErh2019arXiv} and 
more details on the previous mischaracterization can be found in Remark \ref{remTH}.

\section{Notation}\label{secNotation}
We first introduce two separate labellings for the tip and internal nodes
of the tree. Let the origin of the tree have label ``$0$''. 
Next we label from ``$1$'' to ``$n-1$'' the internal nodes
of the tree in their temporal order of appearance. The root
is ``$1$'', the node corresponding to the second speciation event
is ``$2$'' and so on. We label the tips of the
tree from ``$1$'' to ``$n$'' in an arbitrary fashion.
This double usage of the numbers ``$1$'' to ``$n-1$''
does not generate any confusion as it will always
be clear whether one refers to a tip or internal node.

\begin{defn}
~

$$
N_{\mathrm{Tip}}(t)=\{\mathrm{set~ of~ tip~ nodes~ at~ time}~t \}
$$
\end{defn}

\begin{defn}
~

$$
U^{(n)}=\inf\{ t\ge 0: \vert N_{\mathrm{Tip}}(t) \vert =n\},
$$
where $\vert A \vert$ denotes the cardinality of set $A$.
\end{defn}

\begin{defn}
For $i\in N_{\mathrm{Tip}}(U^{(n)})$,

$$
\Upsilon^{(i,n)}: \mathrm{number~of~nodes~on~the~path~from~the~root}~(\mathrm{internal~node}~1\mathrm{,~including~it})~\mathrm{to~tip~node}~i 
$$
\end{defn}

\begin{defn}
For $i\in N_{\mathrm{Tip}}(U^{(n)})$, define the finite sequence of length $\Upsilon^{(i,n)}$
as

$$
\mathrm{I}^{(i,n)}=\left(\mathrm{I}_{j}^{(i,n)}:\mathrm{I}_{j}^{(i,n)} 
\mathrm{is~a~node~on~the~root~to~tip~node}~i~\mathrm{path~and}~
\mathrm{I}_{j}^{(i,n)}<\mathrm{I}_{k}^{(i,n)}~\mathrm{for}~1\le j < k \le \Upsilon^{(i,n)}
\right)_{j=1}^{\Upsilon^{(i,n)}}
$$
\end{defn}

\begin{defn}
For $i\in N_{\mathrm{Tip}}(U^{(n)})$ and $r \in \{1,\ldots,n-1 \}$, 
let $\mathbf{1}^{(i,n)}_{r}$ be a binary random variable such that

$$
\mathbf{1}^{(i,n)}_{r}=1~\mathrm{iff}~ r \in \mathrm{I}^{(i,n)},
$$
where the $\in$ should be understood in the natural way that there exists a position $j$ 
in the sequence $\mathrm{I}^{(i,n)}$ s.t. $\mathrm{I}^{(i,n)}_{j}=r$.
\end{defn}

\begin{defn}
For $i\in N_{\mathrm{Tip}}(U^{(n)})$ and $r \in \{1,\ldots,\Upsilon^{(i,n)} \}$, 
let $J^{(i,n)}_{r}$ be a binary random variable equalling $1$
iff a jump (an event that will be discussed in more detail Section \ref{secModelIntro}) 
took place just after the $r$--th speciation event
in the sequence $\mathrm{I}^{(i,n)}$.
\end{defn}

\begin{defn}
For $i\in N_{\mathrm{Tip}}(U^{(n)})$ and $r \in \{1,\ldots,n-1 \}$, 
let $Z^{(i,n)}_{r}$ be a binary random variable equalling $1$
iff $\mathbf{1}^{(i,n)}_{r}=1$ and 
$J^{(i,n)}_{k}=1$, where $\mathrm{I}^{(i,n)}_{k}=r$.
\end{defn}

\begin{defn}
For $i,j\in N_{\mathrm{Tip}}(U^{(n)})$,

$$
\mathrm{I}^{(i,j,n)}=\mathrm{I}^{(i,n)}\cap \mathrm{I}^{(j,n)},
$$
where for two sequences $a=(a_{j})$ and $b=(b_{j})$ we define the operation 

$$
a \cap b = (a_{j}: a_{j}=b_{j}) 
$$
or in other words $a \cap b$ is the common prefix of sequences $a$ and $b$.
\end{defn}

\begin{defn}
For $i,j\in N_{\mathrm{Tip}}(U^{(n)})$,

$$
\upsilon^{(i,j,n)}=\vert \mathrm{I}^{(i,j,n)} \vert -1,
$$
where for a finite sequence $v$, $\vert v \vert$ means its length.
\end{defn}

\begin{remark}\label{remupsmin1}
We have the $-1$ in the above definition of $\upsilon^{(i,j,n)}$ as we are interested
in counting the speciation events that could have a jump common to both lineages.
As the jump occurs after a speciation event, the jumps connected to the coalescent
node of tip nodes $i$ and $j$ cannot affect both of these tips (see Section \ref{sbsecBrphen}).
\end{remark}

\begin{defn}
For $i,j\in N_{\mathrm{Tip}}(U^{(n)})$ and $r \in \{1,\ldots, \max(I^{(i,j,n)})-1 \}$, 
let $\mathbf{1}^{(i,j,n)}_{r}$ be a binary random variable such that

$$
\mathbf{1}^{(i,j,n)}_{r}=1~\mathrm{iff}~ r \in \mathrm{I}^{(i,j,n)}.
$$
For a sequence $a$, the operation $\max (a)$ chooses the maximum value present in the sequence.
\end{defn}

\begin{defn}
For $i,j\in N_{\mathrm{Tip}}(U^{(n)})$,

$$
\tau^{(i,j,n)}=U^{(n)}-\inf\{ t\ge 0: N_{\mathrm{Tip}}(t) = \max \left(\mathrm{I}^{(i,j,n)}\right)\}.
$$
\end{defn}

\begin{defn}
For $i,j\in N_{\mathrm{Tip}}(U^{(n)})$ and $r \in \{1,\ldots,\upsilon^{(n)}_{i,j} \}$, 
let $J^{(i,j,n)}_{r}$ be a binary random variable equalling $1$
iff $J^{(i,n)}_{r}=1$ and $J^{(j,n)}_{r}=1$.
\end{defn}

\begin{defn}
For $i,j\in N_{\mathrm{Tip}}(U^{(n)})$ and $r \in \{1,\ldots,n-1 \}$, 
let $Z^{(i,j,n)}_{r}$ be a binary random variable equalling $1$
iff $Z^{(i,n)}_{r}=1$ and $Z^{(j,n)}_{r}=1$.
\end{defn}

\begin{defn}\label{dfpairRV}
Let $R$ be uniformly distributed on $\{1,\ldots,n\}$ and
$(R,K)$ be uniformly distributed on the set of ordered pairs 
drawn from $\{1,\ldots,n\}$ (i.e. $\mathrm{Prob}((R,K)=(r,k))=\binom{n}{2}^{-1}$,
for $1\le r < k \le n$)

$$
\begin{array}{l}
\tau^{(n)}= \tau^{(R,K,n)},~~
\Upsilon^{(n)}=\Upsilon^{(R,n)},~~
\upsilon^{(n)}=\upsilon^{(R,K,n)},~~
\mathrm{I}^{(n)}=\mathrm{I}^{(R,n)},~~
\mathrm{\tilde{I}}^{(n)}=\mathrm{I}^{(R,K,n)},\\
\mathbf{1}_{i}=\mathbf{1}^{(R,n)}_{i},~~
\mathbf{\tilde{1}}_{i}=\mathbf{1}^{(R,K,n)}_{i},~~
J_{i}= J^{(R,n)}_{i},~~
\tilde{J}_{i}= J^{(R,K,n)}_{i},~~
Z_{i}= Z^{(R,n)}_{i},~~
\tilde{Z}_{i}= Z^{(R,K,n)}_{i}. 
\end{array}
$$
\end{defn}
Some of the variables defined in Defn. \ref{dfpairRV} are illustrated in Figs. \ref{figTreeNotation}, \ref{figPhiPsi}
and further described in the captions. It might be also useful to refer to \citet{KBar2014}, especially
Fig. A.$8$, therein.

\begin{remark}
For the sequences $\mathrm{I}^{(n)}$, $\mathrm{I}^{(r,n)}$, 
$\mathrm{I}^{(R,n)}$, $\mathrm{\tilde{I}}^{(n)}$, $\mathrm{I}^{(r,k,n)}$, $\mathrm{I}^{(R,K,n)}$
the $i$--th element is naturally indicated as
$\mathrm{I}_{i}^{(n)}$, $\mathrm{I}_{i}^{(r,n)}$, $\mathrm{I}_{i}^{(R,n)}$, 
$\mathrm{\tilde{I}}_{i}^{(n)}$, $\mathrm{I}_{i}^{(r,k,n)}$, $\mathrm{I}_{i}^{(R,K,n)}$
respectively.
\end{remark}

\begin{remark}
We drop the $n$ in the superscript for the random variables
$\mathbf{1}_{i}$, $\mathbf{\tilde{1}}_{i}$, $J_{i}$, $\tilde{J}_{i}$,
$Z_{i}$ and $\tilde{Z}_{i}$ as their distribution will not depend
on $n$ (see Lemma \ref{lem1i} Section \ref{secModelIntro}). In fact, in principle, there will be 
no need to distinguish between
the version with and without the tilde. However, such a
distinction will make it more clear to what one is referring to in the subsequent derivations
in this work.
\end{remark}

\section{A model for punctuated stabilizing selection}\label{secModelIntro}
\subsection{Phenotype model}
Stochastic differential equations (SDEs) are today the standard language to
model continuous traits evolving on a phylogenetic tree. The general
framework is that of a diffusion process

\be
\label{eqSDEdiff}
\ud X(t) = \mu(t,X(t))\ud t + \sigma_{a}\ud B_{t}.
\ee
The trait, $X(t)\in \mathbb{R}$, follows Eq. \eqref{eqSDEdiff} along each branch 
of the tree (with possibly branch specific parameters). 
At speciation times
this process divides into two processes evolving independently from that point. 
A workhorse of contemporary phylogenetic comparative methods (PCMs) is the OU process

\be
\label{eqSDEOU}
\ud X(t) = -\alpha(X(t)-\theta)\ud t + \sigma_{a}\ud B_{t},
\ee
where sometimes the parameters $\alpha$, $\theta$, $\sigma_{a}$ are
allowed to vary over the tree
(see e.g. \citet{KBarJPiePMosSAndTHan2012, JBeauetal2012, MButAKin2004,THan1997, VMitKBarGAsiTSta2020, VMitKBarTSta2019}).
Without loss of generality, for the purpose of the results here, we could have taken $\theta=0$.
However, we choose to retain the parameter for consistency with previous literature.
In this work we keep all the parameters ($\alpha$, $\theta$, $\sigma_{a}$) identical over the whole tree. 

The probabilistic properties (e.g. \citet{BSapJYao2005})
and statistical procedures (e.g. \citet{RAzaFDufAPet2014})
for processes with jumps have of course been developed.
 In the phylogenetic context 
there have been a few attempts to go beyond the diffusion framework
into L\'evy process, including Laplace motion,
(\citet{KBar2012,PDucCLeuSSziLHarJEasMSchDWeg2016,MLanJSchMLia2013})
and jumps at speciation points (\citet{KBar2014,PBasMMarSRob2017,FBok2003,FBok2008}).
We follow in the spirit of the latter
and consider that just after a branching point, with a probability 
$p$, independently on each daughter lineage, a jump can occur. 
It is worth underlining here a key difference of this model from the one 
considered by \citet{PBasMMarSRob2017}. Here after speciation
each daughter lineage may with probability $p$ jump (independently of the other).
In \citet{PBasMMarSRob2017}'s model, in the OU case,
the jump is not in the trait value but in the drift function, $\theta$ of Eq. \eqref{eqSDEOU}.
We assume that the jump random variable, added to the trait's value, is normally distributed with mean $0$
and variance $\sigma_{c}^{2} < \infty$. In other words, if at time $t$ there
is a speciation event, then just after it, independently for each daughter lineage, the trait
process $X(t^{+})$ will be 

\be\label{eqProcJ}
X(t^{+}) = (1-Z)X(t^{-}) + Z(X(t^{-})+\zeta),
\ee
where $X(t^{-/+})$ means the value of $X(t)$ respectively just before and 
after time $t$, $Z$ is a binary random variable with probability $p$ of being 
$1$ (i.e. jump occurs) and $\zeta\sim \mathcal{N}(0,\sigma_{c}^{2})$.
The parameters $p$ and $\sigma_{c}^{2}$ can, in particular, differ between speciation events.
Taking $p=0$ or $\sigma_{c}^{2}=0$ we recover the YOU without jumps model and results
\citep[described by][]{KBarSSag2015}.

\subsection{The branching phenotype}\label{sbsecBrphen}
In this work we consider a fundamental model of phylogenetic tree growth --- the 
conditioned on number of tip species pure birth process (Yule tree).
We first make the notation from Section \ref{secNotation} more intuitive, 
illustrating it also in Figs. \ref{figTreeNotation} and \ref{figPhiPsi}
(see also  \citet{KBar2014,KBarSSag2015,SSagKBar2012}).
We consider a tree that has $n$ tip species.
Let $U^{(n)}$ be the tree height, $\tau^{(n)}$ the time from today
(backwards) to the coalescent of a pair of randomly chosen tip species,
$\tau^{(n)}_{ij}$ the time to coalescent of tips $i$, $j$,
$\Upsilon^{(n)}$ the number of speciation events on a random lineage, 
$\upsilon^{(n)}$ the number of common speciation events for a
random pair of tips minus one
and $\upsilon^{(n)}_{ij}$
the number of common speciation events for 
tips $i$, $j$ minus one.
The jumps take place after the speciation event so
any jump associated with the speciation event that
split the two lineages, e.g. in Fig. \ref{figTreeNotation}
speciation event $2$ for the pair of lineages $A$ and $B$,
cannot be common to the the two lineages. Hence, in the 
caption Fig. \ref{figTreeNotation}, we have 
$\upsilon^{(n)}_{AB}=1$, see also Remark \ref{remupsmin1}.

Furthermore, let $I^{(n)}$ be the sequence of nodes on a randomly chosen
lineage and $J^{(n)}$ be a binary sequence indicating if a jump took
place after each respective node in the $I^{(n)}$ sequence.
Finally, let $T_{k}$ be the time
between speciation events $k$ and $k+1$, 
$p_{k}$ and $\sigma_{c,k}^{2}$ be respectively the probability 
and variance of the jump just after the $k$--th speciation event
on each daughter lineage. It is worth recalling that (unlike in \citet{PBasMMarSRob2017}'s model)
both daughter lineages may jump independently of each other. It is also
worth reminding the reader that previously (in \citet{KBar2014})
the jumps were homogeneous over the tree, in this manuscript
we allow their properties to vary with the nodes of the tree.

The following simple, yet very powerful, lemma 
comes from the uniformity of the choice of pair to coalesce at the $i$--th
speciation event in the backward  description of the Yule process. The 
proof can be found in \citet{KBar2014} on p. $45$ (by no means do I claim this well known result as my own).

\begin{lemma}\label{lem1i}
Consider for a Yule tree the indicator random variables $\mathbf{1}_{i}$ that the $i$--th (counting from the root)
speciation event lies on a randomly selected lineage and 
$\tilde{\mathbf{1}}_{i}$ that the $i$--th speciation event lies on the path from the origin to 
the most recent common ancestor of a randomly selected pair of tips.
Then for all $i\in \{1,\ldots,n-1\}$

$$
\E{\tilde{\mathbf{1}}_{i}} = \E{\mathbf{1}_{i}} = \mathrm{Prob}(\mathbf{1}_{i}=1)=
\frac{2}{i+1}.
$$
\end{lemma}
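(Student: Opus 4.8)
The plan is to work entirely in the backward (coalescent) description of the Yule tree, in which the topology is generated by starting from the $n$ tips and, at each level, merging a uniformly chosen pair of the currently active lineages; the $i$-th speciation event counted from the root is then precisely the coalescence that takes the process from $i+1$ active lineages down to $i$. Everything will follow from the single fact flagged in the hint: at that coalescence the merging pair is chosen uniformly among the $\binom{i+1}{2}$ pairs of lineages, independently of which tips those lineages carry.

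For the equalities involving $\mathbf{1}_i$ I would tag the randomly chosen tip and follow its lineage backward. Just before the coalescence that reduces $i+1$ lineages to $i$, the tagged tip sits inside exactly one of the $i+1$ active lineages, and the node created by that coalescence is an ancestor of the tip if and only if the tagged lineage is one of the two that merge. Since the merging pair is uniform, the tagged lineage belongs to it with probability $\binom{i}{1}/\binom{i+1}{2} = 2/(i+1)$, and this event is exactly $\{\mathbf{1}_i = 1\}$. As $\mathbf{1}_i$ is a $0/1$ variable, $\E{\mathbf{1}_i} = \mathrm{Prob}(\mathbf{1}_i = 1) = 2/(i+1)$. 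I would stress that this computation never needs the (random, size-biased) number of descendants of the node, precisely because the uniform pair choice is blind to subtree sizes.

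For $\tilde{\mathbf{1}}_i$ I would try to run the same argument after reducing the pair to a single tagged lineage: trace the two chosen tips backward until they coalesce at their most recent common ancestor, and from there follow the single merged lineage up to the origin. The event that the $i$-th speciation node lies on the origin--to--most-recent-common-ancestor path should again be recast as the participation of an appropriate tagged lineage in the level-$(i+1)$ merge, so that uniformity delivers the same value $2/(i+1)$. The delicate point---and the step I expect to be the main obstacle---is precisely this identification: above the coalescence level of the pair the two tips still occupy two distinct lineages, so the \emph{single} merged lineage does not yet exist, and one cannot blindly reuse the single-tip count. I would therefore condition on the level at which the pair coalesces and split on whether the pair has already merged by the time $i+1$ lineages remain, treat the ``already merged'' and ``still separate'' cases by their respective uniform pair probabilities, and then recombine. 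Getting this case split to collapse back to the clean value $2/(i+1)$ is where the real care is needed, and it is the part I would check line by line against the cited computation in \citet{KBar2014}.
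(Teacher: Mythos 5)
Your first half is fine: for $\mathbf{1}_{i}$ the tagged-lineage argument (the merging pair at the step reducing $i+1$ lineages to $i$ is uniform, so the tagged lineage takes part with probability $i/\binom{i+1}{2}=2/(i+1)$) is exactly the uniformity rationale the paper invokes, and the paper itself gives no further proof beyond citing \citet{KBar2014}. The genuine gap is the $\tilde{\mathbf{1}}_{i}$ half, which you leave open --- and here your proposed completion cannot work as you hope: the case split over the coalescence level does \emph{not} collapse back to $2/(i+1)$. Carrying out your own plan, given that the pair's most recent common ancestor is node $k$, your merged-lineage argument gives conditional probability $2/(i+1)$ for each $i<k$, and the case $i=k$ contributes $\pi_{n,i}$, so unconditionally one gets
$\E{\tilde{\mathbf{1}}_{i}} = \pi_{n,i} + \frac{2}{i+1}\sum_{k=i+1}^{n-1}\pi_{n,k} = \frac{n+1}{n-1}\frac{6}{(i+1)(i+2)} - \frac{4}{(n-1)(i+1)}$,
which is not $2/(i+1)$. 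A concrete check: for $n=3$, $i=2$, node $2$ lies on the origin-to-MRCA path only when the sampled pair is the cherry, probability $1/3$, whereas $2/(i+1)=2/3$. (Indeed, the unconditional quantity above is precisely the second-moment expression $\E{\mathbf{1}^{(1)}_{i}\mathbf{1}^{(2)}_{i}}$ computed inside the paper's Lemma \ref{lemvarEexpPsi}, since ``node $i$ on both of two sampled lineages'' is the same event as ``node $i$ is a common ancestor of the pair''.)

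The identity you are asked to prove is true --- and is what every later use in the paper requires --- only in the \emph{conditional} sense: given that the pair's MRCA is node $k$, for each $i<k$ the probability that node $i$ lies on the origin-to-MRCA path equals $2/(i+1)$, independently across the different $i$ and independently of $k$; equivalently, the probability that node $i$ lies on the path from the origin to a \emph{fixed} later node $k>i$ is $2/(i+1)$. Your merged-lineage argument proves exactly this statement verbatim: the lineage containing node $k$ participates in the merge from $i+1$ to $i$ lineages with probability $2/(i+1)$, and the event that the MRCA is node $k$ is determined by the merges at levels $n,\ldots,k$, hence independent of the merges at level $i+1$. This is also how the factor $2/(i+1)$ for the tilde indicators is actually deployed in Lemmata \ref{lemvarEexpPhi}--\ref{lemcovEexptaunPhi}: always inside decompositions of the form $\sum_{k}\pi_{k,n}\bigl(\sum_{i<k}\cdots\bigr)$, never averaged over the random coalescence index. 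So your instinct that this identification is ``the main obstacle'' was exactly right; the fix is not to recombine over the coalescence level at all, but to restate and prove the tilde claim conditionally on it, with $i$ strictly below the MRCA's index.
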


\begin{figure}[!h]
\centering
\includegraphics[width=0.8\textwidth]{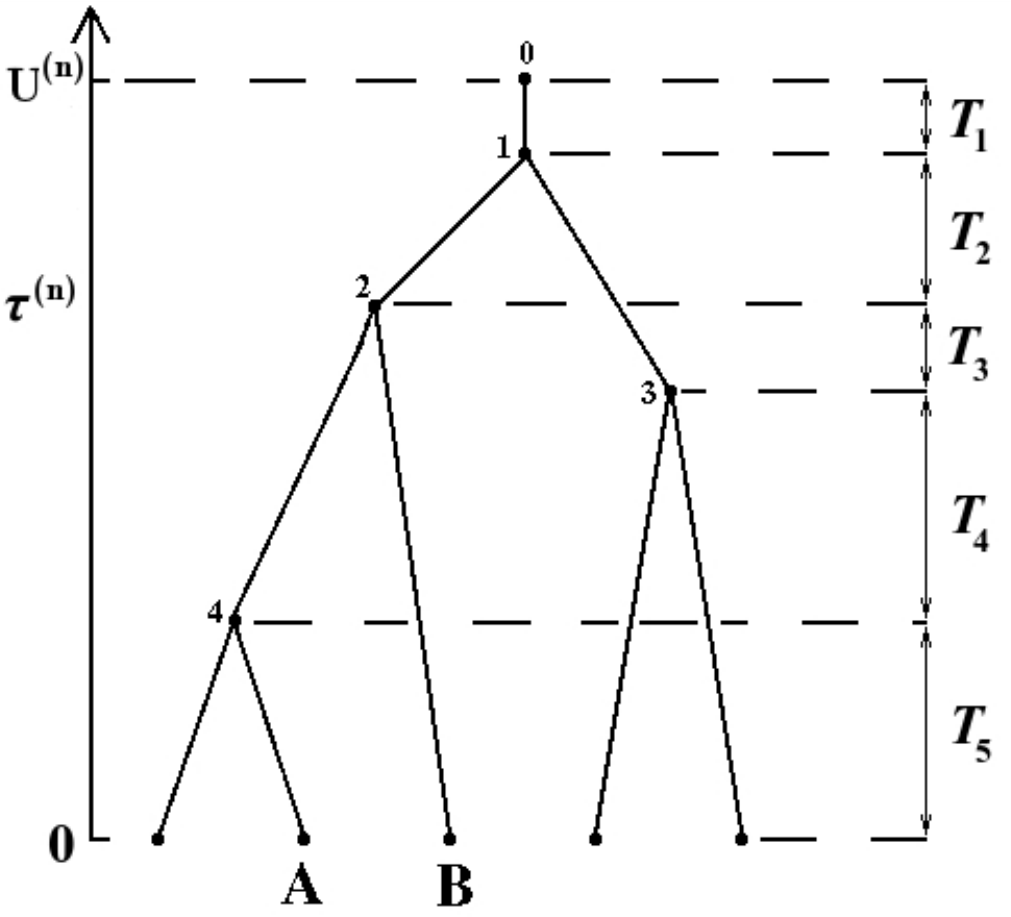}
\caption{
A pure--birth tree with the various time components marked on it. 
If we ``randomly sample'' node ``A'', then $\Upsilon^{(n)}=3$ and
the indexes of the speciation events on this random lineage are
$\mathrm{I}^{(n)}_{3}=4$, 
$\mathrm{I}^{(n)}_{2}=2$ and $\mathrm{I}^{(n)}_{1}=1$.
Notice that 
$\mathrm{I}^{(n)}_{1}=1$ always.
The between speciation times on this lineage are
$T_{1}$, $T_{2}$, $T_{3}+T_{4}$ and $T_{5}$.
If we ``randomly sample'' the pair of extant 
species ``A'' and ``B'', then $\upsilon^{(n)}=1$ and the two nodes coalesced at time 
$\tau^{(n)}=T_{3}+T_{4}+T_{5}$.
The random index of their joint speciation event is 
$\mathrm{\tilde{I}}_{1}=1$.
See also Fig. \ref{figPhiPsi} and \citet{KBar2014}'s Fig. A.$8$. for a more detailed discussion on relevant notation.
The internal node labellings $0$--$4$ are marked on the tree. The OUj process
evolves along the branches of the tree and we only observe the trait
values at the $n$ tips. For given tip, say ``A'' the value of the trait process
will be denoted $X^{(n)}_{A}$. Of course here $n=5$.
}
\label{figTreeNotation}
\end{figure}

We called the model a conditioned one.
By conditioning we consider stopping the tree growth just before
the $n+1$ species occurs, or just before the $n$--th speciation event. 
Therefore, the tree's height $U^{(n)}$ is a random stopping time. 
The asymptotics considered in this work are when $n\to \infty$. 

The key model parameter describing the tree component is $\lambda$, the birth rate. At the start,
the process starts with a single particle and then splits with rate $\lambda$.
Its descendants behave in the same manner. Without loss 
generality we take $\lambda=1$, as this is equivalent to rescaling time.

In the context of phylogenetic
methods this branching process has been intensively studied 
(e.g. \citet{KBarSSag2015,FCraMSuc2013,AEdw1970,TGer2008a,TGer2008b,WMulFCra2015,SSagKBar2012,MSteAMcK2001}), 
hence here we will just describe its key property.
The time between speciation events $k$ and $k+1$ is exponential with
parameter $k$. This is immediate from the memoryless
property of the process and the distribution of the minimum
of $k$ i.i.d. exponential random variables. From this we obtain some important
properties of the process.
Let $H_{n} = 1+1/2+\ldots+1/n$ be the $n$--th harmonic number, $x> 0$ and 
then their expectations and Laplace transforms are (\citet{KBarSSag2015,SSagKBar2012})

$$
\begin{array}{rcl}
\Expectation{U^{(n)}} & = & H_{n},\\
\Expectation{e^{-x U^{(n)}}} & = & b_{n,x},\\
\Expectation{\tau^{(n)}} & = & \frac{n+1}{n-1}H_{n}-\frac{2}{n-1},\\
\Expectation{e^{-x \tau^{(n)}}} & = & 
\left\{
\begin{array}{cc}
\frac{2-(n+1)(x+1)b_{n,x}}{(n-1)(x-1)} & x\neq 1, \\
\frac{2}{n-1}\left(H_{n}-1 \right)-\frac{1}{n+1}& x=1,
\end{array}
\right.
\end{array}
$$
where

$$
b_{n,x} = \frac{1}{x+1}\cdots\frac{n}{n+x}=\frac{\Gamma(n+1)\Gamma(x+1)}{\Gamma(n+x+1)} \sim \Gamma(x+1)n^{-x},
$$
$\Gamma(\cdot)$ being the gamma function.

Now let $\mathcal{Y}_{n}$ be the $\sigma$--algebra that
contains information on the Yule tree and jump pattern. 
By this we mean that conditional on $\mathcal{Y}_{n}$
we know exactly how the tree looks like (esp. the interspeciation times $T_{i}$)
and we know at what parts of the tree (at which lineage(s) just after which speciation 
events) did jumps take place. The motivation behind such conditioning is that
conditional on $\mathcal{Y}_{n}$ the contemporary tips sample is a multivariate normal one. 
When one does not condition on $\mathcal{Y}_{n}$ the normality does not hold---the randomness
in the tree and presence/absence of jumps distorts normality.

\citet{KBar2014} previously studied 
the branching Ornstein--Uhlenbeck with jumps (OUj) model 
and it was shown
(but, therein for constant $p_{k}$ and $\sigma_{c,k}^{2}$ and 
therefore there was no need to condition on the jump pattern) that,
conditional on the tree height and number of tip species
the mean and variance of the trait value of tip species $r$ (out of the $n$ contemporary),
$X^{(n)}_{r}\equiv X^{(n)}_{r}(U^{(n)})$ (see also Fig. \ref{figTreeNotation}), are

\be
\begin{array}{rcl}
\Expectation{X^{(n)}_{r} \vert \mathcal{Y}_{n}} & = & \theta + e^{-\alpha U^{(n)}}(X_{0}-\theta) \\
\variance{X^{(n)}_{r}  \vert \mathcal{Y}_{n}} & = & \frac{\sigma_{a}^{2}}{2\alpha}(1-e^{-2\alpha U^{(n)}}) + 
\sum\limits_{i=1}^{\Upsilon^{(r,n)}}\sigma_{c,\mathrm{I}^{(r,n)}_{i}}^{2}J^{(r,n)}_{i}%p_{\mathrm{I}_{i}}
e^{-2\alpha (T_{n}+\ldots+T_{\mathrm{I}^{(r,n)}_{i}+1})},
\end{array}
\ee
$\Upsilon^{(r,n)}$, $I^{(r,n)}$ and $J^{(r,n)}$ are realizations
of the random variables $\Upsilon^{(n)}$, $I^{(n)}$ and $J^{(n)}$ when lineage $r$ is picked.
A key difference that the phylogeny brings in, is that the tip measurements
are correlated through the tree structure. One can easily show that conditional on  $\mathcal{Y}_{n}$, 
the covariance
between traits belonging to tip species $r$ and $k$, $X_{r}^{(n)}$ and $X_{k}^{(n)}$ is

\be
\covariance{X_{r}^{(n)}}{X_{k}^{(n)} \vert \mathcal{Y}_{n}} = 
\frac{\sigma_{a}^{2}}{2\alpha}(e^{-2\alpha \tau^{(r,k,n)}} - e^{-2\alpha U^{(n)}}) + 
\sum\limits_{i=1}^{\upsilon^{(r,k,n)}}\sigma_{c,\mathrm{I}^{(r,k,n)}_{i}}^{2}J^{(r,k,n)}_{i}
e^{-2\alpha (\tau^{(r,k,n)}+\ldots+T_{\mathrm{I}^{(r,k,n)}_{i}+1})},
\ee
where $J^{(r,k,n)}$, $I^{(r,k,n)}$ correspond to 
the realization of random variables $J^{(n)}$, $I^{(n)}$, but 
reduced to the common part of lineages $r$ and $k$,
while $\upsilon^{(r,k,n)}$, $\tau^{(r,k,n)}$
correspond to realizations of $\upsilon^{(n)}$, $\tau^{(n)}$
when the pair $(r,k)$ is picked. 
We will call, the considered model the 
Yule--Ornstein--Uhlenbeck with jumps (YOUj) process. 

\begin{remark}
Keeping the parameter
$\theta$ constant on the tree is not as simplifying as it might seem. Varying
$\theta$ models have been considered since the introduction of the OU process
to phylogenetic methods (\citet{THan1997}). However, it can very often happen
that the $\theta$ parameter is constant over whole clades, as these
species share a common optimum due to some common discrete characteristic. Therefore, understanding the model's
behaviour with a constant $\theta$ is a crucial first step.
Furthermore, if constant $\theta$ clades are apart far enough
one could think of them as independent samples 
and attempt to construct a test (based on normality of the species' averages)
if jumps have a significant effect 
(compare Thms. \ref{thmCLTYOUjpsConst} and \ref{thmCLTYOUjpsae0}).
For this one would have to make the very difficult to biologically justify 
assumption of constant model parameters between clades. Though, one can imagine
special situations where the levels of $\theta$ are connected to a discrete
characteristic common to many clades, e.g. fresh water or seawater. 
On the other hand CLTs and other asymptotical results for changing model
parameters and different levels of $\theta$ are an exciting
future research direction.
\end{remark}

\begin{remark}
It should be noted that the phylogeny could be introduced using 
a formal branching process approach and the 
offspring's' generating function (e.g. Ch. III.3, \citet{KAthPNey2004}).
Then, the branching trait model can be described (jointly with the tree)
as a ``Markov process in the space of integer--valued measures on $\mathbb{R}$''
(\citet{RAdaPMil2015}). However, in this work here we do not use any of the
machinery from that direction and so we refrain from defining the setup in that language
so as to avoid adding yet another layer of notation. On the other hand, the way of defining
the model used here is constructive---in the sense that it can be directly coded in
a simulation procedure.

\end{remark}
\subsection{Martingale formulation}
Our main aim is to study the asymptotic behaviour of the sample average and
it actually turns out to be easier to work with scaled trait values, for
each $r\in \{1,\ldots,n\}$,
$
Y^{(n)}_{r} = (X^{(n)}_{r}-\theta)/\sqrt{\sigma_{a}^{2}/2\alpha}.
$
Denoting 
$\delta=(X_{0}-\theta)/\sqrt{\sigma_{a}^{2}/2\alpha}$ we have

\be
\begin{array}{rcl}
\Expectation{Y^{(n)}} & = & \delta b_{n,\alpha}.
\end{array}
\ee
The initial condition of course will be $Y_{0}=\delta$. 

\begin{remark}
We remark, that here it becomes evident that the specific value of $\theta$, will
not play any role in obtaining the presented here results. What only matters is the
initial displacement from $\theta$, but even this will not contribute in any way
to the rate of convergence, only as a scaling constant for the 
expectation of $\overline{Y}_{n}$
(see Proof of Thm. \ref{thmCLTYOUjpsConst}).
\end{remark}
Just as was done by \citet{KBarSSag2015}
we may construct a martingale related to the average

$$
\overline{Y}_{n} = \frac{1}{n}\sum\limits_{i=1}^{n}Y_{i}^{(n)}.
$$
It is worth pointing out that $\overline{Y}_{n}$ is observed just before the $n$--th
speciation event. An alternative formulation would be to observe it just after
the $(n-1)$--st speciation event.
Then (cf. Lemma $10$ of \citet[][]{KBarSSag2015}), we define

$$
H_{n}:=(n+1)e^{(\alpha-1)U^{(n)}}\overline{Y}_{n},~~n\ge 0.
$$
This is a martingale with respect to $\mathcal{F}_{n}$, the $\sigma$--algebra
containing information on the Yule $n$--tree and the phenotype's evolution,
i.e. $\mathcal{F}_{n}=\sigma(\mathcal{Y}_{n},Y_{1},\ldots,Y_{n})$.

\section{Asymptotic regimes --- main results}\label{secMainRes}
Branching Ornstein--Uhlenbeck models commonly have three asymptotic regimes 
(\citet{RAdaPMil2014,RAdaPMil2015,CAneLHoSRoc2017,KBar2014,KBarSSag2015, YRenRSonRZha2014,  YRenRSonRZha2015}).
The dependency between the adaptation rate $\alpha$
and branching rate $\lambda=1$ governs in which regime the process is. 
If $\alpha > 1/2$, then the contemporary sample is similar to an i.i.d. sample,
in the critical case, $\alpha =  1/2$, we can, after appropriate rescaling,
still recover the ``near'' i.i.d. behaviour
and if $0 < \alpha < 1/2$, then the process has ``long memory''
(``local correlations dominate over the OU's ergodic properties'', \citet{RAdaPMil2014,RAdaPMil2015}).
In the context considered here by ``near'' and ``similar'' to i.i.d. we mean that the resulting CLTs resemble those
of an i.i.d. sample. For example the limit distribution of the normalized sample average
in the $\alpha>0.5$ YOU regime \citep[Thm. $1$ in][]{KBarSSag2015} is
$\mathcal{N}(0,(2\alpha+1)/(2\alpha-1))$ and taking $\alpha\to \infty$
we obtain the classical $\mathcal{N}(0,1)$ limit (as intuition
could suggest with instantaneous adaptation).
In the YOUj setup the same three asymptotic regimes can be observed, even though
\citet{RAdaPMil2014,RAdaPMil2015,YRenRSonRZha2014,YRenRSonRZha2015} 
assume that
the tree is observed at a given time point, $t$,
with $n_{t}$ being random. In what follows here, 
the constant $C$ may change between (in)equalities. It may in particular depend on 
$\alpha$. We illustrate the below Theorems in Fig. \ref{figYOUjCLT}.

We consider the process $\overline{Y}_{n} = (\overline{X}_{n}-\theta)/\sqrt{\sigma_{a}^{2}/2\alpha}$ 
which is the the normalized sample mean of the YOUj process with $\overline{Y}_{0}=\delta$. 
The next two Theorems consider its, 
depending on $\alpha$, asymptotic with $n$ behaviour.

\begin{theorem}\label{thmCLTYOUjpsConst}
Assume that the jump probabilities and jump variances are constant equalling $p$ and
$\sigma_{c}^{2}< \infty$ respectively.
\begin{enumerate}[label=(\Roman*)]
\item \label{rCLTthmpIpConst} If $0.5<\alpha$ and $0<p<1$,
then the conditional variance of the scaled sample mean
$\sigma_{n}^{2}:=n\variance{\overline{Y}_{n} \vert \mathcal{Y}_{n}}$
converges in $\mathbb{P}$ to a finite mean and variance random variable $\sigma_{\infty}^{2}$.
The scaled sample mean,
$\sqrt{n}~\overline{Y}_{n}$ converges weakly to random variable
whose characteristic function can be expressed in terms of the Laplace transform of 
$\sigma_{\infty}^{2}$

$$
\forall_{x \in \mathbb{R}} \lim\limits_{n \to \infty} \phi_{\sqrt{n}~\overline{Y}_{n}}(x) = \mathcal{L}(\sigma_{\infty}^{2})(x^{2}/2).
$$
\item \label{rCLTthmpIIpConst} If $0.5=\alpha$, 
then $\sqrt{(n/\ln n)}~\overline{Y}_{n}$ is asymptotically normally distributed with mean $0$ and
variance $2+4p\sigma_{c}^{2}/\sigma_{a}^{2}$.
In particular the conditional variance of the scaled sample mean
$\sigma_{n}^{2}:=n\ln^{-1} n\variance{\overline{Y}_{n} \vert \mathcal{Y}_{n}}$
converges in $L^{2}$ (and hence in $\mathbb{P}$) to the constant $2+4p\sigma_{c}^{2}/\sigma_{a}^{2}$.
\item \label{rCLTthmpIIIpConst} If $0<\alpha <0.5$, then %we can allow $\kappa_{n}$ to take any value in $[0,1]$ and 
$n^{\alpha}\overline{Y}_{n}$ converges almost surely and in $L^{2}$ to a random variable
$Y_{\alpha,\delta}$ with finite first two moments.
\end{enumerate}
\end{theorem}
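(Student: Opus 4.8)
The plan is to treat the two CLT regimes \ref{rCLTthmpIpConst} and \ref{rCLTthmpIIpConst} through the conditional Gaussianity of the sample, and the long--memory regime \ref{rCLTthmpIIIpConst} through the martingale $H_{n}$. Since, given $\mathcal{Y}_{n}$, the vector of scaled tips is multivariate normal, $\sqrt{n}\,\overline{Y}_{n}$ is conditionally normal with mean $\sqrt{n}\,e^{-\alpha U^{(n)}}\delta$ and variance $\sigma_{n}^{2}$, so
\[
\E{e^{\mathrm{i}x\sqrt{n}\,\overline{Y}_{n}}\mid\mathcal{Y}_{n}}
=\exp\!\left(\mathrm{i}x\sqrt{n}\,e^{-\alpha U^{(n)}}\delta-\tfrac{x^{2}}{2}\sigma_{n}^{2}\right).
\]
Taking the outer expectation reduces both regimes to two weak limits. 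The mean term is disposed of immediately: $\E{\sqrt{n}\,e^{-\alpha U^{(n)}}}=\sqrt{n}\,b_{n,\alpha}\sim\Gamma(\alpha+1)\,n^{1/2-\alpha}\to0$ for $\alpha>1/2$, and likewise $\sqrt{n/\ln n}\,b_{n,1/2}\to0$ for $\alpha=1/2$, so the nonnegative quantity $\sqrt{n}\,e^{-\alpha U^{(n)}}$ (resp. $\sqrt{n/\ln n}\,e^{-\alpha U^{(n)}}$) tends to $0$ in $L^{1}$, hence in probability.

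The substance is the claim $\sigma_{n}^{2}\to\sigma_{\infty}^{2}$ in probability. Using the conditional variance and covariance formulas I would split
\[
\sigma_{n}^{2}=\frac{1}{n}\sum_{r}\variance{Y_{r}^{(n)}\mid\mathcal{Y}_{n}}
+\frac{1}{n}\sum_{r\neq k}\covariance{Y_{r}^{(n)}}{Y_{k}^{(n)}\mid\mathcal{Y}_{n}}
\]
into a diffusion part, assembled from $1-e^{-2\alpha U^{(n)}}$ and the pairwise $e^{-2\alpha\tau^{(r,k,n)}}$, and a jump part, assembled from the sums $\sum_{i}J_{i}^{(\cdot)}e^{-2\alpha(\cdots)}$ indexed by common speciation events. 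The single--tip diffusion terms give $1-e^{-2\alpha U^{(n)}}\to1$, while the pairwise terms, after uniform sampling of a pair (the identity $\E{\mathbf{1}_{i}}=2/(i+1)$), reduce to $(n-1)$ times a coalescent functional whose mean converges by the Laplace transform of $\tau^{(n)}$ -- the $x=1$ branch of that transform being exactly what produces the logarithmic rate, and hence the $n/\ln n$ scaling, at $\alpha=1/2$. The convergence Lemmata of Section \ref{secKeyConvLem}, which quantify the decay of the variances of these tree functionals, upgrade the mean computations to convergence in probability and dispose of the jump part (with constant $p$ and $\sigma_{c}^{2}$) in the same manner; collecting the limits defines $\sigma_{\infty}^{2}$, and the attendant second--moment bounds certify that it has finite mean and variance. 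Granting this, the integrand of the outer expectation converges in probability to $\exp(-\tfrac{x^{2}}{2}\sigma_{\infty}^{2})$ and is bounded in modulus by $1$, so the bounded convergence theorem yields $\lim_{n}\phi_{\sqrt{n}\,\overline{Y}_{n}}(x)=\E{e^{-x^{2}\sigma_{\infty}^{2}/2}}=\mathcal{L}(\sigma_{\infty}^{2})(x^{2}/2)$, as asserted.

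For regime \ref{rCLTthmpIIIpConst} I would use that $H_{n}=(n+1)e^{(\alpha-1)U^{(n)}}\overline{Y}_{n}$ is an $\mathcal{F}_{n}$--martingale and show it is bounded in $L^{2}$. Expanding
\[
\E{H_{n}^{2}}=(n+1)^{2}\,\E{e^{2(\alpha-1)U^{(n)}}\left(\variance{\overline{Y}_{n}\mid\mathcal{Y}_{n}}+e^{-2\alpha U^{(n)}}\delta^{2}\right)}
\]
and invoking the Section \ref{secKeyConvLem} estimates, $\sup_{n}\E{H_{n}^{2}}<\infty$ exactly in the regime $\alpha<1/2$. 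The $L^{2}$ martingale convergence theorem then gives $H_{n}\to H_{\infty}$ almost surely and in $L^{2}$. Finally, since the increments $T_{k}\sim\mathrm{Exp}(k)$ have summable variances, $U^{(n)}-\ln n$ converges almost surely to a finite random variable $W$, so the prefactor $n^{\alpha}\big((n+1)e^{(\alpha-1)U^{(n)}}\big)^{-1}=\tfrac{n}{n+1}e^{(1-\alpha)(U^{(n)}-\ln n)}$ converges almost surely to the positive random variable $e^{(1-\alpha)W}$; hence $n^{\alpha}\overline{Y}_{n}=\tfrac{n}{n+1}e^{(1-\alpha)(U^{(n)}-\ln n)}H_{n}$ converges almost surely and, after a uniform--integrability check, in $L^{2}$ to $Y_{\alpha,\delta}:=e^{(1-\alpha)W}H_{\infty}$.

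The main obstacle in every case is the $L^{2}$ moment control that underlies both halves: showing that the normalized pairwise sums of $e^{-2\alpha\tau^{(r,k,n)}}$ and of the jump contributions concentrate, i.e. that the variances of their conditional expectations decay fast enough. This is the common engine behind both $\sigma_{n}^{2}\to\sigma_{\infty}^{2}$ and $\sup_{n}\E{H_{n}^{2}}<\infty$, and it is precisely here that the constancy (indeed, the finiteness and controlled growth) of $p$ and $\sigma_{c}^{2}$ must be used carefully to keep the jump sums integrable, as the companion result for the regime of non--rare jumps makes clear.
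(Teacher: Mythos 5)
Your reduction of Parts \ref{rCLTthmpIpConst} and \ref{rCLTthmpIIpConst} to the convergence in probability of $\mu_{n}$ and $\sigma_{n}^{2}$, via conditional Gaussianity and bounded convergence, is exactly the paper's frame, and your treatment of $\mu_{n}$ and of the diffusion part of $\sigma_{n}^{2}$ (including the logarithmic rate at $\alpha=1/2$) matches the paper. The genuine gap is your claim that the Section \ref{secKeyConvLem} lemmata ``dispose of the jump part in the same manner,'' i.e.\ by concentration of conditional expectations. For the pairwise jump functional $\Psi^{(n)}$ this provably fails when $p>0$ and $\sigma_{c}^{2}$ are constant: Lemma \ref{lemvarEexpPhi} gives $\Var{\E{\Psi^{(n)}\mid\mathcal{Y}_{n}}}\lesssim pCn^{-2}$ for $\alpha>0.5$, and the paper points out that this rate is tight (the variance is bounded below by its component (III), with constants independent of $p$). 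Hence the term $n(1-n^{-1})(\sigma_{a}^{2}/2\alpha)^{-1}\E{\Psi^{(n)}\mid\mathcal{Y}_{n}}$ inside $\sigma_{n}^{2}$ has variance bounded away from zero; bounded variance yields tightness, not convergence in probability, so your bounded-convergence step for the characteristic function has no foundation. Indeed, this non-vanishing is the entire point of the theorem: it is why $\sigma_{\infty}^{2}$ is a genuinely \emph{random} variance (a normal scale mixture) here, while Theorem \ref{thmCLTYOUjpsae0} must assume $\sigma_{c,n}^{4}p_{n}\to 0$ with density $1$ to obtain a true Gaussian limit. The missing device is the paper's Lemma \ref{lemSubmartEexpPhi}: $W_{n}:=a_{n}(\alpha)\E{\Psi^{(n)}\mid\mathcal{Y}_{n}}$ is, for $n$ large, a submartingale with respect to $\mathcal{Y}_{n}$ (shown by decomposing $W_{n+1}$ over which lineage splits), with means converging by Eq.~\eqref{eqEPsin} and variances uniformly bounded by Lemma \ref{lemvarEexpPhi}; the martingale convergence theorem then gives $W_{n}\to W_{\infty}$ a.s.\ and in $L^{1}$, and it is this random $W_{\infty}$ that defines $\sigma_{\infty}^{2}=1+2p\sigma_{c}^{2}/\sigma_{a}^{2}+\sigma_{c}^{2}W_{\infty}/\sigma_{a}^{2}$. (The single-lineage jump term \emph{does} concentrate, since by Lemma \ref{lemvarEexpPsi} its conditional-expectation variance is $O(n^{-1})$ for $\alpha>0.25$; it contributes the constant $2p\sigma_{c}^{2}/\sigma_{a}^{2}$.) Without an almost-sure convergence mechanism of this kind your proof of Parts \ref{rCLTthmpIpConst}--\ref{rCLTthmpIIpConst} does not go through.

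Your Part \ref{rCLTthmpIIIpConst} is essentially the paper's argument: the same martingale $H_{n}$, the same strategy of bounding $\E{H_{n}^{2}}$ (the paper does this via Lemma \ref{condYOUjMom}, Cauchy--Schwarz, and a modification of Lemma \ref{lemcovEexpUnPhi} giving an $n^{-2}$ decay for $\E{e^{-2(1-\alpha)U^{(n)}}\Psi^{(n)}}$), and your factorization $n^{\alpha}\overline{Y}_{n}=\tfrac{n}{n+1}e^{(1-\alpha)(U^{(n)}-\ln n)}H_{n}$ with $U^{(n)}-\ln n$ converging a.s.\ is equivalent to the paper's use of the a.s.\ and $L^{2}$ limit $V^{(\alpha-1)}$ of $b_{n,\alpha-1}^{-1}e^{-(\alpha-1)U^{(n)}}$, so that part is sound modulo the routine moment details.
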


\begin{remark}
For the a.s. and $L^{2}$ convergence to hold in Part \ref{rCLTthmpIIIpConst}, it suffices that the 
sequence of jump variances is bounded.
Of course, the first two moments will differ if the jump variance is not constant.
\end{remark}

\begin{remark}
After this remark we will define the concept of a sequence converging to $0$
with density $1$. Should the reader find it easier, they may forget that the 
sequence converges with density $1$, but think of the sequence simply
converging to $0$. The condition of convergence with density $1$ is 
a technicality that through ergodic theory allows us to slightly
weaken the assumptions of the theorem that gives a normal limit.
\end{remark}
\begin{defn}
A subset $E \subset \mathbb{N}$ of 
positive integers is said to have density $0$ (e.g. \citet{KPet1983}) if 

$$
\lim\limits_{n \to \infty}\frac{1}{n}\sum\limits_{k=0}^{n-1}\chi_{E}(k) =0,
$$
where $\chi_{E}(\cdot)$ is the indicator function of the set $E$.
\end{defn}

\begin{defn}
A sequence $a_{n}$ converges to $0$ with density $1$ if 
there exists a subset $E\subset \mathcal{N}$ of 
density $0$ such that 

$$
\lim\limits_{n \to \infty,n \notin E}a_{n} =0.
$$
\end{defn}

\begin{theorem}\label{thmCLTYOUjpsae0}
Assume that the sequence $\{\sigma_{c,k}^{4}p_{k}\}$ is bounded.
Then, depending on $\alpha$ the process $\overline{Y}_{n}$ has the following 
asymptotic with $n$ behaviour.

\begin{enumerate}[label=(\Roman*)]
\item \label{rCLTthmpI} If $0.5<\alpha$, 
$\sigma_{c,k}^{4}p_{k}(1-p_{k})$ goes to $0$ with density $1$
and the sequences $\{\sigma_{c,k}^{2}\}$, $\{p_{k}\}$ are such that
the sequences of expectations

$$
\begin{array}{l}
\Expectation{
\sum\limits_{k=1}^{\Upsilon^{(n)}} \sigma_{c,\mathrm{I}^{(n)}_{k}}^{2}J_{k}e^{-2\alpha (T_{n}+\ldots+T_{\mathrm{I}^{(n)}_{k}+1})}
}\to \sigma_{\Upsilon}^{2}
\\
n\Expectation{
\sum\limits_{k=1}^{\upsilon^{(n)}}\sigma_{c,\mathrm{\tilde{I}}^{(n)}_{k}}^{2}\tilde{J}_{k}e^{-2\alpha (\tau^{(n)}+\ldots+T_{\mathrm{\tilde{I}}^{(n)}_{k}+1})}
} \to \sigma_{\upsilon}^{2}
\end{array}
$$
converge, then the process $\sqrt{n}~\overline{Y}_{n}$ is asymptotically normally distributed with mean $0$ and
variance $(2\alpha+1)/(2\alpha-1)+(\sigma_{\Upsilon}^{2}+\sigma_{\upsilon}^{2})/(\sigma_{a}^{2}/(2\alpha))$.
\item \label{rCLTthmpII} If $0.5=\alpha$, 
and the sequences $\{\sigma_{c,k}^{2}\}$, $\{p_{k}\}$ are such that
the sequence of expectations

$$
(n\ln^{-1}n)\Expectation{
\sum\limits_{k=1}^{\upsilon^{(n)}}\sigma_{c,\mathrm{\tilde{I}}^{(n)}_{k}}^{2}\tilde{J}_{k}e^{- (\tau^{(n)}+\ldots+T_{\mathrm{\tilde{I}}^{(n)}_{k}+1})}
} \to \sigma_{\upsilon}^{2}
$$
converges,
then $\sqrt{(n/\ln n)}~\overline{Y}_{n}$ is asymptotically normally distributed with mean $0$ and
variance $2+\sigma_{\upsilon}^{2}/\sigma_{a}^{2}$.
\end{enumerate}
\end{theorem}
It is worth pointing out that Thm. \ref{thmCLTYOUjpsae0} covers the extreme cases $p=0$ and $p=1$.
The convergence conditions on the expectations look rather daunting, however they will 
simplify very compactly if $\sigma_{c,k}^{2}$ and $p_{k}$ are constant or 
$\sigma_{c,k}^{4}p_{k} \to 0$ (with density $1$). These we discuss after the proof of the theorem,
when we also mention why the  assumptions on these expectations are necessary.

\begin{remark}\label{remTH}
In the original arXiv preprint of this paper it was stated that convergence to normality
in the $\alpha \ge 0.5$ regimes will only take place if 
$\sigma_{c,k}^{4}p_{k}$ is bounded and goes to $0$ with density $1$.
Normality in the $\alpha=0.5$ and $p_{k}=1$ regimes was noticed thanks
to the collaboration with Torkel Erhardsson
\citep{KBarTErh2019arXiv} and then, the results and proofs in this manuscript
were adjusted. 
\end{remark}

\begin{remark}
The assumption $\sigma_{c,k}^{4}p_{k}(1-p_{k}) \to 0$ with density $1$ 
is an essential one for the limit to be a normal distribution, when $\alpha > 0.5$. This is 
visible from the proof of Lemma \ref{lemvarEexpPhi}. In fact, this is the
key difference that the jumps bring in---if their magnitude or
their uncertainty in occurrence is too large, then they will disrupt the weak convergence. 

One possible way of achieving the above condition is to keep $\sigma_{c,k}^{2}$ constant and allow $p_{k} \to 0$, 
the chance of jumping becomes smaller relative to the number of species. 
Alternatively, $\sigma_{c,k}^{2} \to 0$, which could
mean that with more and more species---smaller and smaller jumps occur at speciation.
Actually, one could intuitively think of this as biologically more realistic.
We are in the Yule, no extinction, case so with time there will be more and
more species (species here can be understood, if it helps intuition
as non--mixing, for some reason, populations). If they all live in some
spatially confined area, then as the number of species grows
there could be more and more competition. If one considers a trait
that is related to what is competed for, then 
smaller and smaller differences
in phenotype could drive the species apart. Specialization occurs and tinier
and tinier niches are filled.
This reasoning of course
further assumes that the number of individuals grows with the number 
of species. Furthermore, under the considered YOUj model the long time mean, $\theta$, is the same
for all species, so even though there is an initial displacement
(into a different niche) with time the trait will try to revert to its optimum.
Hence, the above is not aiming for making any authoritative biological
statements, nor provide an interpretation of the whole YOUj model. Rather,
it has as its goal of giving some intuition on jump variance decreasing to $0$
with time/number of species.
\end{remark}

\begin{remark}
In Thm. \ref{thmCLTYOUjpsae0} we do not consider the ``fast branching/slow adaptation'', $0<\alpha<0.5$
regime. By assuming $\sigma_{c,k}^{4}p_{k}\to 0$ with density $1$, it is possible
to make the influence of the jumps disappear asymptotically, just like in the $\alpha\ge 0.5$ case,
see Example \ref{exPsae0}. However, no further insights, than those in Thm. \ref{thmCLTYOUjpsConst}
will be readily available, similarly as \citet{KBarSSag2015} note for the YOU without jumps model. This 
is as the used here methods, do not seem to easily extend
to the $0<\alpha<0.5$ situation, beyond what is presented in this manuscript.
\end{remark}

\begin{figure}[!ht]
\begin{center}
\includegraphics[width=0.3\textwidth]{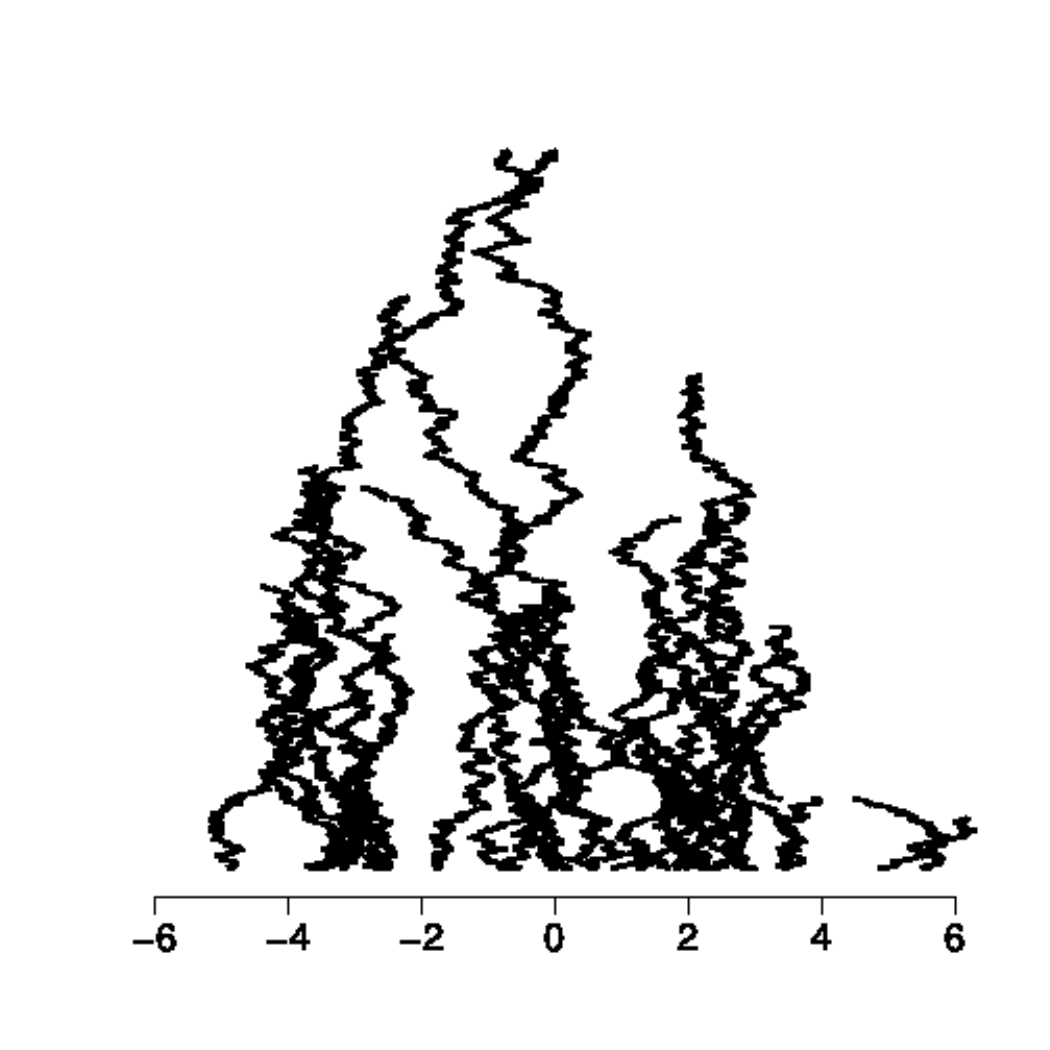}
\includegraphics[width=0.3\textwidth]{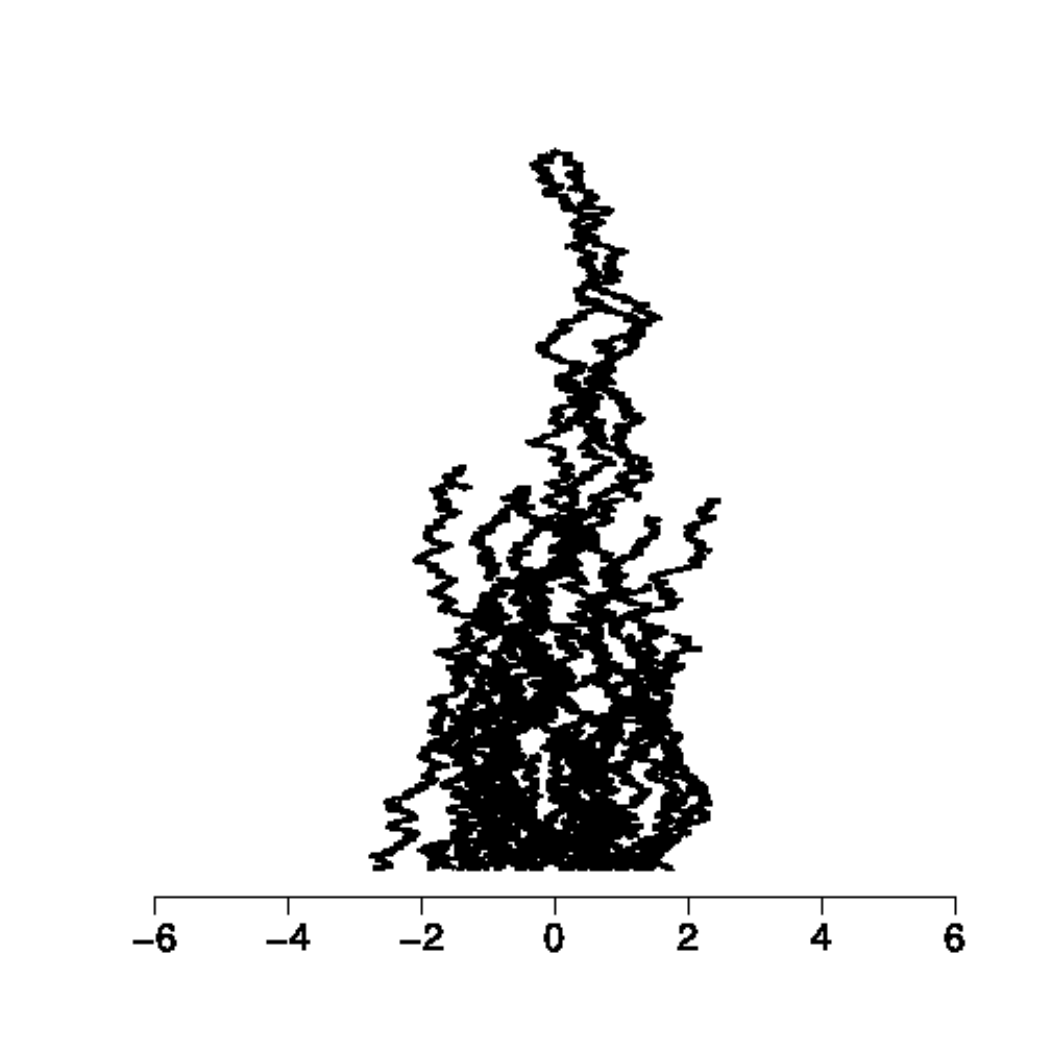}
\includegraphics[width=0.3\textwidth]{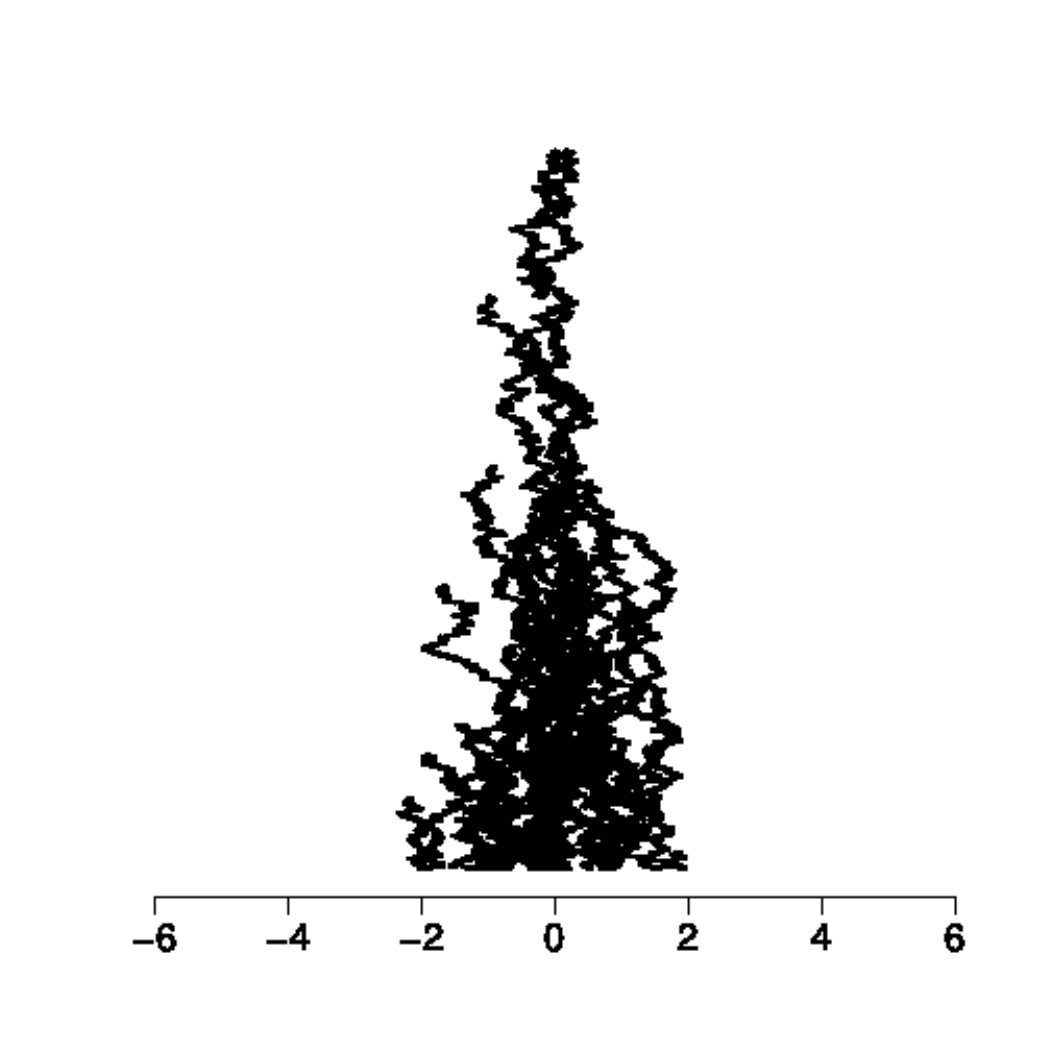} \\
\includegraphics[width=0.3\textwidth]{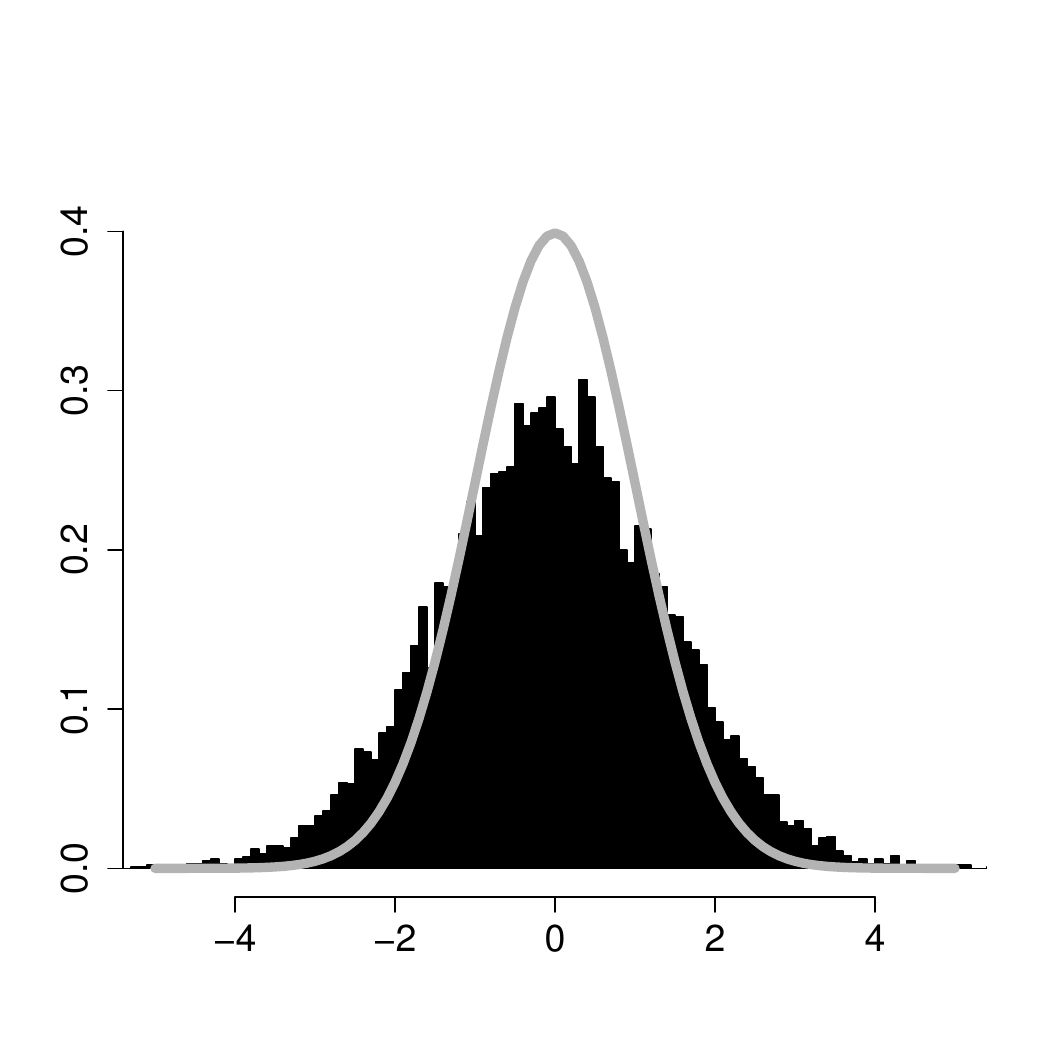}
\includegraphics[width=0.3\textwidth]{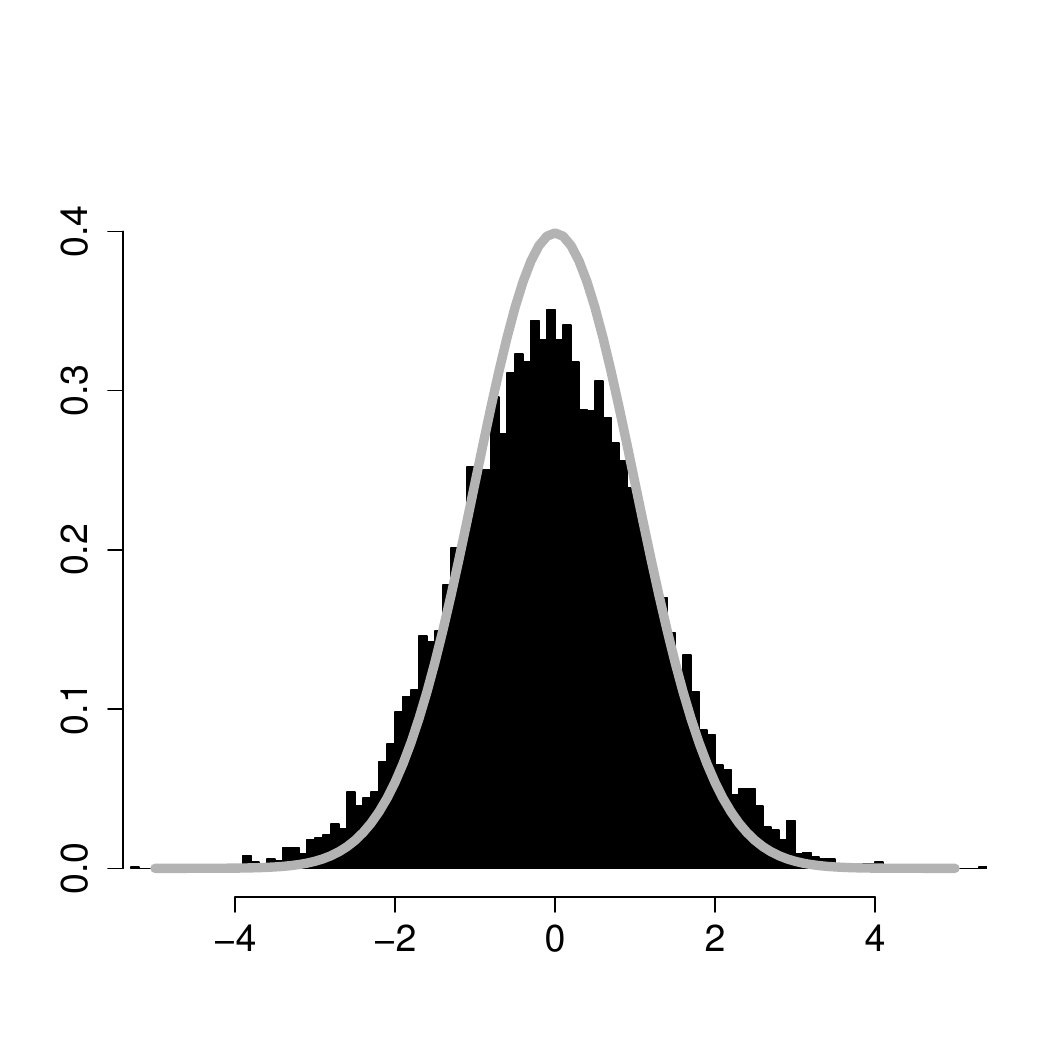}
\includegraphics[width=0.3\textwidth]{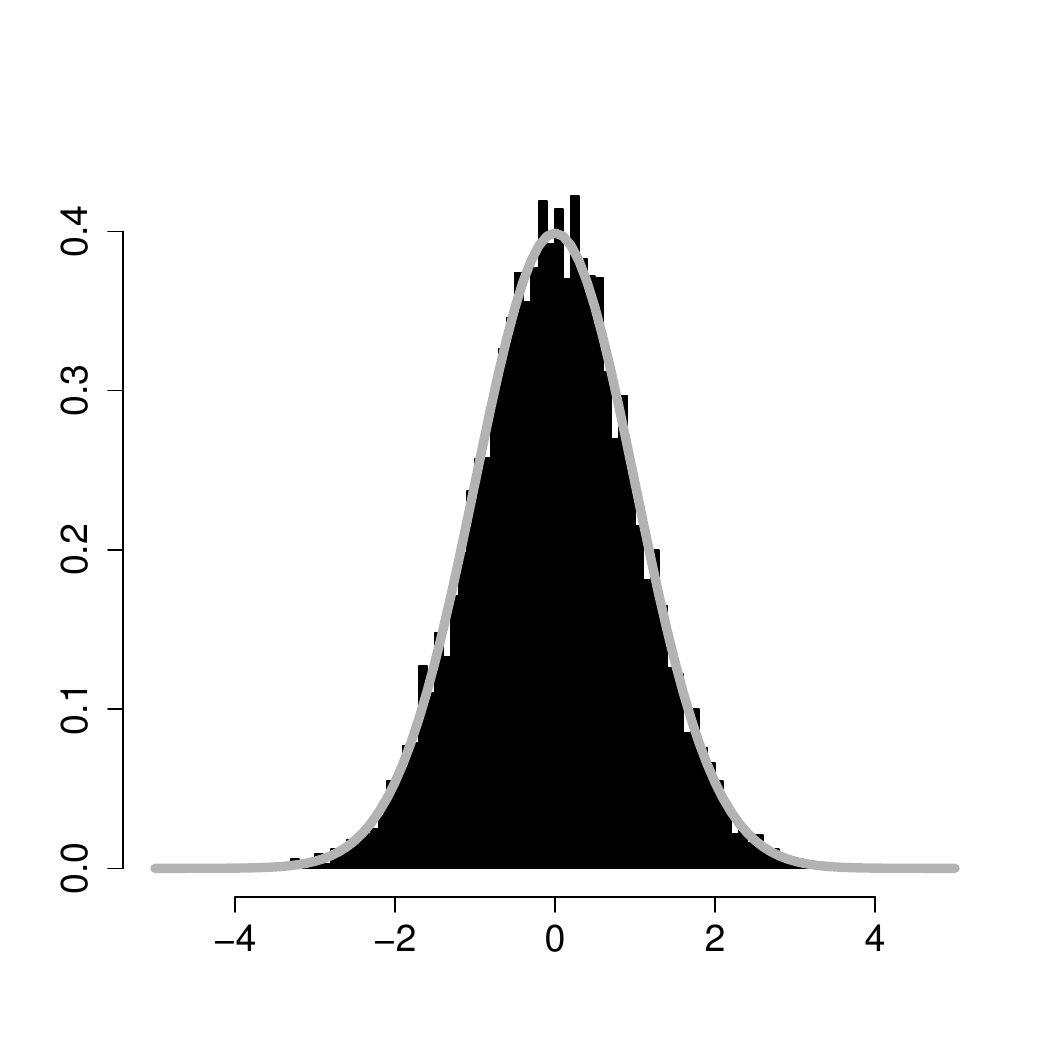} 
\caption{
Left: $\alpha=0.25$ centre: $\alpha=0.5$ and right: $\alpha=1$.
Top row: examples of simulated YOUj process trajectories,
bottom row: histograms of sample averages, 
left: scaled by $n^{0.25}\sqrt{5\Gamma(3/2)/2}$, 
centre: scaled by $\sqrt{n\ln^{-1}n/2}$, 
right: scaled by $\sqrt{n/3}$.
In all three cases, $p=0.5$, $\sigma_{c}^{2}=1$, $\sigma_{a}^{2}=1$,
$X_{0}=\theta=0$. The phylogenetic trees are pure birth trees
with $\lambda=1$ conditioned on number of tips, $n=30$ for the trajectory plots and 
$n=200$ for the histograms. 
The histograms are based on $10000$ simulated trees. 
The sample mean and variances of the scaled data in the histograms are
left: $(-0.015,2.037)$, centre: $(-0.033,1.481)$ and right: $(0.004,1.008)$.
The gray curve painted on the histograms is the standard normal distribution.
The phylogenies are simulated by the TreeSim R package
(\citet{TreeSim1,TreeSim2}) and simulations of phenotypic evolution and trajectory plots
are done by functions of the, available on CRAN, mvSLOUCH R package.
We can see that as $\alpha$ decreases the sample
variance is further away from the asymptotical $1$ (after scaling) and the histogram from
normality (though when $\alpha=0.25$ we should not expect normality). This is as with 
smaller $\alpha$ convergence is slower.
}\label{figYOUjCLT}
\end{center}
\end{figure}

\section{A series of technical lemmata}\label{secKeyConvLem}
We will now prove a series of technical lemmata describing the asymptotics of
driving components of the considered YOUj process. 
For two sequences $a_{n}$, $b_{n}$ the notation 
$a_{n} \lesssim b_{n}$ will mean that 
$a_{n}/b_{n} \to C \neq 0$ with $n$ and 
$a_{n} \le (1+o(1)) b_{n}$. 
Notice that always
when $a_{n} \lesssim b_{n}$ is used a defined or undefined constant
$C$ is present within $b_{n}$. The key property is that
the asymptotic behaviour with $n$ does not change
after the $\lesssim$ sign.
The general approach to proving these lemmata is related to that in the proof 
of \citet{KBarSSag2015}'s Lemma $11$.
What changes here is that we need to take into account
the effects of the jumps \citep[which were not considered in][]{KBarSSag2015}.
However, we noticed that there is an error in the proof of 
\citet{KBarSSag2015}'s Lemma $11$. Hence, below for the convenience 
of the reader, we do not only cite the lemma but also provide
the whole corrected proof. In Remark \ref{remLem11}, following the proof,
we briefly point the problem in the original wrong proof and
explain why it does not influence the rest of 
\citet{KBarSSag2015}'s results.

\begin{lemma}{(Lemma $11$ of \citet{KBarSSag2015})}\label{lem11}

\be
\variance{\Expectation{e^{-2\alpha \tau^{(n)}} \vert \mathcal{Y}_{n}}} =
\left \{
\begin{array}{rcl}
O(n^{-4\alpha}) & 0 < \alpha < 0.75, \\
O(n^{-3}\ln n) & \alpha = 0.75, \\
O(n^{-3}) & 0.75 < \alpha .
\end{array}
\right.
\ee
\end{lemma}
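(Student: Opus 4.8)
The plan is to strip off the conditioning explicitly and then reduce the variance to two transparent sums, one governed by the random branch lengths and one by the random topology. Given $\mathcal{Y}_{n}$ the only remaining randomness in $\tau^{(n)}$ is the uniform choice of the pair of tips (jumps do not enter $\tau$), so
\[
\Expectation{e^{-2\alpha\tau^{(n)}}\vert\mathcal{Y}_{n}} = \binom{n}{2}^{-1}\sum_{i<j}e^{-2\alpha\tau^{(n)}_{ij}}.
\]
Write $T'_{l}$ for the waiting time while the tree has $l$ lineages, so the $T'_{l}\sim\mathrm{Exp}(l)$ are independent. In a binary Yule tree exactly one coalescence occurs per backward transition, so the internal nodes are indexed bijectively by a level $m\in\{1,\dots,n-1\}$ (the merge of $m+1$ lineages into $m$); the $a_{m}b_{m}$ pairs whose most recent common ancestor is that node all share $\tau^{(n)}=\sum_{l=m+1}^{n-1}T'_{l}$, where $a_{m},b_{m}$ are the sizes of the two merging blocks. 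Setting $W_{m}:=\prod_{l=m+1}^{n-1}e^{-2\alpha T'_{l}}$ and $Q_{m}:=a_{m}b_{m}/\binom{n}{2}$, this gives the exact identity $\Expectation{e^{-2\alpha\tau^{(n)}}\vert\mathcal{Y}_{n}}=\sum_{m=1}^{n-1}Q_{m}W_{m}$. The decisive structural fact I would exploit is that in a Yule process the embedded jump chain (which blocks merge, hence the topology and the weights $Q_{m}$) is independent of the holding times $T'_{l}$, hence of the $W_{m}$.

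First I would use this independence, via the law of total variance conditioning on the topology, to split
\[
\variance{\sum_{m}Q_{m}W_{m}} = \sum_{m,m'}\Expectation{Q_{m}Q_{m'}}\,\covariance{W_{m}}{W_{m'}} + \sum_{m,m'}\bar W_{m}\bar W_{m'}\,\covariance{Q_{m}}{Q_{m'}},
\]
where $\bar W_{m}:=\Expectation{W_{m}}=\prod_{l=m+1}^{n-1}\frac{l}{l+2\alpha}$. The branch-length factors are completely explicit: for $m\le m'$ one has $\Expectation{W_{m}W_{m'}}=\prod_{l=m+1}^{m'}\frac{l}{l+2\alpha}\prod_{l=m'+1}^{n-1}\frac{l}{l+4\alpha}$, and all such products telescope into ratios of gamma functions, yielding $\bar W_{m}\sim(m/n)^{2\alpha}$ together with analogous estimates for $\covariance{W_{m}}{W_{m'}}$. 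Thus the entire time side of the problem is reduced to the same kind of gamma-function asymptotics already used for the unconditional Laplace transform in the excerpt.

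The heart of the argument — and the step I expect to be the main obstacle — is the Yule-tree combinatorics feeding the two sums, namely the first two joint moments $\Expectation{Q_{m}Q_{m'}}$ and $\covariance{Q_{m}}{Q_{m'}}$ of the merge-weights. These require the joint law of the merging block sizes at two prescribed levels, which I would obtain from the exchangeable recursive (P\'olya-urn / uniform-split) description of the Yule topology, in which the root subtree has a size uniform on $\{1,\dots,n-1\}$. As a warm-up one checks the marginals $q_{m}:=\Expectation{Q_{m}}=\mathrm{Prob}(\text{a random pair coalesces at level }m)$, e.g. $q_{1}=\frac{n+1}{3(n-1)}\to\tfrac13$ at the root and $q_{n-1}=\binom{n}{2}^{-1}$ at the most recent tip-merge; the joint moments are the genuinely new computation and the step most prone to error.

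Finally I would assemble the asymptotics, where a crude term-by-term bound does not suffice: since $\bigl(\Expectation{e^{-2\alpha\tau^{(n)}}}\bigr)^{2}$ is of order $n^{-2}$ for $\alpha>\tfrac12$ while the claimed variance is smaller, the proof hinges on a real cancellation, which the decomposition above already packages as covariances. The two regimes of the statement should then emerge as two competing contributions. The deep (near-root, small $m,m'$) coalescences, where $\bar W_{m}\bar W_{m'}\sim n^{-4\alpha}$ and the topology covariance is $O(1)$, produce the $O(n^{-4\alpha})$ term. The shallow (near-tip) coalescences, where $\bar W\approx1$ but there are $\Theta(n)$ levels over which the branch-length fluctuations accumulate, are expected to produce the $\alpha$-independent $O(n^{-3})$ term. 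Balancing $n^{-4\alpha}$ against $n^{-3}$ gives the transition at $\alpha=\tfrac34$, the equality of the two orders there accounting for the extra $\ln n$. The remaining labour is to make these two estimates rigorous and to show that the off-diagonal ($m\neq m'$) contributions are dominated by the diagonal once the gamma-ratio bounds are inserted.
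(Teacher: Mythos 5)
Your skeleton is sound and, on the branch-length side, coincides with the paper's proof: the paper likewise samples two pairs that are conditionally independent given $\mathcal{Y}_{n}$, decomposes by the coalescent levels $k_{1}\le k_{2}$ with the product law $\pi_{n,k_{1}}\pi_{n,k_{2}}$, writes the Laplace factors as $f_{2\alpha}(k_{1},k_{2})f_{4\alpha}(k_{2},n)$, telescopes the differences $f_{4\alpha}(k,n)-f_{2\alpha}(k,n)^{2}$ into sums $A_{n,k}\lesssim C\,(b_{n,4\alpha}/b_{k,2\alpha}^{2})\,k^{-1}$, and extracts the three regimes --- exactly your gamma-ratio programme (minor slip: your products should run to $n$, not $n-1$, since $U^{(n)}$ includes the $\mathrm{Exp}(n)$ epoch; harmless asymptotically). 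The genuine gap is that the step you yourself flag as the heart of the argument is never carried out: you provide no computation or bound for $\Expectation{Q_{m}Q_{m'}}$ or $\covariance{Q_{m}}{Q_{m'}}$, hence no control of $\sum_{m,m'}\bar W_{m}\bar W_{m'}\covariance{Q_{m}}{Q_{m'}}$. These covariances are not identically zero (they only sum to zero because $\sum_{m}Q_{m}=1$; e.g.\ for $n=4$ one computes $\Expectation{Q_{1}Q_{2}}=4/27$ while $\pi_{4,1}\pi_{4,2}=25/162$), and near the tips one needs second moments of the merge products $a_{m}b_{m}$, whose subtree sizes are heavy-tailed, so Cauchy--Schwarz-type bounds are genuinely delicate. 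Note that the paper's accounting never faces this term at all, since it works directly with the product form $\pi_{n,k_{1}}\pi_{n,k_{2}}$ for the two sampled levels; by keeping the topology covariance explicit your route takes on an \emph{additional} hard estimate for which you name no tool beyond the P\'olya-urn description, and the proposal stops exactly where that work begins.

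There is also a concrete error in your closing assembly: the claim that the off-diagonal $(m\neq m')$ contributions are dominated by the diagonal would fail precisely in the regime $\alpha>3/4$. In the paper's bookkeeping the diagonal sum $\sum_{k}A_{n,k}\pi_{n,k}^{2}\lesssim C n^{-4\alpha}\sum_{k}k^{4\alpha-5}$ is $O(n^{-4\alpha})$ for $\alpha<1$, $O(n^{-4}\ln n)$ at $\alpha=1$ and $O(n^{-4})$ for $\alpha>1$, i.e.\ it is $o(n^{-3})$ throughout $\alpha>3/4$; the $n^{-3}$ rate of the lemma is produced by the off-diagonal cross sum $\sum_{k_{1}<k_{2}}f_{2\alpha}(k_{1},k_{2})A_{n,k_{2}}\pi_{n,k_{1}}\pi_{n,k_{2}}$, whose mass for $\alpha>3/4$ sits on near-tip pairs with $k_{1}<k_{2}$ both of order $n$. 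So your two-regime heuristic (deep nodes give $n^{-4\alpha}$, shallow nodes give $n^{-3}$, crossover at $\alpha=3/4$ with the logarithm at equality) is correct, but any rigorous assembly must retain and estimate the cross terms; discarding them in favour of the diagonal would throw away exactly the terms that set the rate for $\alpha>3/4$.
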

\begin{proof}
For a given realization of the Yule $n$-tree we  denote by $\tau^{(n)}_{1}$ and  
$\tau^{(n)}_{2}$ two versions of $\tau^{(n)}$ that are independent conditional 
on $\mathcal{Y}_{n}$. In other words
$\tau^{(n)}_{1}$ and  $\tau^{(n)}_{2}$  correspond to two 
independent choices of pairs of tips out of $n$ available. Conditional
on $\mathcal{Y}_{n}$ all heights in the tree are known---the randomness
is only in the choice out of the $\binom{n}{2}$ pairs or equivalently sampling out of the 
set of $n-1$ coalescent heights. We have,

$$
\Expectation{\Big(\Expectation{e^{-2\alpha \tau^{(n)}}\vert \mathcal{Y}_{n}}\Big)^2}=\Expectation{\Expectation{e^{-2\alpha (\tau^{(n)}_{1}+\tau^{(n)}_{2})}\vert \mathcal{Y}_{n}}}
=\Expectation{e^{-2\alpha (\tau^{(n)}_{1}+\tau^{(n)}_{2})}}.
$$
Let $\pi_{n,k}$ be the probability that two randomly chosen tips coalesced at the $k$--th speciation event.
We know that (cf. \citet{TreeSim1}'s proof of her Theorem 4.1, using $m$ for our $n$
or \citet{KBarSSag2015}'s Lemma 1 for a more general statement)

$$
\pi_{n,k}=2\frac{n+1}{n-1}\frac{1}{(k+1)(k+2)}.
$$
Writing

$$
f_{\alpha}(k,n):=\frac{k+1}{\alpha+k+1}\cdots\frac{n}{\alpha+n} = 
\frac{\Gamma(n+1)\Gamma(\alpha+k+1)}{\Gamma(k+1)\Gamma(\alpha+n+1)}
$$ 
and as the times between speciation events are independent and exponentially distributed
we obtain

$$
\Expectation{\Big(\Expectation{e^{-2\alpha \tau^{(n)}}\vert \mathcal{Y}_{n}}\Big)^2}
=\sum_{k=1}^{n-1}f_{4\alpha}(k,n)\pi_{n,k}^2
+2\sum\limits_{k_{1}=1}^{n-1}\sum\limits_{k_{2}=k_{1}+1}^{n-1}f_{2\alpha}(k_1,k_2)f_{4\alpha}(k_2,n)\pi_{n,k_1}\pi_{n,k_2}.
$$
On the other hand,

$$
\Big(\Expectation{e^{-2\alpha\tau^{(n)}}} \Big)^{2}=\Big(\sum\limits_{k_{1}=1}^{n-1}f_{2\alpha}(k_{1},n)\pi_{n,k_{1}}\Big)\Big(\sum\limits_{k_{2}=1}^{n-1}f_{2\alpha}(k_{2},n)\pi_{n,k_{2}}\Big).
$$
Taking the difference between the last two expressions we find 

$$
\begin{array}{ll}
 \variance{\Expectation{e^{-2\alpha \tau^{(n)}}\vert \mathcal{Y}_{n}}}=&\sum\limits_{k}\Big(f_{4\alpha}(k,n)-f_{2\alpha}(k,n)^{2}\Big)\pi_{n,k}^{2}\\
& +2\sum\limits_{k_{1}=1}^{n-1}\sum\limits_{k_{2}=k_{1}+1}^{n-1}f_{2\alpha}(k_{1},k_{2})
\Big(f_{4\alpha}(k_{2},n)-f_{2\alpha}(k_{2},n)^{2}\Big)\pi_{n,k_{1}}\pi_{n,k_{2}}.
\end{array}
$$
Noticing that we are dealing with a telescoping sum and hence using the relation 

\be \label{eqTelescopSum}
a_{1}\cdots a_{n}-b_{1}\cdots b_{n}=\sum_{i=1}^{n}b_{1}\cdots b_{i-1}(a_{i}-b_{i})a_{i+1}\cdots a_{n}
\ee
we see that it suffices to study the asymptotics of,

$$
\sum\limits_{k=1}^{n-1} A_{n,k}\pi_{n,k}^2 
~~\mathrm{and}~~
\sum\limits_{k_{1}=1}^{n-1}\sum\limits_{k_{2}=k_{1}+1}^{n-1}f_{2\alpha}(k_{1},k_{2})A_{n,k_{2}}\pi_{n,k_{1}}\pi_{n,k_{2}},
$$
where

$$A_{n,k}:=\sum\limits_{j=k+1}^{n}f_{2\alpha}(k,j)^{2} \Big(\frac{4\alpha^{2}}{j(j+4\alpha)}\Big)f_{4\alpha}(j,n).$$
To consider these two asymptotic relations we observe that for large $n$

$$A_{n,k}\lesssim 4\alpha^{2}\frac{b_{n,4\alpha}}{b_{k,2\alpha}^{2}}
\sum\limits_{j=k+1}^{n}\frac{b_{j,2\alpha}^{2}}{b_{j,4\alpha}}\frac{1}{j(4\alpha+j)}
\lesssim C\frac{b_{n,4\alpha}}{b_{k,2\alpha}^{2}} \sum\limits_{j=k+1}^{n} j^{-2}
\lesssim C\frac{b_{n,4\alpha}}{b_{k,2\alpha}^{2}}k^{-1}.$$
Now since
$\pi_{n,k} = \frac{2(n+1)}{(n-1)(k+2)(k+1)}$,
it follows 

$$
\sum\limits_{k=1}^{n-1} A_{n,k}\pi_{n,k}^{2} 
\lesssim Cb_{n,4\alpha} \sum\limits_{k=1}^{n-1} \frac{1}{k^{5} b_{k,2\alpha}^{2}}
\lesssim C n^{-4\alpha} \sum\limits_{k=1}^{n} k^{4\alpha-5}
\lesssim
C\left\{
\begin{array}{cc}
n^{-4\alpha}& 0 < \alpha < 1 \\
n^{-4}\ln n& \alpha = 1 \\
n^{-4} & 1 < \alpha
\end{array}
\right.
$$
and

$$
\begin{array}{l}
\sum\limits_{k_{1}=1}^{n-1}\sum\limits_{k_{2}=k_1+1}^{n-1}f_{2\alpha}(k_{1},k_{2})A_{n,k_{2}}\pi_{n,k_1}\pi_{n,k_2}
\lesssim C b_{n,4\alpha} \sum\limits_{k_{1}=1}^{n-1}\sum\limits_{k_{2}=k_{1}+1}^{n-1} 
\frac{1}{b_{k_{1},2\alpha} b_{k_{2},2\alpha}} \frac{1}{k_{1}^{2} k_{2}^{3}}
\end{array}
$$

$$
\begin{array}{l}
\lesssim 
 Cn^{-4\alpha} \sum\limits_{k_{1}=1}^{n-1}
k_{1}^{2\alpha-2} \sum\limits_{k_{2}=k_{1}+1}^{n-1} k_{2}^{2\alpha-3}
\lesssim
C\left\{
\begin{array}{cc}
n^{-4\alpha} \sum\limits_{k_{1}=1}^{n-1} k_{1}^{4\alpha-4}& 0 < \alpha < 1 \\
n^{-4} \sum\limits_{k_{2}=2}^{n}k_{2}^{-1}\sum\limits_{k_{1}=1}^{k_{2}} 1 & \alpha =1 \\
n^{-4\alpha} \sum\limits_{k_{2}=2}^{n} k_{2}^{4\alpha-4}& 1 < \alpha
\end{array}
\right.
\end{array}
$$

$$
\begin{array}{l}
\lesssim
C\left\{
\begin{array}{cc}
n^{-4\alpha} & 0 < \alpha < 0.75 \\
n^{-3}\ln n & \alpha=0.75  \\
n^{-3} & 0.75 < \alpha < 1 \\
n^{-4} \sum\limits_{k_{2}=2}^{n}1  & \alpha =1 \\
n^{-3} & 1 < \alpha
\end{array}
\right.
\lesssim
C\left\{
\begin{array}{cc}
n^{-4\alpha} & 0 < \alpha < 0.75 \\
n^{-3}\ln n & \alpha=0.75  \\
n^{-3} & 0.75 < \alpha < 1 \\
n^{-3}  & \alpha =1 \\
n^{-3} & 1 < \alpha.
\end{array}
\right.
\end{array}
$$
Summarizing

$$
\begin{array}{l}
\sum\limits_{k_{1}=1}^{n-1}\sum\limits_{k_{2}=k_1+1}^{n-1}f_{2\alpha}(k_{1},k_{2})A_{n,k_{2}}\pi_{n,k_1}\pi_{n,k_2}
\lesssim 
C\left\{
\begin{array}{cc}
n^{-4\alpha} & 0 < \alpha < 0.75 \\
n^{-3}\ln n & \alpha=0.75  \\
n^{-3} & 0.75 < \alpha < 1 .
\end{array}
\right.
\end{array}
$$
\end{proof}

\begin{remark}\label{remLem11}
\citet{KBarSSag2015}
wrongly stated in their Lemma $11$ that 
$\variance{\Expectation{e^{-2\alpha \tau^{(n)}} \vert \mathcal{Y}_{n}}}=O(n^{-3})$ for all $\alpha>0$. 
From the above we can see that this holds only for $\alpha>3/4$. This does not however change
\citet{KBarSSag2015}'s main results.
 If one inspects the proof of Theorem $1$ therein, then one can see
that for $\alpha>0.5$ it is required that 
$\variance{\Expectation{e^{-2\alpha \tau^{(n)}} \vert \mathcal{Y}_{n}}}=O(n^{-(2+\epsilon)})$, 
where $\epsilon>0$.
This by Lemma \ref{lem11} holds. \citet{KBarSSag2015}'s Thm. $2$  does not depend on the rate of 
convergence,
only that $n^{2}\variance{\Expectation{e^{-2\alpha \tau^{(n)}} \vert \mathcal{Y}_{n}}}\to 0$ with $n$. 
This remains true, just with a different rate.
\end{remark}
Let $\mathrm{I}^{(n)}$ be the sequence of speciation events on a random lineage
and $\left(J_{i}\right)$ be the jump pattern (binary sequence $1$ jump took place, $0$ did not take place
just after speciation event $i$)
on a randomly selected lineage.

\begin{lemma}\label{lemvarEexpPsi}
For random variables $(\Upsilon^{(n)},\mathrm{I}^{(n)},\left(J_{i}\right)_{i=1}^{\Upsilon^{(n)}})$ derived
from the same random lineage and 
a fixed jump probability $p$
we have 

\be
\begin{array}{l}
\variance{\Expectation{
\sum\limits_{i=1}^{\Upsilon^{(n)}}J_{i}    
e^{-2\alpha (T_{n}+\ldots+T_{\mathrm{I}^{(n)}_{i}+1})}\vert \mathcal{Y}_{n}}}

\lesssim  pC
\left \{
\begin{array}{lc}
n^{-4\alpha}  & 0< \alpha < 0.25  \\
n^{-1}\ln n   & \alpha = 0.25 \\
n^{-1} & 0.25 < \alpha .
\end{array}
\right.
\end{array}
\ee
\end{lemma}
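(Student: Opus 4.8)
The plan is to imitate the opening of Lemma~\ref{lem11}. Since $\mathcal{Y}_{n}$ already fixes the tree and the entire jump pattern, the only randomness left in the bracketed sum is the choice of lineage, so $\Expectation{\,\cdot\,\vert\mathcal{Y}_{n}}$ is nothing but the average over the $n$ tips. Writing $W$ for the sum inside the expectation, I would compute $\Expectation{(\Expectation{W\vert\mathcal{Y}_{n}})^{2}}=\Expectation{W_{1}W_{2}}$, where $W_{1},W_{2}$ are two copies of $W$ built from two lineages chosen independently and uniformly conditionally on $\mathcal{Y}_{n}$ --- i.e.\ from two random tips sharing the same tree and jump pattern --- and then subtract $(\Expectation{W})^{2}$.

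The decisive new ingredient, absent from Lemma~\ref{lem11}, is the correlation of the jump indicators across the two lineages. Two independently chosen tips coalesce at a most recent common ancestor, and the jumps just after the $\upsilon^{(n)}$ common speciation events (those strictly below the coalescent, ``bar the splitting event'') are literally the \emph{same} physical jumps for both lineages, whereas every other pair of jumps is independent. Splitting $W_{j}=A+B_{j}$ into the shared part $A$ (the common nodes) and the private parts $B_{j}$, and using that two shared jumps coincide with expectation $p$ (as $J^{2}=J$) while every other pair contributes $p^{2}$, I expect the variance to collapse to
$$
\variance{\Expectation{W\vert\mathcal{Y}_{n}}}=(p-p^{2})\,\Expectation{\sum_{i=1}^{\upsilon^{(n)}}e^{-4\alpha s_{i}}}+p^{2}\,\variance{\Expectation{V\vert\mathcal{Y}_{n}}},
$$
where $s_{i}$ is the time from the $i$-th common node to the present and $V$ is the same decay-weighted node count along a random lineage with the Bernoulli jump indicators removed. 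Because the two coefficients satisfy $(p-p^{2})+p^{2}=p$, this already accounts for the linear-in-$p$ prefactor of the statement.

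For the dominant first term I would use, exactly as in Lemma~\ref{lem11}, that topology and branch lengths are independent, so a jump sitting at speciation index $m$ carries the Laplace factor $f_{4\alpha}(m,n)=b_{n,4\alpha}/b_{m,4\alpha}\sim(m/n)^{4\alpha}$, weighted by the probability that index $m$ is a \emph{strict} common ancestor of the two tips. The latter is a pair (second-moment) quantity of the same order as the coalescent probability $\pi_{n,m}\sim 2/m^{2}$, so the term reduces to
$$
(p-p^{2})\sum_{m=1}^{n-1}\Pr[\,m\text{ a strict common ancestor}\,]\,f_{4\alpha}(m,n)\ \lesssim\ pC\,n^{-4\alpha}\sum_{m=1}^{n}m^{4\alpha-2}.
$$
The geometric-type sum $\sum_{m}m^{4\alpha-2}$ converges for $4\alpha-2<-1$, diverges logarithmically at equality, and as a power above it, so the phase transition sits exactly at $\alpha=1/4$, yielding the three regimes $n^{-4\alpha}$, $n^{-1}\ln n$, $n^{-1}$. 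The $p^{2}$ term is of the same kind as in Lemma~\ref{lem11} (a variance of a conditional expectation of decay-weighted tree functionals) and, by an analogous deep-versus-shallow competition exhibiting the same threshold $\alpha=1/4$, has a comparable rate, so it is absorbed into $pC\{\cdots\}$.

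The main obstacle I anticipate is twofold. First, the honest combinatorial bookkeeping of the shared block --- checking that the shared indicators are exactly the $\upsilon^{(n)}$ common events, and extracting the $m^{-2}$ strict-common-ancestor probability from the backward Yule description (the Appendix and \citet{KBar2014} machinery behind the $\mathbf{1}_{i},\tilde{\mathbf{1}}_{i}$ indicators); here one must resist using the first-moment weight $2/(m+1)$, which would wrongly produce an $O(1)$ limit, since it is the pair-level $m^{-2}$ decay that actually forces convergence. Second, the boundary case $\alpha=1/4$, where the telescoping estimates of Lemma~\ref{lem11} must be pushed to produce the logarithmic factor. It is worth stressing that the threshold here is $1/4$ rather than the $3/4$ of Lemma~\ref{lem11}: the extra exponential damping $e^{-4\alpha s}$ accrued along the \emph{whole} common lineage, not merely at the coalescent, shifts the summand to $m^{4\alpha-2}$ and thereby moves the phase transition.
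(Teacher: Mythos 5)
Your proposal is correct, and its skeleton is the paper's own: condition on $\mathcal{Y}_{n}$, realize the conditional second moment as $\Expectation{W_{1}W_{2}}$ for two lineages chosen independently given $\mathcal{Y}_{n}$, exploit the independence of topology and branch lengths to extract the Laplace factor $f_{4\alpha}(m,n)=b_{n,4\alpha}/b_{m,4\alpha}\sim (m/n)^{4\alpha}$, and weight by the $O(m^{-2})$ probability that node $m$ is common to both lineages, which produces the dominant balance $n^{-4\alpha}\sum_{m} m^{4\alpha-2}$ and the phase transition at $\alpha=1/4$. Where you genuinely differ is in the bookkeeping of the jump indicators. The paper expands $\variance{\sum_{i}\phi^{\ast}_{i}}$ with $\phi^{\ast}_{i}=Z_{i}e^{-2\alpha(T_{n}+\ldots+T_{i+1})}\Expectation{\mathbf{1}_{i}\vert\mathcal{Y}_{n}}$ directly into diagonal and cross telescoping pieces (Eq. \eqref{eqVarPsiStar}), the $p$ versus $p^{2}$ dichotomy entering only implicitly through $\Expectation{{\phi^{\ast}_{i}}^{2}}\propto p$ (since $Z_{i}^{2}=Z_{i}$) against $\Expectation{\phi^{\ast}_{i}\phi^{\ast}_{j}}\propto p^{2}$. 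Your upfront identity $\variance{\Expectation{W\vert\mathcal{Y}_{n}}}=(p-p^{2})\Expectation{\sum_{i\le \upsilon^{(n)}}e^{-4\alpha s_{i}}}+p^{2}\variance{\Expectation{V\vert\mathcal{Y}_{n}}}$ is an exact regrouping of the same terms and buys two things: it makes the linear-in-$p$ prefactor structurally transparent (which is what later drives the $p_{n}\sigma_{c,n}^{4}\to 0$ condition), and it is in fact more scrupulous than the paper at the MRCA, where the two lineages take different daughter branches with independent jumps --- your ``bar the splitting event'' accounting matches the definition of $\upsilon^{(n)}$, whereas the paper's single $Z_{i}$ per node silently assigns the coalesce-exactly-at-$i$ event (probability $\pi_{i,n}\sim 2/i^{2}$, the same order as the strict-common term) the factor $p$ rather than $p^{2}$; this only shifts constants, never rates. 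The one thin spot is the $p^{2}\variance{\Expectation{V\vert\mathcal{Y}_{n}}}$ term, which you assert has the same $\alpha=1/4$ threshold by analogy: the assertion is true, but it is not absorbed for free --- it is exactly the paper's cross-sum work (the components bounded by $n^{-4\alpha}\sum_{i}\sum_{j>i}i^{2\alpha-1}j^{2\alpha-2}$), so a complete write-up would have to reproduce those telescoping estimates rather than cite Lemma \ref{lem11}, whose threshold sits at $3/4$; your closing explanation of why the damping along the whole common lineage moves the threshold from $3/4$ to $1/4$ is correct.
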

\begin{proof}
We introduce the random variables

$$
\Psi^{\ast^{(n)}}:=\sum\limits_{i=1}^{\Upsilon^{(n)}} J_{i} e^{-2\alpha (T_{n}+\ldots+T_{\mathrm{I}^{(n)}_{i}+1})}
$$
and

$$
\phi^{\ast}_{i}:= Z_{i} e^{-2\alpha(T_{n}+\ldots+T_{i+1})} \Expectation{\mathbf{1}_{i} \vert \mathcal{Y}_{n}},
$$
where $Z_{i}$ is the binary random variable if a jump took place at the $i$--th speciation event 
of the tree for our considered random lineage. 
Obviously   

$$
\Expectation{\Psi^{\ast^{(n)}} \vert \mathcal{Y}_{n}} =
\sum_{i=1}^{n-1} \phi^{\ast}_{i}.
$$
Immediately (for $i<j$)

$$
\begin{array}{rcl}
\Expectation{\phi^{\ast}_{i}} & = & \frac{2p}{i+1} \frac{b_{n,2\alpha}}{b_{i,2\alpha}}, \\
\Expectation{\phi^{\ast}_{i}\phi^{\ast}_{j}} & = &  \frac{4p^{2}}{(i+1)(j+1)} \frac{b_{n,4\alpha}}{b_{j,4\alpha}}\frac{b_{j,2\alpha}}{b_{i,2\alpha}}, \\
\Expectation{{\phi^{\ast}_{i}}^{2}} & = & p \frac{b_{n,4\alpha}}{b_{i,4\alpha}} \Expectation{\left(\Expectation{\mathbf{1}_{i} \vert \mathcal{Y}_{n}}\right)^{2}}.
\end{array}
$$
We illustrate the random objects defined above in Fig. \ref{figPhiPsi}.
The term $\Expectation{\left(\Expectation{\mathbf{1}_{i} \vert \mathcal{Y}_{n}}\right)^{2}}$ can be 
expressed as $\Expectation{\mathbf{1}^{(1)}_{i}\mathbf{1}^{(2)}_{i}}$
(same as with
$\Expectation{\left(\Expectation{e^{-2\alpha \tau^{(n)}} \vert \mathcal{Y}_{n}}\right)^{2}}$ in Lemma \ref{lem11}),
where $\mathbf{1}^{(1)}_{i}$ and $\mathbf{1}^{(2)}_{i}$ are two copies of
$\mathbf{1}_{i}$ that are independent given $\mathcal{Y}_{n}$, 
i.e. for a given tree we sample two lineages and ask if the $i$--th  speciation event
is on both of them. This will occur if these lineages coalesced at a speciation event
$k \ge i$. Therefore,

$$
\begin{array}{l}
\Expectation{\mathbf{1}^{(1)}_{i}\mathbf{1}^{(2)}_{i}}
= \frac{2}{i+1} \sum\limits_{k=i+1}^{n-1} \pi_{k,n} + \pi_{i,n}
= \frac{n+1}{n-1}\frac{2}{i+1}\left(\sum\limits_{k=i+1}^{n-1} \frac{2}{(k+1)(k+2)} + \frac{1}{i+2} \right)
\\ = \frac{n+1}{n-1}\frac{2}{i+1}\left(\frac{2}{i+2} - \frac{2}{n+1} + \frac{1}{i+2} \right)
= \frac{n+1}{n-1}\frac{6}{(i+1)(i+2)}  - \frac{2}{n-1}\frac{2}{i+1}. 
\end{array}
$$
Together with the above

$$
\Expectation{{\phi^{\ast}_{i}}^{2}}  =  p \frac{b_{n,4\alpha}}{b_{i,4\alpha}} 
\left( \frac{n+1}{n-1}\frac{6}{(i+1)(i+2)}  - \frac{1}{n-1}\frac{4}{i+1} \right).
$$
Now

\begin{eqnarray}%{rcl}
\label{eqVarPsiStar}
\variance{\sum\limits_{i=1}^{n-1} \phi^{\ast}_{i}} & = &
\sum\limits_{i=1}^{n-1}\left(\Expectation{{\phi^{\ast}_{i}}^{2}} - \left(\Expectation{\phi^{\ast}_{i}}\right)^{2} \right)
+2\sum\limits_{i=1}^{n-1}\sum\limits_{j=i+1}^{n-1}\left(\Expectation{\phi^{\ast}_{i}\phi^{\ast}_{j}} - \Expectation{\phi^{\ast}_{i}}\Expectation{\phi^{\ast}_{j}}\right)
\\ \notag & =&
\sum\limits_{i=1}^{n-1}\left(
p \frac{b_{n,4\alpha}}{b_{i,4\alpha}}
\left(
\frac{n+1}{n-1}\frac{6}{(i+1)(i+2)}  - \frac{1}{n-1}\frac{4}{i+1}
\right)
-\frac{4p^{2}}{(i+1)^{2}} \left(\frac{b_{n,2\alpha}}{b_{i,2\alpha}}\right)^{2}
\right)
\\ \notag&& +2\sum\limits_{i=1}^{n-1}\sum\limits_{j=i+1}^{n-1}\left(
\frac{4p^{2}}{(i+1)(j+1)} \frac{b_{n,4\alpha}}{b_{j,4\alpha}}\frac{b_{j,2\alpha}}{b_{i,2\alpha}}
-
\frac{4p^{2}}{(i+1)(j+1)} \frac{b_{n,2\alpha}}{b_{i,2\alpha}} \frac{b_{n,2\alpha}}{b_{j,2\alpha}}
\right)
\\ \notag& \lesssim &
2p 
\sum\limits_{i=1}^{n-1} \frac{1}{(i+1)^{2}}\left(
 3\frac{b_{n,4\alpha}}{b_{i,4\alpha}}
-2p\left(\frac{b_{n,2\alpha}}{b_{i,2\alpha}}\right)^{2}
\right)
\text{\circledchar{\ref{rEexpPsi1}}}\refstepcounter{rEexpPsi}\label{rEexpPsi1}
\\ \notag&& 
+ 4p (n-1)^{-1} 
\sum\limits_{i=1}^{n-1} 
\frac{b_{n,4\alpha}}{b_{i,4\alpha}} \left(\frac{3}{(i+1)^{2}} - \frac{1}{i+1} \right)
\text{\circledchar{\ref{rEexpPsi2}}}\refstepcounter{rEexpPsi}\label{rEexpPsi2}
\\ \notag&&
+8p^{2}\sum\limits_{i=1}^{n-1}\sum\limits_{j=i+1}^{n-1}\left(
\frac{1}{(i+1)(j+1)} \frac{b_{j,2\alpha}}{b_{i,2\alpha}}
\left(
\frac{b_{n,4\alpha}}{b_{j,4\alpha}}
-
\left(\frac{b_{n,2\alpha}}{b_{j,2\alpha}}\right)^{2} \right)
\right).
\text{\circledchar{\ref{rEexpPsi3}}}\refstepcounter{rEexpPsi}\label{rEexpPsi3}
\end{eqnarray}
We notice that we are dealing with a telescoping sum, we take advantage of Eq. \eqref{eqTelescopSum} again
and consider the three parts in turn. 

\begin{enumerate}[label=\Roman*]
\item[\text{\circledchar{\ref{rEexpPsi1}}}] 

$$
\begin{array}{l}
\sum\limits_{i=1}^{n-1} \frac{1}{(i+1)^{2}}\left(
 3\frac{b_{n,4\alpha}}{b_{i,4\alpha}}
-2p\left(\frac{b_{n,2\alpha}}{b_{i,2\alpha}}\right)^{2}
\right)
\\= 
\sum\limits_{i=1}^{n-1} \frac{1}{(i+1)^{2}}\left(
\left(\frac{b_{n-1,2\alpha}}{b_{i,2\alpha}}\right)^{2}\left(
\frac{3n}{n+4\alpha} - \frac{2pn^{2}}{(n+2\alpha)^{2}}
\right)
+
3\sum\limits_{k=i+1}^{n-1}\left(\frac{b_{k-1,2\alpha}}{b_{i,2\alpha}}\right)^{2}
\left(\frac{k}{k+4\alpha} - \frac{k^{2}}{(k+2\alpha)^{2}} \right)
\frac{b_{n,4\alpha}}{b_{k,4\alpha}}\right)
\end{array}
$$

$$
\begin{array}{l}
= 
\sum\limits_{i=1}^{n-1} \frac{1}{(i+1)^{2}}\left(
\left(\frac{b_{n-1,2\alpha}}{b_{i,2\alpha}}\right)^{2}
\frac{n^{2}}{(n+2\alpha)^{2}}\frac{(3-2p)n+(3-2p)4\alpha+n^{-1}12\alpha^{2}}{n+4\alpha}
+
3\sum\limits_{k=i+1}^{n-1}\left(\frac{b_{k-1,2\alpha}}{b_{i,2\alpha}}\right)^{2}
\frac{k^{2}}{(k+2\alpha)^{2}}\frac{4\alpha^{2}}{k(k+4\alpha)}
\frac{b_{n,4\alpha}}{b_{k,4\alpha}}
\right)
\\ \lesssim
C((3-2p) 
n^{-4\alpha}\sum\limits_{i=1}^{n}i^{4\alpha-2}
+
12\alpha^{2}
n^{-4\alpha}\sum\limits_{i=1}^{n}i^{4\alpha-3})
\end{array}
$$

$$
\begin{array}{l}
\sim C
 \left \{
\begin{array}{cc}
n^{-4\alpha}& 0< \alpha < 0.25  \\
n^{-1}\ln n & \alpha = 0.25 \\
n^{-1} & 0.25< \alpha   .
\end{array}
\right.
\end{array}
$$
\item[\text{\circledchar{\ref{rEexpPsi2}}}]

$$
\begin{array}{l}
n^{-1}\sum\limits_{i=1}^{n-1} 
\frac{b_{n,4\alpha}}{b_{i,4\alpha}} \left(\frac{3}{(i+1)^{2}} - \frac{1}{i+1} \right) 
\sim
C(3 n^{-4\alpha-1} 
\sum\limits_{i=1}^{n}  i^{4\alpha -2}
-
n^{-4\alpha-1}\sum\limits_{i=1}^{n}  i^{4\alpha -1})
 \sim -C n^{-1}
\end{array}
$$
\item[\text{\circledchar{\ref{rEexpPsi3}}}] 

\be \label{eqExpPsi3}
\begin{array}{l}
\sum\limits_{i=1}^{n-1}\sum\limits_{j=i+1}^{n-1}\left(
\frac{1}{(i+1)(j+1)} \frac{b_{j,2\alpha}}{b_{i,2\alpha}}
\left(
\frac{b_{n,4\alpha}}{b_{j,4\alpha}}
-
\left(\frac{b_{n,2\alpha}}{b_{j,2\alpha}}\right)^{2} \right)
\right)
=
\sum\limits_{i=1}^{n-1}\sum\limits_{j=i+1}^{n-1}
\frac{1}{(i+1)(j+1)} f_{2\alpha}(i,j)A_{n,j}
\end{array}
\ee

$$
\begin{array}{l}
\lesssim
C n^{-4\alpha}
\sum\limits_{i=1}^{n}\sum\limits_{j=i+1}^{n} i^{-1+2\alpha}j^{-2+2\alpha}
\lesssim C
 \left \{
\begin{array}{cc}
n^{-4\alpha}& 0< \alpha < 0.25\\
n^{-1}\ln n& \alpha = 0.25\\
n^{-1} & 0.25 < \alpha .
\end{array}
\right.
\end{array}
$$
\end{enumerate}
Putting these together we obtain 

$$
\begin{array}{rcl}
\variance{\sum\limits_{i=1}^{n-1} \phi^{\ast}_{i}} & \lesssim &
p C
\left \{
\begin{array}{lc}
n^{-4\alpha}  & 0< \alpha < 0.25  \\
n^{-1}\ln n   & \alpha = 0.25 \\
n^{-1} & 0.25 < \alpha .
\end{array}
\right.
\end{array}
$$
On the other hand the variance is bounded from below by 
\ref{rEexpPsi3}.
Its asymptotic behaviour is tight
as the calculations there are accurate up to a constant (independent of $p$).
This is further illustrated by graphs
in Fig. \ref{figEPsiStar}.

\begin{figure}[!ht]
\begin{center}
\includegraphics[width=0.3\textwidth]{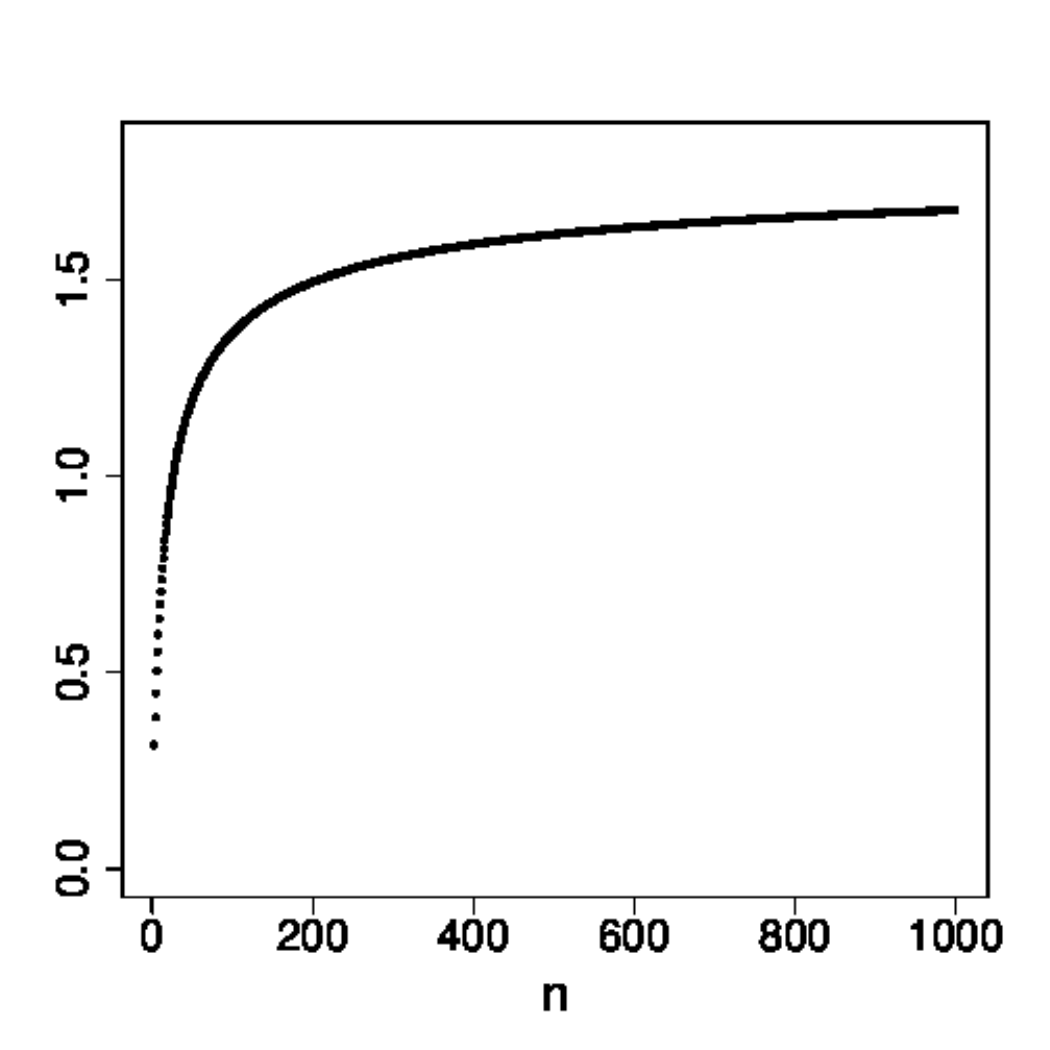}
\includegraphics[width=0.3\textwidth]{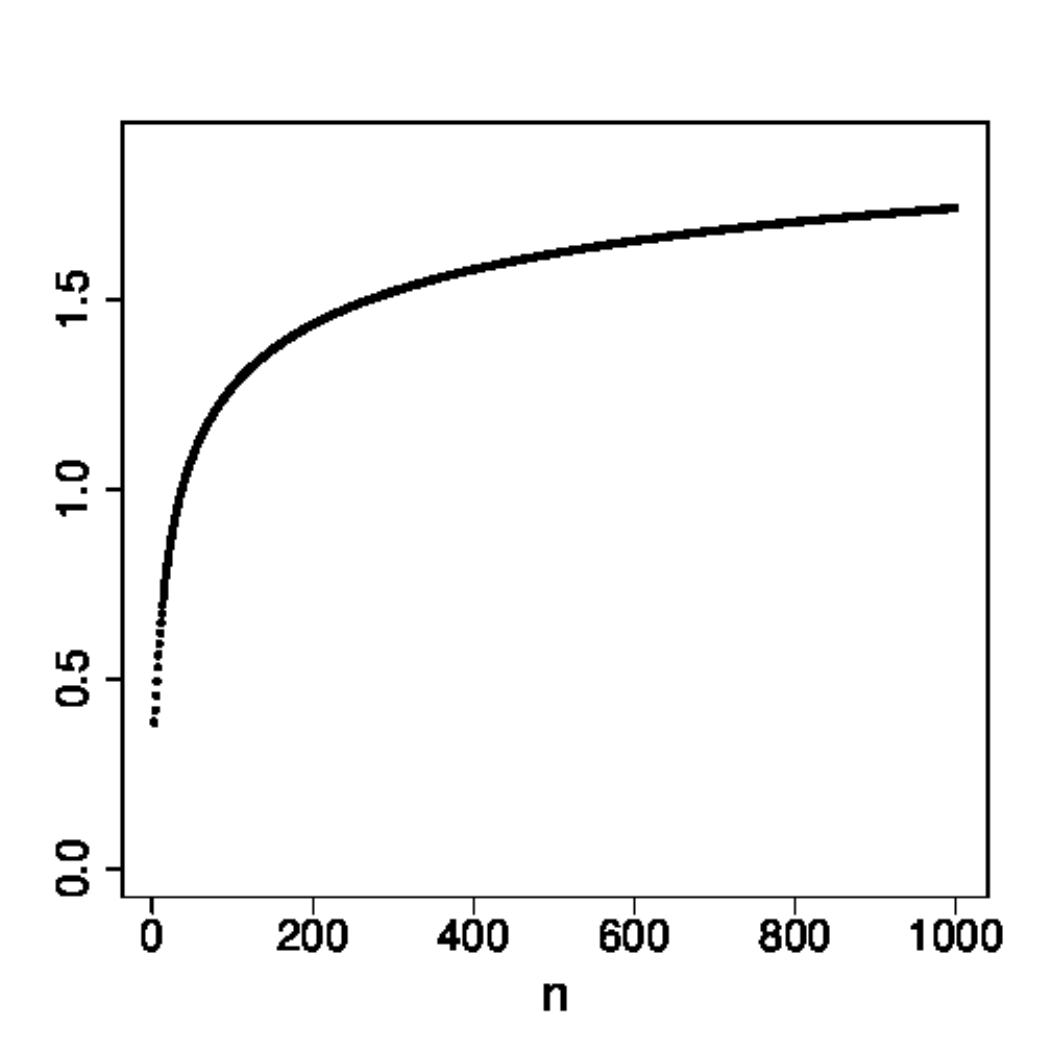}
\includegraphics[width=0.3\textwidth]{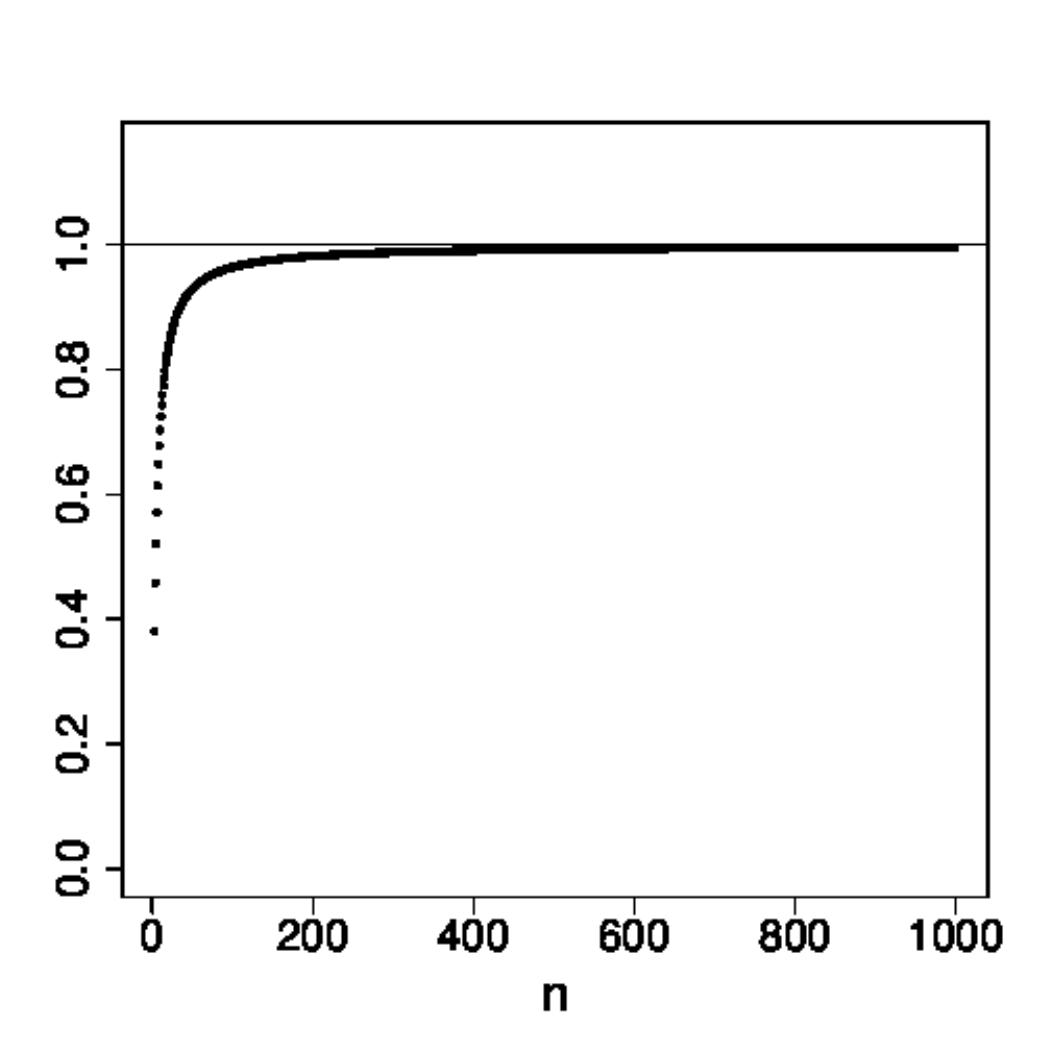}
\caption{Numerical evaluation of scaled Eq. \eqref{eqVarPsiStar} for
different values of $\alpha$. The scaling for
left: $\alpha=0.1$ equals $n^{-4\alpha}$,
centre: $\alpha=0.25$ equals $n^{-1}\log n$
and right $\alpha=1$ equals 
$(2p(3-2p)/(4\alpha-1)-4p/(4\alpha)+32p^{2}\alpha^{2}(1/(8\alpha^{2})+1/(2\alpha(2\alpha-1))-1/(4\alpha^{2})^{-1}-1/((2\alpha-1)(4\alpha-1))))n^{-1}.$
In all cases, $p=0.5$.
The value of the leading constant comes from a careful treatment 
of the summation in Lemma \ref{lemvarEexpPsi}.
The sums are approximated by definite integrals and the leading constant
resulting from the integration is remembered (in the panel on the right).
}\label{figEPsiStar}
\end{center}
\end{figure}
\end{proof}

\begin{corollary}\label{corvarEexpPsi}
Let $p_{k}$ and $\sigma_{c,k}^{2}$ be respectively the jump probability and 
variance at the $k$--th speciation event, such that the sequence $\sigma_{c,k}^{4}p_{k}$ is bounded.
We have 

$$
\begin{array}{rcccl}
n\ln^{-1}n \variance{\sum\limits_{i=1}^{n-1} \sigma_{c,i}^{2}\phi^{\ast}_{i}}& \to & 0 & \mathrm{for}~ \alpha = 0.25, \\
n \variance{\sum\limits_{i=1}^{n-1} \sigma_{c,i}^{2}\phi^{\ast}_{i}}& \to & 0 & \mathrm{for}~  0.25 < \alpha.
\end{array}
$$
iff $\sigma_{c,k}^{4}p_{k} \to 0$ with density $1$.
\end{corollary}

\begin{proof}
We consider the case, $\alpha>0.25$.
Notice that in the proof of Lemma \ref{lemvarEexpPsi}
$\variance{\sum_{i=1}^{n-1} \phi^{\ast}_{i}} \lesssim pn^{-4\alpha}\sum_{i=1}^{n-1}i^{4\alpha-2}$.
If the jump probability and variance are not constant, but as in the Corollary, then

$$\variance{\sum_{i=1}^{n-1} \sigma_{c,i}^{2}\phi^{\ast}_{i}} 
\lesssim n^{-4\alpha}\left(\sum_{i=1}^{n-1} p_{i}\sigma_{c,i}^{4}i^{4\alpha-2}+\sum_{i=1}^{n-1} p_{i}\sigma_{c,i}^{2}i^{4\alpha-2}\right).$$
Notice that if $p_{i}\sigma_{c,i}^{4} \to 0$ with density $1$, then so will
$p_{i}\sigma_{c,i}^{2}$.

The Corollary is a consequence of a more general ergodic property, similar to \citet{KPet1983}'s Lemma $6.2$ (p. $65$).
Namely take $u>0$ and if a bounded sequence
$a_{i} \to 0$ with density $1$, then

$$
n^{-u}\sum\limits_{i=1}^{n-1} a_{i} i^{u-1} \to 0.
$$
To show this say the sequence $a_{i}$ is bounded by $A$, let $E \subset \mathbb{N}$ be 
the set of natural numbers such that $a_{i} \to 0$ if $i \in E^{c}$ and
define $E_{n}=E \cup \{1,\ldots,n\}$. Then

$$
n^{-u} \sum\limits_{i=1}^{n-1}a_{i}i^{u-1} = 
n^{-u}\sum\limits_{\stackrel{i=1}{i \in E_{n-1}}}^{n-1} a_{i}i^{u-1}
+ n^{-u}\sum\limits_{\stackrel{i=1}{i \notin E_{n-1}}}^{n-1} a_{i}i^{u-1}.
$$
Denoting by $\vert E_{i} \vert$ the cardinality of a set $E_{i}$,
the former sum is bounded above by $A\frac{\vert E_{n-1}\vert}{n}$, which, by assumption, 
tends to $0$ as $n \to \infty$. For the latter sum, given $\epsilon > 0$, 
if we choose $N_{1}$ such that $\vert a_{n} \vert < \epsilon/2$ for all $n>N_{1}$ and $N_{2}$ 
such that $\left(N_{1}/n \right)^{u} < \epsilon/(2A)$ for all $n > N_{2}$, 
then for all $n>N=\max\{N_{1},N_{2} \}$, one has that

$$
n^{-u} \sum\limits_{\stackrel{i=1}{i\notin E_{n-1}}}^{n-1}a_{i}i^{u-1} = 
n^{-u}\sum\limits_{\stackrel{i=1}{i \notin E_{n-1}}}^{N_{1}} a_{i}i^{u-1}
+ n^{-u}\sum\limits_{\stackrel{i=N_{1}+1}{i \notin E_{n-1}}}^{n-1} a_{i}i^{u-1},
$$
and now one has that the former sum is bounded above by $An^{-u}N_{1}N_{1}^{u-1} < \epsilon/2$ 
and the latter by $n^{-u}n^{u-1}(n-N_{1})(\epsilon/2) < \epsilon/2$. This proves the result.

On the other hand if $a_{i}$ does not go to $0$ with density $1$, then 
$
\limsup\limits_{n} n^{-u}\sum\limits_{i=1}^{n-1} a_{i} i^{u-1} > 0.
$

When $\alpha=0.25$ we obtain the Corollary using the same ergodic argumentation for 

$$\ln^{-1}n\left( \sum_{i=1}^{n-1}p_{i}\sigma_{c,i}^{4}i^{-1}+\sum_{i=1}^{n-1}p_{i}\sigma_{c,i}^{2}i^{-1} \right).$$
\end{proof}
Let $\mathrm{\tilde{I}}^{(n)}$ be the sequence of speciation events on 
the lineage from the origin of the tree to the most recent common ancestor
of a pair of randomly selected tips
and $\left(\tilde{J}_{i}\right)$ be the jump pattern 
(binary sequence $1$ jump took place, $0$ did not take place just after speciation event $i$)
on the lineage from the origin of the tree to the most recent common ancestor
of a pair of randomly selected tips.

\begin{lemma}\label{lemvarEexpPhi}
For random variables $(\upsilon^{(n)},\mathrm{\tilde{I}}^{(n)},\left(\tilde{J}_{i}\right)_{i=1}^{\upsilon^{(n)}})$ derived
from the same random pair of lineages and a fixed jump probability $0<p<1$

\be
\variance{\Expectation{\sum\limits_{i=1}^{\upsilon^{(n)}} \tilde{J}_{i}
e^{-2\alpha (\tau^{(n)}+\ldots+T_{\mathrm{\tilde{I}}^{(n)}_{i}+1})}\vert \mathcal{Y}_{n}}}
\lesssim p(1-p) C
\left \{
\begin{array}{cc}
n^{-4\alpha} & 0 < \alpha < 0.5, \\
n^{-2}\ln n &  \alpha = 0.5,  \\
n^{-2} & 0.5 < \alpha.
\end{array}
\right.
\ee
\end{lemma}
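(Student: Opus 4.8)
The plan is to rerun the proof of Lemma~\ref{lemvarEexpPsi} almost verbatim, with the single--lineage indicator $\mathbf{1}_{i}$ replaced throughout by the pair indicator $\tilde{\mathbf{1}}_{i}$ of the origin--to--MRCA path. I would set $\tilde{\Psi}^{\ast^{(n)}}:=\sum_{i=1}^{\upsilon^{(n)}}\tilde{J}_{i}e^{-2\alpha(\tau^{(n)}+\ldots+T_{\tilde{\mathrm{I}}^{(n)}_{i}+1})}$ and $\tilde{\phi}^{\ast}_{i}:=\tilde{Z}_{i}e^{-2\alpha(T_{n}+\ldots+T_{i+1})}\Expectation{\tilde{\mathbf{1}}_{i}\vert\mathcal{Y}_{n}}$, so that $\Expectation{\tilde{\Psi}^{\ast^{(n)}}\vert\mathcal{Y}_{n}}=\sum_{i=1}^{n-1}\tilde{\phi}^{\ast}_{i}$. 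The key structural remark is that the discount exponent $\tau^{(n)}+\ldots+T_{\tilde{\mathrm{I}}^{(n)}_{i}+1}$ is just the elapsed time from the $i$--th speciation event to the present, i.e. the same variable $T_{n}+\ldots+T_{i+1}$ appearing in Lemma~\ref{lemvarEexpPsi}; as the interspeciation times are independent of both the topology and the jump indicators, every Laplace--transform factor ($b_{n,2\alpha}/b_{i,2\alpha}$, $b_{n,4\alpha}/b_{i,4\alpha}$, $b_{j,2\alpha}/b_{i,2\alpha}$) and the entire telescoping bookkeeping of Eq.~\eqref{eqVarPsiStar} carry over unchanged. In particular the first moments agree with Lemma~\ref{lemvarEexpPsi}, $\Expectation{\tilde{\phi}^{\ast}_{i}}=\frac{2p}{i+1}\frac{b_{n,2\alpha}}{b_{i,2\alpha}}$, by the paper's opening Lemma.

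The only inputs that genuinely change are the second--order topological moments. Following the device used in Lemmata~\ref{lem11} and~\ref{lemvarEexpPsi}, I would write $\Expectation{(\Expectation{\tilde{\mathbf{1}}_{i}\vert\mathcal{Y}_{n}})^{2}}=\Expectation{\tilde{\mathbf{1}}_{i}^{(1)}\tilde{\mathbf{1}}_{i}^{(2)}}$ and $\Expectation{\Expectation{\tilde{\mathbf{1}}_{i}\vert\mathcal{Y}_{n}}\Expectation{\tilde{\mathbf{1}}_{j}\vert\mathcal{Y}_{n}}}=\Expectation{\tilde{\mathbf{1}}_{i}^{(1)}\tilde{\mathbf{1}}_{j}^{(2)}}$, the superscripts denoting two choices of a pair of tips that are independent given $\mathcal{Y}_{n}$. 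I would evaluate these by conditioning on the coalescent events of the two sampled pairs and summing against $\pi_{n,k}$, exactly as $\Expectation{e^{-2\alpha(\tau^{(n)}_{1}+\tau^{(n)}_{2})}}$ was treated in Lemma~\ref{lem11}. The outcome I expect is that, relative to the single--lineage moments of Lemma~\ref{lemvarEexpPsi}, these pair moments carry one extra factor of order $k^{-1}$ in the governing summand.

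With the moments in hand I would substitute them into the three--part decomposition~\eqref{eqVarPsiStar} and bound each part along the three enumerated cases of Lemma~\ref{lemvarEexpPsi}. The extra $k^{-1}$ replaces the governing single sum $n^{-4\alpha}\sum i^{4\alpha-2}$ by $n^{-4\alpha}\sum i^{4\alpha-3}$, and makes the governing double sum dominated by $n^{-4\alpha}\sum_{i}\sum_{j>i} i^{2\alpha-1}j^{2\alpha-3}$. The convergence threshold of these sums now sits at $\alpha=1/2$ rather than $\alpha=1/4$, delivering $n^{-4\alpha}$ for $0<\alpha<1/2$, the borderline $n^{-2}\ln n$ at $\alpha=1/2$, and the saturated $n^{-2}$ for $\alpha>1/2$, as stated.

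The hard part will be the pair cross moment $\Expectation{\tilde{\mathbf{1}}_{i}^{(1)}\tilde{\mathbf{1}}_{j}^{(2)}}$. In the single--lineage setting the analogous topological cross term factorizes into $\Expectation{\mathbf{1}_{i}}\Expectation{\mathbf{1}_{j}}$, so the covariance is driven purely by the times; here the two indicators live on the MRCA--paths of two pairs on one realized tree, and I must control the \emph{joint} law of two coalescent events rather than a single one. Establishing that this quantity does not factorize but instead shrinks by the extra $k^{-1}$ factor is precisely what forces the improved exponent, and hence moves the phase transition to $\alpha=1/2$; getting its leading order right is the crux of the whole argument.
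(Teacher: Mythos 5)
You have reproduced the paper's own argument essentially step for step: the paper likewise recycles the moment machinery of Lemma~\ref{lemvarEexpPsi} (setting $\phi_{i}=\tilde{Z}_{i}\mathbf{\tilde{1}}_{i}e^{-2\alpha(T_{n}+\ldots+T_{i+1})}$, with the same first moments and the same telescoping identities), applies the doubling trick $\Expectation{\left(\Expectation{\Psi^{(n)}\vert\mathcal{Y}_{n}}\right)^{2}}=\Expectation{\Psi^{(n)}_{1}\Psi^{(n)}_{2}}$, and then decomposes over the coalescent nodes of the two conditionally independent pairs against $\pi_{k_{1},n}\pi_{k_{2},n}$ --- precisely your ``joint law of two coalescent events'' step, carried out in the paper as the five components \ref{rEPPp1}--\ref{rEPPp5}. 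Your predicted mechanism also checks out against the paper's computation: the coalescent weighting supplies exactly the extra factor of order $k^{-1}$, the dominant contribution (the paper's component \ref{rEPPp3}) reduces to the governing sum $n^{-4\alpha}\sum_{i}i^{4\alpha-3}$, and the phase transition correctly moves to $\alpha=1/2$ with rates $n^{-4\alpha}$, $n^{-2}\ln n$ and $n^{-2}$.
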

\begin{proof}
We introduce the notation

$$
\Psi^{(n)} := \sum\limits_{i=1}^{\upsilon^{(n)}} \tilde{J}_{i}e^{-2\alpha (\tau^{(n)}+\ldots+T_{\mathrm{\tilde{I}}^{(n)}_{i}+1})}
$$
and by definition we have 

$$
\variance{\Expectation{\sum\limits_{i=1}^{\upsilon^{(n)}} \tilde{J}_{i}
e^{-2\alpha (\tau^{(n)}+\ldots+T_{\mathrm{\tilde{I}}^{(n)}_{i}})} \vert \mathcal{Y}_{n}}}
= \Expectation{\left(\Expectation{\Psi^{(n)}  \vert \mathcal{Y}_{n}}\right)^{2}}
- \left(\Expectation{\Psi^{(n)} }\right)^{2}.
$$
We introduce the random variable 

$$
\phi_{i} = \tilde{Z}_{i}\mathbf{\tilde{1}}_{i}e^{-2\alpha \left(T_{n}+\ldots+T_{i+1} \right)},
$$
where $\tilde{Z}_{i}$ is the binary random variable if a jump took place just after the $i$--th speciation event 
of the tree for our considered lineage and obviously (for $i_{1}<i_{2}$)

$$
\begin{array}{rcl}
\Expectation{\phi_{i}}&=& \frac{2p}{i+1}b_{n,2\alpha}/b_{i,2\alpha},\\
\Expectation{\phi_{i}^{2}}&=& \frac{2p}{i+1}b_{n,4\alpha}/b_{i,4\alpha},\\
\Expectation{\phi_{i_{1}}\phi_{i_{2}}}&=& \frac{4p^{2}}{(i_{1}+1)(i_{2}+1)}\frac{b_{n,4\alpha}}{b_{i_{2},4\alpha}}\frac{b_{i_{2},2\alpha}}{b_{i_{1},2\alpha}}.
\end{array}
$$
We illustrate the random objects defined above in Fig. \ref{figPhiPsi}.
We can write similarly (but not exactly the same) as for $\Psi^{\ast^{(n)}}$

$$
\Psi^{(n)} = \sum\limits_{i=1}^{k-1}\phi_{i}.
$$
As usual (just as for $\tau^{(n)}_{1}, \tau^{(n)}_{2}$ in Lemma \ref{lem11}) 
let $(\tau^{(n)}_{1},\upsilon^{(n)}_{1},\Psi^{(n)}_{1})$ and $(\tau^{(n)}_{2},\upsilon^{(n)}_{2},\Psi^{(n)}_{2})$ 
be two conditionally on $\mathcal{Y}_{n}$ independent copies of $(\tau^{(n)},\upsilon^{(n)},\Psi^{(n)})$
and now

$$
\begin{array}{l}
\Expectation{\left(\Expectation{\Psi^{(n)}\vert \mathcal{Y}_{n}}\right)^{2}} = 
\Expectation{\Expectation{\Psi^{(n)}_{1}\vert \mathcal{Y}_{n}}\Expectation{\Psi^{(n)}_{2}\vert \mathcal{Y}_{n}}}
=\Expectation{\Expectation{\Psi^{(n)}_{1}\Psi^{(n)}_{2}\vert \mathcal{Y}_{n}}}
=\Expectation{\Psi^{(n)}_{1}\Psi^{(n)}_{2}}.
\end{array}
$$
Writing out a product of two sums, for $k_{1}<k_{2}$, as

$$
\begin{array}{rcl}
\left(\sum\limits_{i_{1}=1}^{k_{1}-1} a_{i_{1}}\right)
\left(\sum\limits_{i_{2}=1}^{k_{2}-1} a_{i_{2}}\right)
& = &
\left(\sum\limits_{i=1}^{k_{1}-1} a_{i}\right)^{2}
+
\left(\sum\limits_{i_{1}=1}^{k_{1}-1} a_{i_{1}}\right)
\left(\sum\limits_{i_{2}=k_{1}}^{k_{2}-1} a_{i_{2}}\right)
\\ &&
=
\left(\sum\limits_{i=1}^{k_{1}-1} a_{i}^{2}\right)
+
2
\left(\sum\limits_{i_{1}=1}^{k_{1}-1} \sum\limits_{i_{2}=i_{1}+1}^{k_{1}-1} a_{i_{1}}a_{i_{2}} \right)
+
\left(\sum\limits_{i_{1}=1}^{k_{1}-1} a_{i_{1}}\right)
\left(\sum\limits_{i_{2}=k_{1}}^{k_{2}-1} a_{i_{2}}\right)
\end{array}
$$
and using the law of total probability to condition on the speciation event at which the two nodes
coalesced, we have

\begin{eqnarray}%{rcl}
\label{eqVarPsi}
\variance{\Expectation{\Psi^{(n)}\vert \mathcal{Y}_{n}}}  &=& \Expectation{\Psi^{(n)}_{1}\Psi^{(n)}_{2}} - \left(\Expectation{\Psi^{(n)}} \right)^{2}
\\ \notag & = & \sum\limits_{k=1}^{n-1} \pi_{k,n}^{2} \Bigg(
\stackrel{\text{\circledchar{\ref{rEPPp1}}}}{\refstepcounter{rEPP}\label{rEPPp1}\sum\limits_{i=1}^{k-1}{\left(\Expectation{\phi_{i}^{2}}-\Expectation{\phi_{i}}^{2}\right)}}
+ 2\stackrel{\text{\circledchar{\ref{rEPPp2}}}}{\refstepcounter{rEPP}\label{rEPPp2}
\sum\limits_{i_{1}=1}^{k-1}\sum\limits_{i_{2} = i_{1}+1}^{k-1} \left(\Expectation{\phi_{i_{1}}\phi_{i_{2}}}-\Expectation{\phi_{i_{1}}}\Expectation{\phi_{i_{2}}}\right)}
 \Bigg)
\\ \notag&& + 2 \sum\limits_{k_{1}=1}^{n-1}\sum\limits_{k_{2} = k_{1}+1}^{n-1} \pi_{k_{1},n}\pi_{k_{2},n} 
\Bigg(
\sum\limits_{i=1}^{k_{1}-1}\left(\Expectation{\phi_{i}^{2}}-\Expectation{\phi_{i}}^{2}\right){\tiny\text{\circledchar{\ref{rEPPp3}}}}\refstepcounter{rEPP}\label{rEPPp3}
\\ \notag&& 
+ 2 \stackrel[{\text{\circledchar{\ref{rEPPp4}}}}]{}{\refstepcounter{rEPP}\label{rEPPp4}
\sum\limits_{i_{1}=1}^{k_{1}-1}\sum\limits_{i_{2} = i_{1}+1}^{k_{1}-1}\left( \Expectation{\phi_{i_{1}}\phi_{i_{2}}}-\Expectation{\phi_{i_{1}}}\Expectation{\phi_{i_{2}}}\right)}
+
\stackrel[{\text{\circledchar{\ref{rEPPp5}}}}]{}{\refstepcounter{rEPP}\label{rEPPp5}
\sum\limits_{i_{1}=1}^{k_{1}-1}
\sum\limits_{i_{2}=k_{1}}^{k_{2}-1} \left(\Expectation{\phi_{i_{1}}\phi_{i_{2}}}-\Expectation{\phi_{i_{1}}}\Expectation{\phi_{i_{2}}}\right)}
 \Bigg).
\end{eqnarray}
To aid intuition, we point out that cases \ref{rEPPp1} and \ref{rEPPp2} correspond to the case when
the two pairs of tips coalesce at the same node $k$ while cases \ref{rEPPp3}--\ref{rEPPp5}
when at different nodes, $k_{1}<k_{2}$.
We first observe 

$$
\begin{array}{l}
\Expectation{\phi_{i}^{2}} - \Expectation{\phi_{i}}^{2} = \frac{2p}{i+1}\left(\frac{b_{n,4\alpha}}{b_{i,4\alpha}}-\frac{2p}{i+1}\left(\frac{b_{n,2\alpha}}{b_{i,2\alpha}} \right)^{2} \right)
 =
\frac{2p}{i+1}\left(\frac{(i+1)^{2}}{(i+1+2\alpha)^{2}}\frac{(i+1)+(4\alpha-1)+(i+1)^{-1}4\alpha(\alpha-1)}{(i+1+4\alpha)}\frac{b_{n,4\alpha}}{b_{i+4\alpha}} 
\right. \\ \left.
 + 4\alpha^{2}\frac{b_{n,4\alpha}}{b_{i,2\alpha}^{2}}\sum\limits_{j=i+2}^{n-1}\frac{b_{j,2\alpha}^{2}}{b_{j,4\alpha}}
\frac{1}{j(j+4\alpha)}
+\left(\frac{b_{n,2\alpha}}{b_{i,2\alpha}}\right)^{2} 
\frac{n(1-2p)+4\alpha(1-2p)+n^{-1}4\alpha^{2}}{n+4\alpha}
\right) 
\end{array}
$$
and

\be\label{eqCovphi1phi2}
\begin{array}{l}
\Expectation{\phi_{i_{1}}\phi_{i_{2}}} - \Expectation{\phi_{i_{1}}}\Expectation{\phi_{i_{2}}} = 
\frac{4p^{2}}{(i_{1}+1)(i_{2}+1)}\left(
\frac{b_{n,4\alpha}}{b_{i_{2},4\alpha}}\frac{b_{i_{2},2\alpha}}{b_{i_{1},2\alpha}} -
\left(\frac{b_{n,2\alpha}}{b_{i_{1},2\alpha}}\right)\left(\frac{b_{n,2\alpha}}{b_{i_{2},2\alpha}}\right)
\right)
\\ =
\frac{4p^{2}}{(i_{1}+1)(i_{2}+1)}\frac{b_{n,4\alpha}b_{i_{2},2\alpha}}{b_{i_{1},2\alpha}b_{i_{2},2\alpha}^{2}}\left(
\sum\limits_{j=i_{2}+1}^{n}\frac{b_{j,2\alpha}^{2}}{b_{j,4\alpha}}
\frac{4\alpha^{2}}{j(j+4\alpha)}
\right) .
\end{array}
\ee
Using the above, we consider each of the five components in this sum separately. 
\begin{enumerate}[label=\Roman*]
\item[\text{\circledchar{\ref{rEPPp1}}}]

$$
\begin{array}{ll}
&\sum\limits_{k=1}^{n-1} \pi_{k,n}^{2} \sum\limits_{i=1}^{k-1}\left(\Expectation{\phi_{i}^{2}} - \Expectation{\phi_{i}}^{2}\right)
\\ ~ \\ \lesssim&
pC n^{-4\alpha}
\sum\limits_{i=1}^{n} 
\left(i^{4\alpha-1}+(4\alpha-1)i^{4\alpha-2}+4\alpha(\alpha-1)i^{4\alpha-3}
 + 4\alpha^{2}i^{4\alpha-2} 
+(1-2p) i^{4\alpha-1} \right)
\sum\limits_{k=i+1}^{n}k^{-4}
\\ ~ \\ \lesssim&
p  C
\left \{
\begin{array}{cc}
 n^{-4\alpha}& 0 < \alpha < 0.75    \\
n^{-3}\ln n& \alpha = 0.75 \\
n^{-3} &  0.75< \alpha
  \end{array} \right.
\end{array}
$$
\item[\text{\circledchar{\ref{rEPPp2}}}]

$$
\begin{array}{ll}
&\sum\limits_{k=1}^{n-1} \pi_{k,n}^{2} \sum\limits_{i_{1}=1}^{k-1}\sum\limits_{i_{2}=i_{1}+1}^{k-1}\left(\Expectation{\phi_{i_{1}}\phi_{i_{2}}} - \Expectation{\phi_{i_{1}}}\Expectation{\phi_{i_{2}}}\right)
\lesssim
p^{2}Cn^{-4\alpha}\sum\limits_{k=1}^{n} k^{-4}
\sum\limits_{i_{1}=1}^{k}i_{1}^{2\alpha-1}\sum\limits_{i_{2}=i_{1}+1}^{k}i_{2}^{2\alpha-2}
\\ ~ \\ \lesssim &
C p^{2}\left \{
\begin{array}{cc}
n^{-4\alpha} \sum\limits_{i_{1}=1}^{n}i_{1}^{4\alpha-2} \sum\limits_{k=i_{1}+1}^{n} k^{-4}   & 0 < \alpha < 0.5 \\
n^{-2}\sum\limits_{k=1}^{n} k^{-4} \sum\limits_{i_{2}=2}^{k}  1  &  \alpha=0.5  \\
n^{-4\alpha} \sum\limits_{i_{1}=1}^{n}i_{1}^{4\alpha-2}  \sum\limits_{k=i_{1}+1}^{n} k^{-4}  & 0.5 < \alpha  
\end{array} \right.
\lesssim 
C p^{2}
\left \{
\begin{array}{cc}
n^{-4\alpha}  & 0 < \alpha < 1 \\
n^{-4}\ln n & \alpha = 1 \\
n^{-4}  & 1< \alpha
\end{array} \right.
\end{array}
$$

\item[\text{\circledchar{\ref{rEPPp3}}}]

$$
\begin{array}{ll}
& \sum\limits_{k_{1}=1}^{n-1} \sum\limits_{k_{2} = k_{1}+1}^{n-1} \pi_{k_{1},n}\pi_{k_{2},n} 
\sum\limits_{i=1}^{k_{1}-1} \left(\Expectation{\phi_{i}^{2}} - \Expectation{\phi_{i}}^{2}\right)
\\ ~ \\ \lesssim & 
pCn^{-4\alpha} \sum\limits_{i=1}^{n}  \left(i^{4\alpha-1} +(4\alpha-1)i^{4\alpha -2}+4\alpha(\alpha-1)i^{4\alpha -3}  + 4\alpha^{2}i^{4\alpha-2}+ (1-2p)i^{4\alpha-1}\right)\sum\limits_{k_{1}=i+1}^{n} k_{1}^{-3}
\\ ~ \\ \lesssim &
p(1-p) C
\left \{
\begin{array}{cc}
n^{-4\alpha} & 0< \alpha <0.5 \\
n^{-2}\ln n& \alpha = 0.5 \\
n^{-2}& 0.5 < \alpha
\end{array} \right.
\end{array}
$$
\item[\text{\circledchar{\ref{rEPPp4}}}]

$$
\begin{array}{ll}
&\sum\limits_{k_{1}=1}^{n-1}\sum\limits_{k_{2} = k_{1}+1}^{n-1} \pi_{k_{1},n}\pi_{k_{2},n} 
\sum\limits_{i_{1} = 1}^{k_{1}-1}\sum\limits_{i_{2} = i_{1}+1}^{k_{1}-1}
\left( \Expectation{\phi_{i_{1}}\phi_{i_{2}}}-\Expectation{\phi_{i_{1}}}\Expectation{\phi_{i_{2}}}\right)
\\ \lesssim &
p^{2}Cn^{-4\alpha} \sum\limits_{k_{1} =1 }^{n}\sum\limits_{k_{2} = k_{1}+1}^{n} k_{1}^{-2}k_{2}^{-2}
\sum\limits_{i_{1} =1}^{k_{1}}\sum\limits_{i_{2} = i_{1}+1}^{k_{1}}\left(  i_{1}^{2\alpha-1}i_{2}^{2\alpha-2} \right)
\lesssim 
p^{2}C\left \{
\begin{array}{cc}
n^{-4\alpha}  & 0< \alpha< 0.75   \\
n^{-3}\ln n & \alpha  = 0.75 \\
n^{-3} & 0.75 < \alpha  
\end{array} \right.
\end{array}
$$
\item[\text{\circledchar{\ref{rEPPp5}}}]

$$
\begin{array}{ll}
&\sum\limits_{k_{1} =1}^{n-1}\sum\limits_{k_{2} = k_{1}+1}^{n-1} \pi_{k_{1},n}\pi_{k_{2},n} 
\sum\limits_{i_{1}=1}^{k_{1}-1}\sum\limits_{i_{2}=k_{1}}^{k_{2}-1} \left(\Expectation{\phi_{i_{1}}\phi_{i_{2}}}-\Expectation{\phi_{i_{1}}}\Expectation{\phi_{i_{2}}}\right)
\\ ~ \\ \lesssim & 
p^{2}Cn^{-4\alpha} \sum\limits_{k_{1}=1}^{n}
\sum\limits_{k_{2} = k_{1}+1}^{n} k_{1}^{-2}k_{2}^{-2} \sum\limits_{i_{1}=1}^{k_{1}}\sum\limits_{i_{2}=k_{1}}^{k_{2}} \left( i_{1}^{2\alpha-1}i_{2}^{2\alpha-2} \right)
\\ ~\\ \lesssim & 
p^{2}Cn^{-4\alpha}\left\{
\begin{array}{cc}
\sum\limits_{i_{1} = 1}^{n} i_{1}^{2\alpha-1}\sum\limits_{k_{1}=i_{1}+1}^{n}   k_{1}^{-2}
\left( 
\sum\limits_{i_{2} = k_{1} }^{n} i_{2}^{2\alpha-2}  \sum\limits_{k_{2}=i_{2}+1}^{n} k_{2}^{-2}
\right)
& \alpha \notin \{0.5,1\} \\
\sum\limits_{k_{1} = 1}^{n} k_{1}^{-1}
\left( 
\sum\limits_{k_{2} = k_{1}+1}^{n} k_{2}^{-2}H_{k_{2}}  
\right)
& \alpha = 0.5\\
\frac{1}{2}
\sum\limits_{1=k_{1} < k_{2}}^{n} k_{2}^{-1}  & \alpha =1 
\end{array}
\right.
\\ ~ \\ \lesssim &
p^{2} C \left \{
\begin{array}{cc}
n^{-2} & \alpha = 0.5 \\
n^{-4\alpha}\sum\limits_{i_{1} = 1}^{n} i_{1}^{2\alpha-1}\sum\limits_{k_{1}=i_{1}+1}^{n}   k_{1}^{-2\alpha-4} &  \alpha\in (0,1)\setminus \{0.5\} \\
n^{-3} &  \alpha = 1  \\
n^{-4\alpha}\sum\limits_{i_{1} = 1}^{n} i_{1}^{2\alpha-1}\sum\limits_{k_{1}=i_{1}+1}^{n}   k_{1}^{2\alpha-4}  & 1 < \alpha  
\end{array} \right.
\lesssim  
p^{2} C\left \{
\begin{array}{cc}
n^{-4\alpha} &  0 < \alpha \le 0.75 \\
n^{-3}\ln n &  \alpha = 0.75 \\
n^{-3} &  0.75 \le \alpha   .
\end{array} \right.
\end{array}
$$
\end{enumerate}
Putting \ref{rEPPp1}--\ref{rEPPp5} together we obtain

$$
\variance{\Expectation{\Psi^{(n)}\vert \mathcal{Y}_{n}}} \lesssim
p(1-p)C \left\{
\begin{array}{cc}
n^{-4\alpha} & 0 < \alpha < 0.5 \\
n^{-2}\ln n &  \alpha = 0.5  \\
n^{-2} & 0.5 < \alpha.
\end{array}
\right.
$$
The variance is bounded from below by 
\ref{rEPPp3} and as these derivations 
are correct up to a constant (independent of $p$)
the variance behaves as above. This is further illustrated by graphs
in Fig. \ref{figEPsi}.

\begin{figure}[!ht]
\begin{center}
\includegraphics[width=0.3\textwidth]{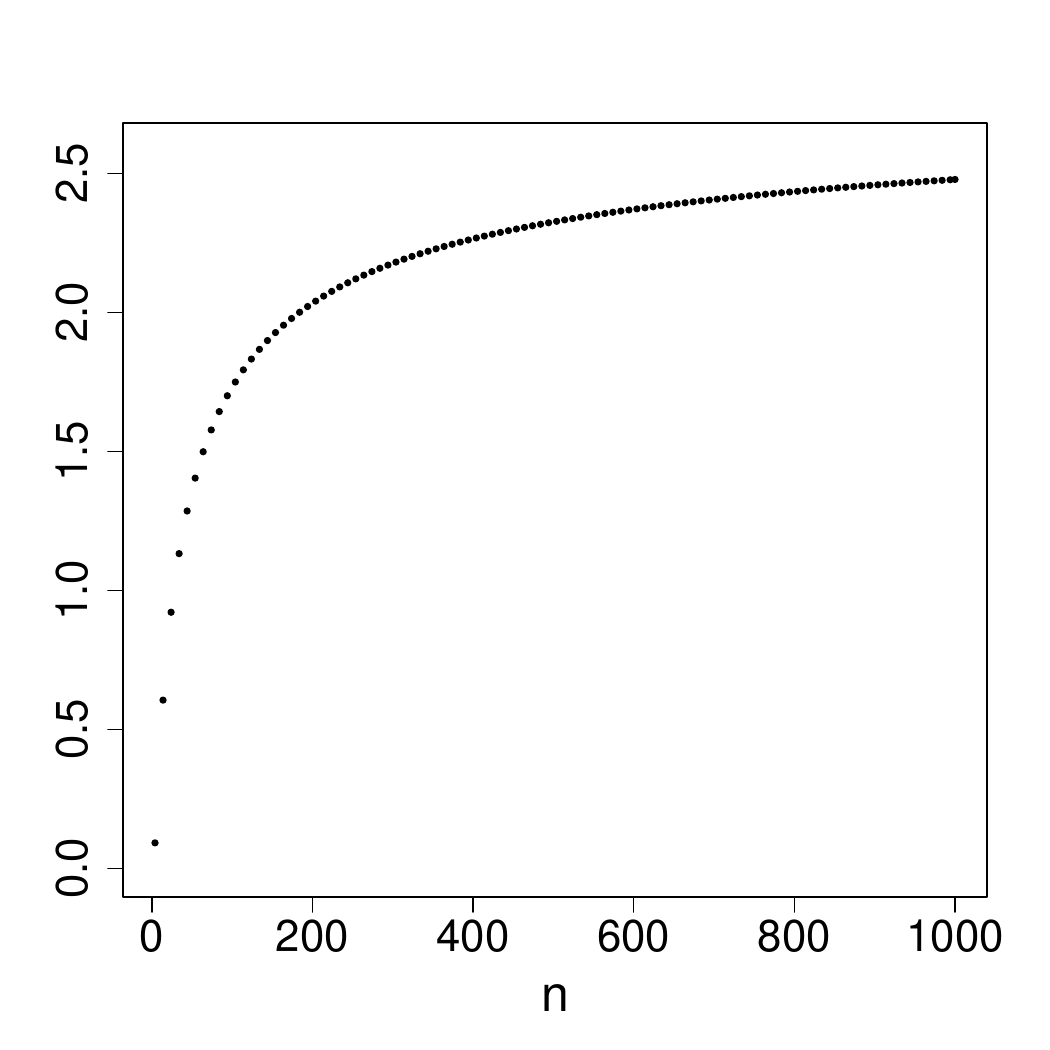}
\includegraphics[width=0.3\textwidth]{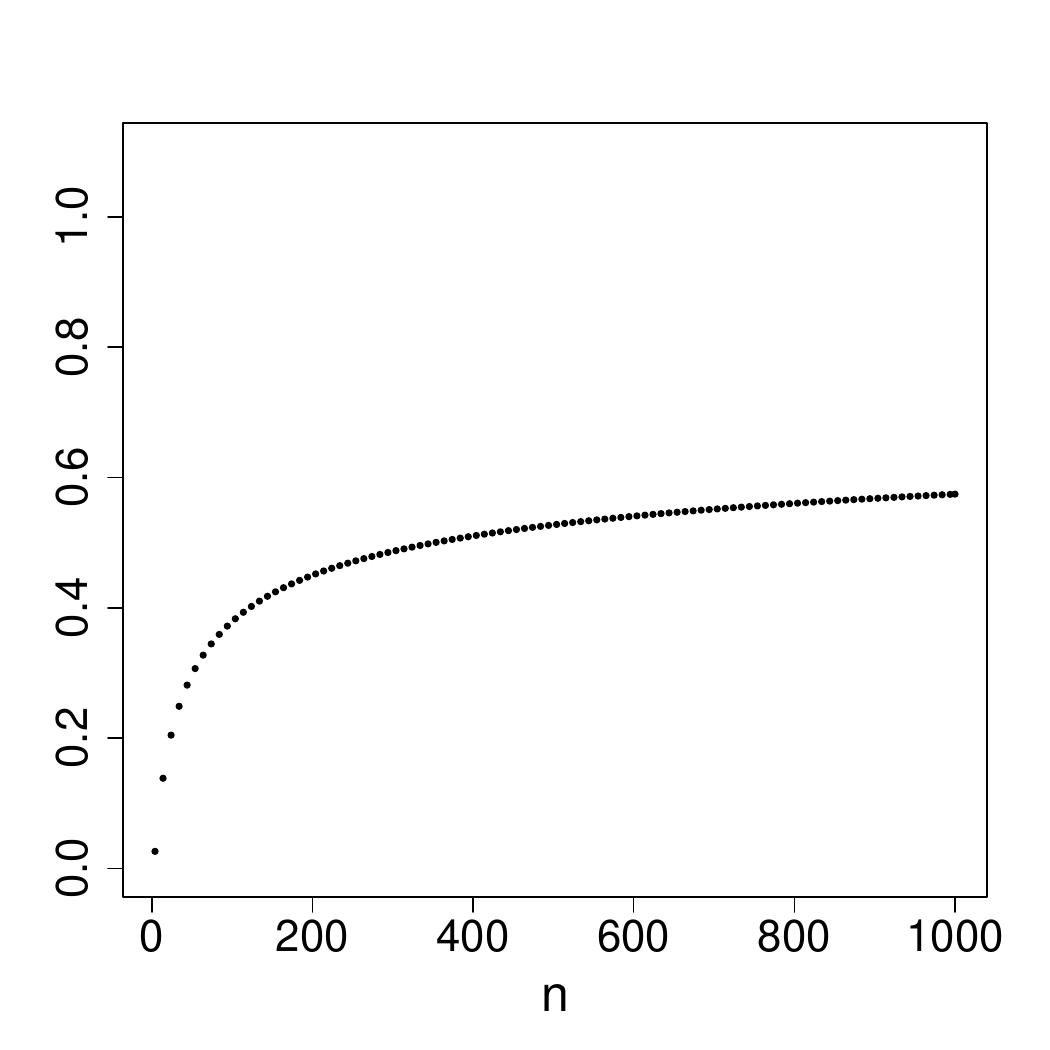}
\includegraphics[width=0.3\textwidth]{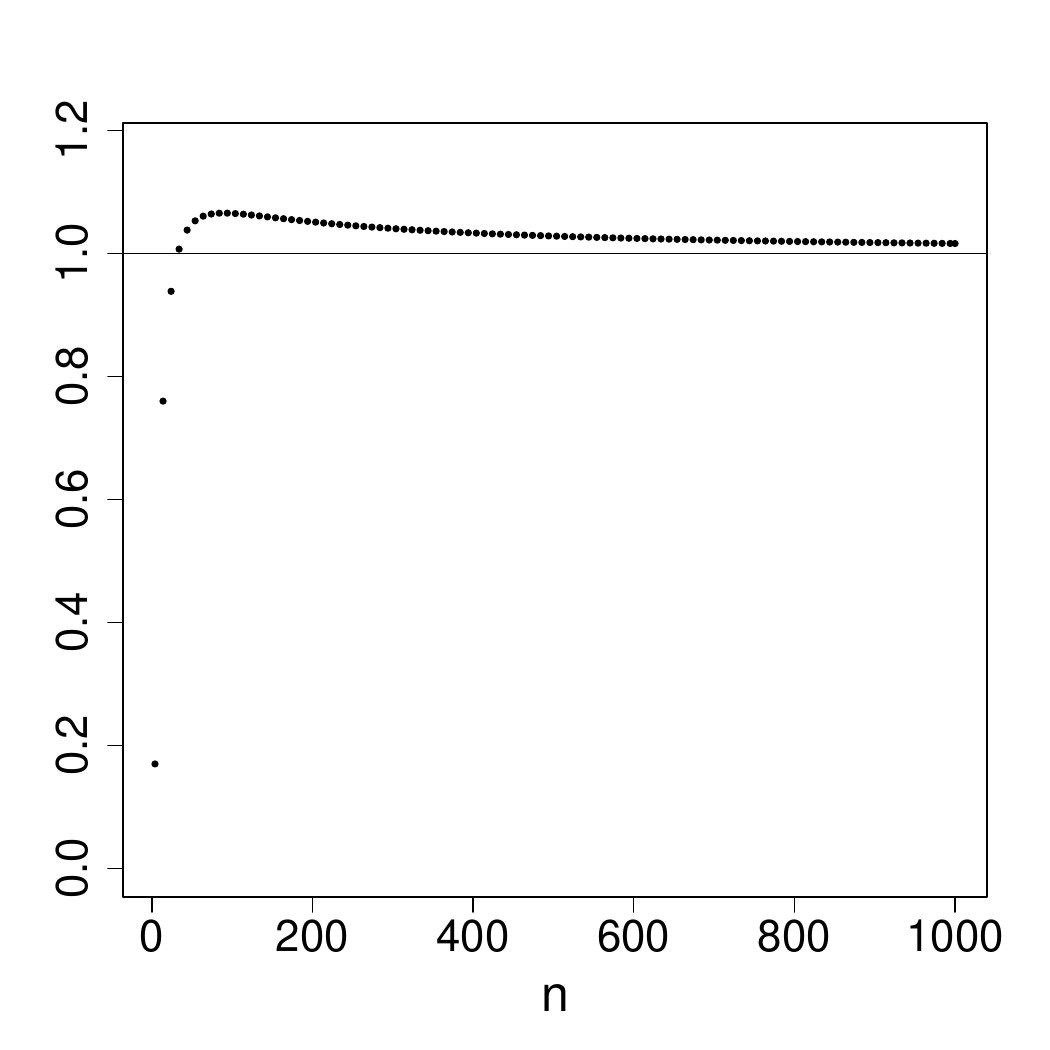}
\caption{
Numerical evaluation of scaled Eq. \eqref{eqVarPsi} for
different values of $\alpha$. The scaling for
left: $\alpha=0.35$ equals $n^{-4\alpha}$,
centre: $\alpha=0.5$ equals $16p(1-p)n^{-2}\log n$
and right $\alpha=1$ equals 
$(32p(1-p)/((4\alpha-2)(4\alpha-1)(4\alpha)))n^{-2}.$
In all cases, $p=0.5$.
The value of the leading constant comes from a careful treatment 
of the summation in Lemma \ref{lemvarEexpPhi}, component \ref{rEPPp3}.
The sums (centre and right panel) are approximated by definite integrals and the leading constant
resulting from the integration is remembered. In the $\alpha=0.5$ case the convergence is very slow.
}\label{figEPsi}
\end{center}
\end{figure}
\end{proof}

\begin{figure}[!ht]
\begin{center}
\includegraphics[width=0.5\textwidth]{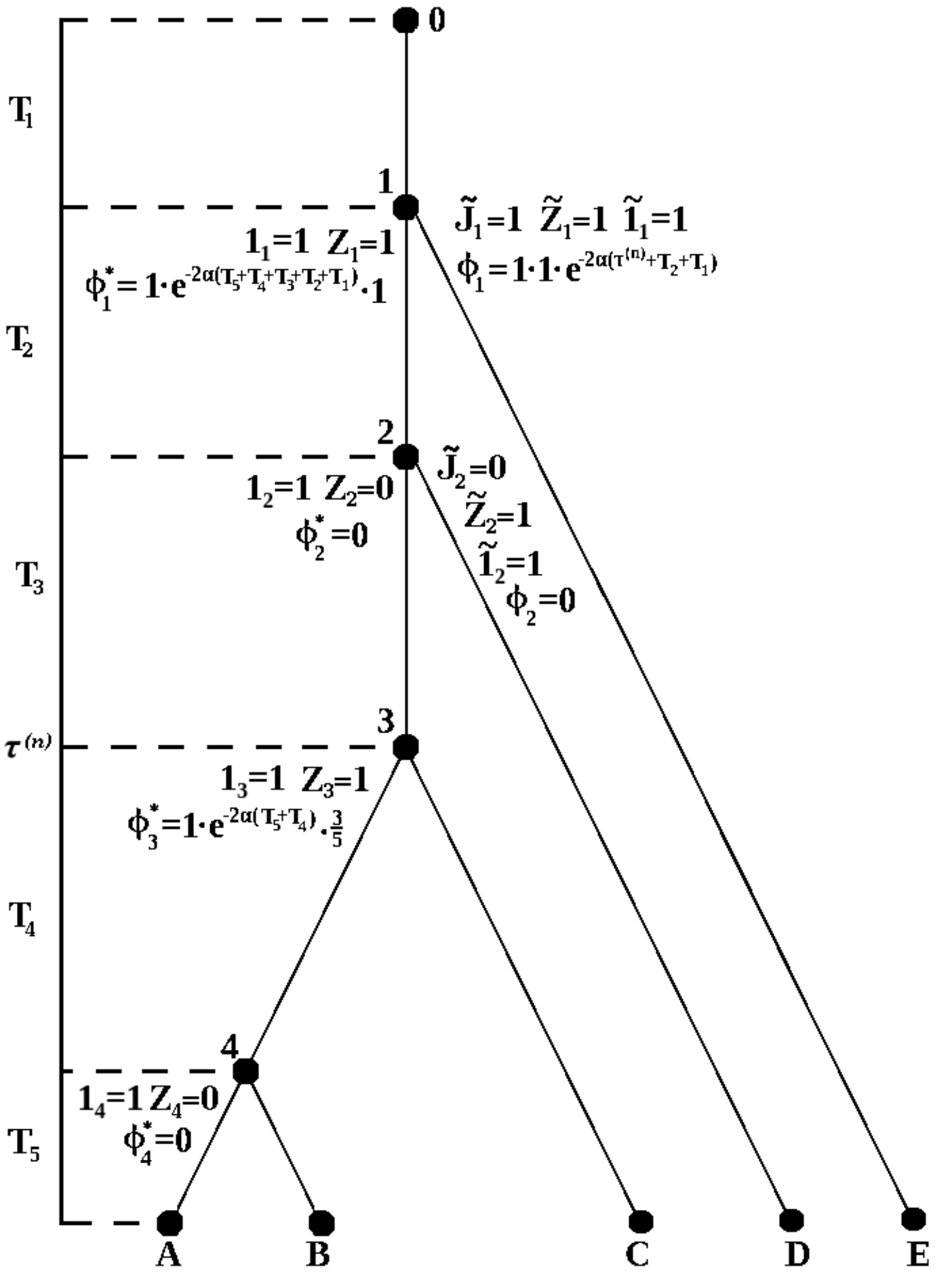}
\caption{
Illustration of the key random variables used in
Lemmata \ref{lemvarEexpPsi}, \ref{lemvarEexpPhi}
and defined in Section \ref{secNotation}.
We ``randomly sample'' (out of the five) the lineage leading to
tip A and the pair of tips (A,C) out of $\binom{5}{2}$ possible.
As jumps take place just \emph{after} speciation events there
is no associated jump at the third speciation event for
the (A,C) pair. We have 
$\Expectation{\mathbf{1}_{3} \vert \mathcal{Y}_{n}}=0.6$
as it would be one for three (A,B, or C) randomly 
sampled lineages out of the five possible. 
One should remember that for an OU process, $X(\cdot)$, for $s<t$ 
one has $\Expectation{X(t) \vert X(s)} = e^{-\alpha (t-s)}X(s)+(1-e^{-\alpha (t-s)})\theta$,
hence all contributions of the jumps to the variance and covariance are modified  by $e^{-2\alpha t}$,
where $t$ is the distance from the jump to the tip. Intuitively writing, the variable 
$\Psi^{\ast^{(n)})}$ will then be (for $\sigma_{c,k}^{2}=1$) 
the contribution of the jumps to the variance of the randomly sampled lineage,
while $\Psi^{(n)}$ will then be (for $\sigma_{c,k}^{2}\equiv 1$) 
the contribution of the jumps to the variance of the randomly sampled lineage.
}\label{figPhiPsi}
\end{center}
\end{figure}

\begin{remark}
In Lemma \ref{lemvarEexpPhi} we assumed that $0<p<1$. The case of 
$p=0$ is trivial, as then for all $i$, $\tilde{J}_{i}=0$ and hence the 
variance will be $0$. The case $p=1$ is more interesting. It means that there
will be a jump on each lineage after each speciation event. This however
implies that the variability due to the uncertainty, if a jump did or did not
take place, disappears. Hence, a faster rate of convergence will be present in 
component \ref{rEPPp3}. It will be $n^{-4\alpha}$ for $0<\alpha<0.75$, $n^{-3}\ln n$
for $\alpha=0.75$ and $n^{-3}$ for $\alpha>0.75$, i.e. same as in components
\ref{rEPPp1}, \ref{rEPPp4} and \ref{rEPPp5}.
\end{remark}

The proof of the next Corollary, \ref{corvarEexpPhi},
is exactly the same as of Corollary \ref{corvarEexpPsi}.
\begin{corollary}\label{corvarEexpPhi}
Let $p_{k}$ and $\sigma_{c,k}^{2}$ be respectively the jump probability and 
variance at the $n$--th speciation event, such that the sequence $\sigma_{c,k}^{4}p_{k}(1-p_{k})$ is bounded. 
We have 

$$
\begin{array}{rcccl}
n^{2}\ln^{-1}n \variance{\sum\limits_{i=1}^{n-1} \sigma_{c,i}^{2}\phi_{i}}& \to & 0 & \mathrm{for}~ \alpha = 0.5, \\
n^{2} \variance{\sum\limits_{i=1}^{n-1} \sigma_{c,i}^{2}\phi_{i}}& \to & 0 & \mathrm{for}~  0.5 < \alpha.
\end{array}
$$
iff $\sigma_{c,k}^{4}p_{k}(1-p_{k}) \to 0$ with density $1$.
\end{corollary}

\begin{lemma}\label{lemcovEexpUnPhi}
For random variables $U^{(n)}$, $\Psi^{(n)}$
and a fixed jump probability $p$

\be
\covariance{e^{-2\alpha U^{(n)}}}{\Expectation{\Psi^{(n)}\vert \mathcal{Y}_{n}}} 
\lesssim 
p C \left \{
\begin{array}{cc}
n^{-4\alpha} & \alpha < 0.5 \\
n^{-2}\ln n &  \alpha = 0.5  \\
n^{-(2\alpha +1)}  & 0.5 < \alpha  
\end{array} \right. .
\ee
\end{lemma}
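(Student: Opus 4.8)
The plan is to exploit that the tree height $U^{(n)}$ is $\mathcal{Y}_{n}$--measurable, so the covariance collapses to two ordinary (unconditional) moments, and then to recognise the very same telescoping structure already used in the proofs of Lemmata \ref{lem11}--\ref{lemvarEexpPhi}. First I would use the tower property together with the $\mathcal{Y}_{n}$--measurability of $e^{-2\alpha U^{(n)}}$, which gives $\Expectation{e^{-2\alpha U^{(n)}}\Expectation{\Psi^{(n)}\vert\mathcal{Y}_{n}}}=\Expectation{e^{-2\alpha U^{(n)}}\Psi^{(n)}}$, so that
$$
\covariance{e^{-2\alpha U^{(n)}}}{\Expectation{\Psi^{(n)}\vert \mathcal{Y}_{n}}}
= \Expectation{e^{-2\alpha U^{(n)}}\Psi^{(n)}} - \Expectation{e^{-2\alpha U^{(n)}}}\Expectation{\Psi^{(n)}}.
$$
The product term is immediate from $\Expectation{e^{-2\alpha U^{(n)}}}=b_{n,2\alpha}$ and, writing $\Psi^{(n)}=\sum_{i=1}^{n-1}\phi_{i}$ with $\phi_{i}=\tilde{Z}_{i}\tilde{\mathbf{1}}_{i}e^{-2\alpha(T_{n}+\ldots+T_{i+1})}$, from $\Expectation{\Psi^{(n)}}=\sum_{i=1}^{n-1}\frac{2p}{i+1}\frac{b_{n,2\alpha}}{b_{i,2\alpha}}$.

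The key step is the cross moment. Recalling $U^{(n)}=T_{1}+\ldots+T_{n}$, the crucial observation is that the inter-speciation times $T_{i+1},\ldots,T_{n}$ occur in \emph{both} exponents, so that $e^{-2\alpha U^{(n)}}e^{-2\alpha(T_{n}+\ldots+T_{i+1})}=e^{-2\alpha(T_{1}+\ldots+T_{i})}e^{-4\alpha(T_{i+1}+\ldots+T_{n})}$, a mixture of the rates $2\alpha$ and $4\alpha$. Since in a Yule tree the topology indicator $\tilde{\mathbf{1}}_{i}$ and the jump indicator $\tilde{Z}_{i}$ are independent of the independent exponential $T_{k}$, with $\Expectation{\tilde{\mathbf{1}}_{i}}=2/(i+1)$ and $\Expectation{\tilde{Z}_{i}}=p$, I would obtain $\Expectation{e^{-2\alpha U^{(n)}}\Psi^{(n)}}=\sum_{i=1}^{n-1}\frac{2p}{i+1}\,b_{i,2\alpha}\,\frac{b_{n,4\alpha}}{b_{i,4\alpha}}$. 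Subtracting the product term and factoring $b_{i,2\alpha}^{-1}$ leaves, for each $i$, the quantity $b_{i,2\alpha}^{2}\big(\frac{b_{n,4\alpha}}{b_{i,4\alpha}}-(\frac{b_{n,2\alpha}}{b_{i,2\alpha}})^{2}\big)$, to which I apply the telescoping identity
$$
\frac{b_{n,4\alpha}}{b_{i,4\alpha}}-\left(\frac{b_{n,2\alpha}}{b_{i,2\alpha}}\right)^{2}
=\frac{b_{n,4\alpha}}{b_{i,2\alpha}^{2}}\sum_{j=i+1}^{n}\frac{b_{j,2\alpha}^{2}}{b_{j,4\alpha}}\frac{4\alpha^{2}}{j(j+4\alpha)},
$$
the same device underlying $A_{n,k}$ in Lemma \ref{lem11} and \eqref{eqCovphi1phi2}. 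This yields the exact expression
$$
\covariance{e^{-2\alpha U^{(n)}}}{\Expectation{\Psi^{(n)}\vert\mathcal{Y}_{n}}}
= \sum_{i=1}^{n-1}\frac{2p}{i+1}\frac{b_{n,4\alpha}}{b_{i,2\alpha}}\sum_{j=i+1}^{n}\frac{b_{j,2\alpha}^{2}}{b_{j,4\alpha}}\frac{4\alpha^{2}}{j(j+4\alpha)}.
$$

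Finally I would insert the standard asymptotics $b_{m,x}\sim\Gamma(x+1)m^{-x}$, giving $\frac{b_{n,4\alpha}}{b_{i,2\alpha}}\lesssim Cn^{-4\alpha}i^{2\alpha}$, $\frac{b_{j,2\alpha}^{2}}{b_{j,4\alpha}}\lesssim C$ and $\frac{4\alpha^{2}}{j(j+4\alpha)}\lesssim Cj^{-2}$, so that the bound reduces to $pCn^{-4\alpha}\sum_{i=1}^{n-1}i^{2\alpha-1}\sum_{j=i+1}^{n}j^{-2}$. Using $\sum_{j=i+1}^{n}j^{-2}\lesssim i^{-1}$ this collapses to $pCn^{-4\alpha}\sum_{i=1}^{n-1}i^{2\alpha-2}$, and the three stated regimes drop out by comparing the exponent $2\alpha-2$ with $-1$: the sum is bounded for $\alpha<0.5$ (yielding $n^{-4\alpha}$), is of order $\ln n$ at $\alpha=0.5$ (yielding $n^{-2}\ln n$), and is of order $n^{2\alpha-1}$ for $\alpha>0.5$ (yielding $n^{-(2\alpha+1)}$). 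I expect the only genuinely delicate point to be the bookkeeping in the cross moment, namely identifying precisely which $T_{k}$ are weighted at rate $4\alpha$ rather than $2\alpha$ because of the overlap between $U^{(n)}$ and the exponent inside $\Psi^{(n)}$; once that is settled the remainder is the familiar telescoping-plus-power-sum routine of the preceding lemmata, and in fact simpler here because only a single index $i$ appears rather than the paired $(k_{1},k_{2})$ structure of Lemma \ref{lemvarEexpPhi}.
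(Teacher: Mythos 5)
Your proposal is correct and follows essentially the same route as the paper: the same cross-moment random variable $\tilde{Z}_{i}\mathbf{\tilde{1}}_{i}e^{-4\alpha(T_{n}+\ldots+T_{i+1})-2\alpha(T_{i}+\ldots+T_{1})}$ with expectation $\frac{2p}{i+1}b_{i,2\alpha}b_{n,4\alpha}/b_{i,4\alpha}$, the same telescoping identity from Eq.~\eqref{eqExpPsi3}, and the same reduction to $pCn^{-4\alpha}\sum_{i}i^{2\alpha-2}$ yielding the three regimes. The only cosmetic difference is bookkeeping: the paper routes the pair-sampling randomness through an extra sum over the coalescent node $k$ weighted by $\pi_{k,n}$, while you absorb it directly via $\Expectation{\mathbf{\tilde{1}}_{i}}=2/(i+1)$, which leads to the identical final bound.
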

\begin{proof}
We introduce the random variable 

$$
\bar{\phi}_{i} = \tilde{Z}_{i}\mathbf{\tilde{1}}_{i}e^{-4\alpha \left(T_{n}+\ldots+T_{i+1} \right)-2\alpha \left(T_{i}+\ldots + T_{1}\right)}
$$
and obviously 

$$
\begin{array}{rcl}
\Expectation{\bar{\phi}_{i}}&=& \frac{2p}{i+1}\left(b_{n,4\alpha}/b_{i,4\alpha}\right) b_{i,2\alpha}.
\end{array}
$$
Writing out 

\begin{eqnarray}%{rcl}
\label{eqCovUPsi}
\covariance{e^{-2\alpha U^{(n)}}}{\Expectation{\Psi^{(n)}\vert \mathcal{Y}_{n}}}  &=& 
\Expectation{e^{-2\alpha U^{(n)}}\Psi^{(n)}} - \left(\Expectation{e^{-2\alpha U^{(n)}}}\right)\left(\Expectation{\Psi^{(n)}} \right)
\\ \notag & = & 
\sum\limits_{k=1}^{n-1} \pi_{k,n}\left( \sum\limits_{i=1}^{k-1} \left(\Expectation{\bar{\phi}_{i}} - b_{n,2\alpha}\Expectation{\phi_{i}}\right)
\right)
\\ \notag & = & 
\sum\limits_{k=1}^{n-1} \pi_{k,n}\left( \sum\limits_{i=1}^{k-1} \frac{2p}{i+1} \left(
\frac{b_{n,4\alpha}b_{i,2\alpha}}{b_{i,4\alpha}} -  \frac{b_{n,2\alpha}^{2}}{b_{i,2\alpha}}
\right)\right)
\\ \notag & = & 
\sum\limits_{k=1}^{n-1} \pi_{k,n}\left( \sum\limits_{i=1}^{k-1} \frac{2p}{i+1}b_{i,2\alpha} \left(
\frac{b_{n,4\alpha}}{b_{i,4\alpha}} -  \left(\frac{b_{n,2\alpha}}{b_{i,2\alpha}}\right)^{2}
\right)\right)
\\ \notag & = &  \mathrm{see~Eq.~}\eqref{eqExpPsi3}
\\ \notag & = & 
2pb_{n,4\alpha}\sum\limits_{k=1}^{n-1} \pi_{k,n}\left( \sum\limits_{i=1}^{k-1} 
\frac{1}{i+1}\frac{1}{b_{i,2\alpha}}
\sum\limits_{j=i+1}^{n}\frac{b_{j,2\alpha}^{2}}{b_{j,4\alpha}}\frac{4\alpha^{2}}{j(j+4\alpha)} \right)
\\ \notag & \lesssim & Cpn^{-4\alpha} \sum\limits_{i=1}^{n} i^{2\alpha-1}
\sum\limits_{k=i+1}^{n-1} k^{-2}
\\ \notag & \lesssim &  Cpn^{-4\alpha} \sum\limits_{i=1}^{n} i^{2\alpha-2}
\lesssim
p C \left \{
\begin{array}{cc}
n^{-4\alpha} & \alpha < 0.5 \\
n^{-2}\ln n &  \alpha = 0.5 \\
n^{-2\alpha -1}  & 0.5 < \alpha  
\end{array} \right. .
\end{eqnarray}

\end{proof}

\begin{lemma}\label{lemcovEexptaunPhi}
For random variables
$\tau^{(n)},\Psi^{(n)}$
and a fixed jump probability $p$

\be
\covariance{\Expectation{e^{-2\alpha \tau ^{(n)}}\vert \mathcal{Y}_{n}}}{\Expectation{\Psi^{(n)}\vert \mathcal{Y}_{n}}} \ge 0
\ee
\end{lemma}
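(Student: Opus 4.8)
The plan is to first strip away everything that is irrelevant to the sign. Write $A:=\Expectation{e^{-2\alpha\tau^{(n)}}\vert \mathcal{Y}_{n}}$ and $B:=\Expectation{\Psi^{(n)}\vert \mathcal{Y}_{n}}$, and let $\mathcal{S}\subset\mathcal{Y}_{n}$ be the sub-$\sigma$--algebra carrying only the \emph{tree structure} (topology and the interspeciation times), i.e. $\mathcal{Y}_{n}$ with the jump pattern forgotten. Since $\tau^{(n)}$ does not see the jumps and the jumps are generated independently of the pair sampling and the times, $A=\Expectation{e^{-2\alpha\tau^{(n)}}\vert \mathcal{S}}$ is $\mathcal{S}$--measurable. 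Hence by the tower property $\covariance{A}{B}=\covariance{A}{\Expectation{B\vert\mathcal{S}}}$. Averaging the independent $\mathrm{Bernoulli}(p)$ jumps replaces each $\tilde{Z}_{i}$ by $p$, so with $G_{i}:=T_{n}+\ldots+T_{i+1}$ and the jump--free analogue $\tilde{\Psi}^{(n)}:=\sum_{i}\mathbf{\tilde{1}}_{i}e^{-2\alpha G_{i}}$ one gets $\Expectation{B\vert\mathcal{S}}=p\,\Expectation{\tilde{\Psi}^{(n)}\vert\mathcal{S}}$. Thus $\covariance{A}{B}=p\,\covariance{\Expectation{e^{-2\alpha\tau^{(n)}}\vert\mathcal{S}}}{\Expectation{\tilde{\Psi}^{(n)}\vert\mathcal{S}}}$, and since $p>0$ it suffices to prove the jump--free covariance is non-negative.

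Next I would expand this covariance with the two--independent--copies device already used in Lemma \ref{lem11}, now with copies taken over the pair sampling conditional on $\mathcal{S}$. Writing the two copies as $\tau^{(n)}_{1}$ and $\tilde{\Psi}^{(n)}_{2}=\sum_{i}\mathbf{\tilde{1}}^{(2)}_{i}e^{-2\alpha G_{i}}$, the covariance splits as $\sum_{i}\big(\Expectation{e^{-2\alpha\tau^{(n)}_{1}}\mathbf{\tilde{1}}^{(2)}_{i}e^{-2\alpha G_{i}}}-\Expectation{e^{-2\alpha\tau^{(n)}_{1}}}\Expectation{\mathbf{\tilde{1}}^{(2)}_{i}e^{-2\alpha G_{i}}}\big)$. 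Each summand I would resolve over the coalescent event $k_{1}$ of pair $1$ and over the event that $i$ lies on the common path of pair $2$ (so $i$ is a strict ancestor of the MRCA of pair $2$), using the coalescent probabilities $\pi_{k,n}$ and the first Lemma's $\Expectation{\mathbf{\tilde{1}}_{i}}=2/(i+1)$. Because the two sampled pairs share the portion of the tree below $\max(k_{1},i)$, the joint time factor evaluates to $\frac{b_{n,4\alpha}}{b_{m,4\alpha}}\frac{b_{m,2\alpha}}{b_{l,2\alpha}}$ with $m=\max(k_{1},i)$, $l=\min(k_{1},i)$ — exactly the $4\alpha$--over--shared--segment, $2\alpha$--over--the--rest accounting that produced the cross terms in Lemmas \ref{lem11} and \ref{lemvarEexpPhi}.

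The non-negativity then comes from the same telescoping identity that underlies Eqs.~\eqref{eqExpPsi3} and \eqref{eqCovphi1phi2}, namely
$$\frac{b_{n,4\alpha}}{b_{j,4\alpha}}-\left(\frac{b_{n,2\alpha}}{b_{j,2\alpha}}\right)^{2}=\frac{b_{n,4\alpha}}{b_{j,2\alpha}^{2}}\sum_{k=j+1}^{n}\frac{b_{k,2\alpha}^{2}}{b_{k,4\alpha}}\frac{4\alpha^{2}}{k(k+4\alpha)}\ge 0,$$
which is the analytic expression of the fact that the slower joint decay at rate $4\alpha$ along the shared ancestral segment dominates the product of the two independent $2\alpha$--decays; probabilistically this is just the positive correlation, through the common tree, of the two independent pairs' coalescent/common--path statistics. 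I expect the main obstacle to be bookkeeping rather than conceptual: unlike the preceding rate lemmata, where only the order of magnitude mattered and constants could be bounded crudely with $\lesssim$, here the sign must be preserved, so I must combine the three relative orderings of $k_{1}$ and $i$ carefully and check that every residual term inherits the $+$ sign of the telescoping difference instead of merely being $O(\cdot)$. As a sanity check and partial shortcut I would note an association argument: conditional on the topology and jump pattern, both $A$ and $\Expectation{\tilde{\Psi}^{(n)}\vert\mathcal{S}}$ are coordinatewise non-increasing functions of the independent times $T_{1},\ldots,T_{n-1}$, so the within--times conditional covariance is $\ge 0$ by the Harris/FKG inequality; the only piece this does not cover is the discrete topology contribution (where no monotone coordinate is available), and it is precisely for that piece that I would fall back on the explicit expansion above.
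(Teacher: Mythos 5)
Your proposal is correct and takes essentially the same route as the paper's own proof: two conditionally independent pair samplings given $\mathcal{Y}_{n}$, resolution of the cross moment over the coalescent node of the first pair and the ancestral nodes $i$ of the second pair's MRCA (the paper splits this into four components \ref{rEtauPp1}--\ref{rEtauPp4} according to the relative order of the two coalescent nodes, which your merged $(k_{1},i)$ sum is equivalent to, since the time factor $\frac{b_{n,4\alpha}}{b_{m,4\alpha}}\frac{b_{m,2\alpha}}{b_{l,2\alpha}}$ never depends on $k_{2}$), and termwise positivity via exactly the telescoping identity of Eqs.~\eqref{eqExpPsi3} and \eqref{eqCovphi1phi2}. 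Your preliminary tower-property reduction extracting the factor $p$ (the paper instead carries $p$ linearly through $\Expectation{\phi_{k,i}}$) and the FKG aside—correctly flagged as covering only the times and not the topology contribution—are harmless streamlinings rather than a different argument.
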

\begin{proof}
We introduce the random variable for $i<k$

$$
\phi_{k,i} = \tilde{Z}_{i}\mathbf{\tilde{1}}_{i}e^{-4\alpha \left(T_{n}+\ldots+T_{k+1} \right)
-2\alpha \left(T_{k}+\ldots + T_{i}\right)}
$$
and obviously 

$$
\begin{array}{rcl}
\phi_{k,i}&=& \frac{2p}{i+1}\frac{b_{n,4\alpha}}{b_{k,4\alpha}}\frac{b_{k,2\alpha}}{b_{i,2\alpha}}.
\end{array}
$$
As in the proofs of previous lemmata we denote by $\tau^{(n)}_{1}$ and  
$\Psi^{(n)}_{2}$ realizations of $\tau^{(n)}$ and $\Psi^{(n)}$ that are conditionally
independent given $\mathcal{Y}_{n}$. In other words, given a particular Yule tree
$\tau^{(n)}_{1}$ and  $\Psi^{(n)}_{2}$ will correspond to two independent
choices of pairs of tip species. 
In the below derivations $k_{1}$ will correspond to the node where the random pair,
connected to $\tau^{(n)}_{1}$, coalesced and $k_{2}$ will correspond to the node where the random pair
$\Psi^{(n)}_{2}$ coalesced. Notice that the conditional expectation of
$e^{-2\alpha \tau^{(n)}}$ given that the coalescent took place at node $k_{1}$ is
$b_{n,2\alpha}/b_{k_{1},2\alpha}$. Writing out 

\begin{eqnarray*}%{rcl}
\label{eqCovtauPsi}
\covariance{\Expectation{e^{-2\alpha \tau^{(n)}} \vert \mathcal{Y}_{n}}}{\Expectation{\Psi^{(n)}\vert \mathcal{Y}_{n}}}  &=& 
\Expectation{e^{-2\alpha \tau^{(n)}_{1}}\Psi^{(n)}_{2}} - \left(\Expectation{e^{-2\alpha \tau^{(n)}}}\right)\left(\Expectation{\Psi^{(n)}} \right)
\\ \notag & = & 
\sum\limits_{k=1}^{n-1} \pi_{k,n}^{2}\left( \sum\limits_{i=1}^{k-1} \left(\Expectation{\phi_{k,i}} - \frac{b_{n,2\alpha}}{b_{k,2\alpha}}\Expectation{\phi_{i}}\right)
\right) \stackrel{\text{\circledchar{\ref{rEtauPp1}}}}{\refstepcounter{rEtauP}\label{rEtauPp1}}
\\ \notag && +
\sum\limits_{k_{1}=2}^{n}\sum\limits_{k_{2}=1}^{k_{1}-1} \pi_{k_{1},n}\pi_{k_{2},n}
\left( \sum\limits_{i=1}^{k_{2}-1} \left(\Expectation{\phi_{k_{1},i}} - \frac{b_{n,2\alpha}}{b_{k_{1},2\alpha}}\Expectation{\phi_{i}}\right)
\right) \stackrel{\text{\circledchar{\ref{rEtauPp2}}}}{\refstepcounter{rEtauP}\label{rEtauPp2}}
\\ \notag &&+
\sum\limits_{k_{1}=1}^{n-1}\sum\limits_{k_{2}=k_{1}+1}^{n} \pi_{k_{1},n}\pi_{k_{2},n}
\left( \sum\limits_{i=1}^{k_{1}} \left(\Expectation{\phi_{k_{1},i}} - \frac{b_{n,2\alpha}}{b_{k_{1},2\alpha}}\Expectation{\phi_{i}}\right)
\right) \stackrel{\text{\circledchar{\ref{rEtauPp3}}}}{\refstepcounter{rEtauP}\label{rEtauPp3}}
\\ \notag &&+
\sum\limits_{k_{1}=1}^{n-1}\sum\limits_{k_{2}=k_{1}+1}^{n} \pi_{k_{1},n}\pi_{k_{2},n}
\left( \sum\limits_{i=k_{1}+1}^{k_{2}-1} \left(\Expectation{\phi_{i,k_{1}}} - \frac{b_{n,2\alpha}}{b_{k_{1},2\alpha}}\Expectation{\phi_{i}}\right)
\right) \stackrel{\text{\circledchar{\ref{rEtauPp4}}}}{\refstepcounter{rEtauP}\label{rEtauPp4}}
\end{eqnarray*}
\begin{eqnarray*}
 \notag & = & 
\sum\limits_{k=1}^{n-1} \pi_{k,n}^{2}
\left( \sum\limits_{i=1}^{k-1} \frac{2p}{i+1}\left(
\frac{b_{n,4\alpha}}{b_{k,4\alpha}}\frac{b_{k,2\alpha}}{b_{i,2\alpha}}
 - \frac{b_{n,2\alpha}}{b_{k,2\alpha}}\frac{b_{n,2\alpha}}{b_{i,2\alpha}}\right)\right)
\\ \notag && +
\sum\limits_{k_{1}=2}^{n}\sum\limits_{k_{2}=1}^{k_{1}-1} \pi_{k_{1},n}\pi_{k_{2},n}
\left( \sum\limits_{i=1}^{k_{2}-1}\frac{2p}{i+1} \left(
\frac{b_{n,4\alpha}}{b_{k_{1},4\alpha}}\frac{b_{k_{1},2\alpha}}{b_{i,2\alpha}}
 - \frac{b_{n,2\alpha}}{b_{k_{1},2\alpha}}\frac{b_{n,2\alpha}}{b_{i,2\alpha}}\right)\right)
\\ \notag &&+
\sum\limits_{k_{1}=1}^{n-1}\sum\limits_{k_{2}=k_{1}+1}^{n} \pi_{k_{1},n}\pi_{k_{2},n}
\left( \sum\limits_{i=1}^{k_{1}} \frac{2p}{i+1}\left(
\frac{b_{n,4\alpha}}{b_{k_{1},4\alpha}}\frac{b_{k_{1},2\alpha}}{b_{i,2\alpha}}
 - \frac{b_{n,2\alpha}}{b_{k_{1},2\alpha}}\frac{b_{n,2\alpha}}{b_{i,2\alpha}}\right)\right)
\\ \notag &&+
\sum\limits_{k_{1}=1}^{n-1}\sum\limits_{k_{2}=k_{1}+1}^{n} \pi_{k_{1},n}\pi_{k_{2},n}
\left( \sum\limits_{i=k_{1}+1}^{k_{2}-1} \frac{2p}{i+1}\left(
\frac{b_{n,4\alpha}}{b_{i,4\alpha}}\frac{b_{i,2\alpha}}{b_{k_{1},2\alpha}}
 - \frac{b_{n,2\alpha}}{b_{k_{1},2\alpha}}\frac{b_{n,2\alpha}}{b_{i,2\alpha}}\right)
\right).
\end{eqnarray*}
We may recognize that, after bounding $(i+1)^{-1}$ from below
by appropriately $k^{-1}$, $(k_{1}+1)^{-1}$ or $k_{2}^{-1}$, 
under the sums over $i$ we will have a difference corresponding
to a telescoping sum, i.e. Eq. \eqref{eqTelescopSum}.
This implies that the whole covariance must be positive.
Notice the similarity to the sums 
present in Eqs. \eqref{eqExpPsi3} and \eqref{eqCovphi1phi2}.

We also give intuition how all the individual sums arose. Component
\ref{rEtauPp1} corresponds to the case where both randomly sampled pairs coalesce at the 
same node. Component \ref{rEtauPp2} corresponds to the situation where the random pair
of tips associated with $\tau^{(n)}$ coalesced later (further away from the origin of the tree),
than the random pair associated with $\upsilon^{(n)}$. Components \ref{rEtauPp3} and 
\ref{rEtauPp4} correspond to the opposite situation. In particular 
component \ref{rEtauPp3} is when the ``$i$'' node on the path from the origin 
to node ``$\upsilon^{(n)}$'' is earlier than or at the same node as the coalescent associated with 
$\tau^{(n)}$ and component \ref{rEtauPp4} when later.
\end{proof}

\begin{remark}
Notice that the proof of Lemma \ref{lemcovEexptaunPhi} can easily be continued, in the same fashion
as the proofs of Lemmata \ref{lem11}--\ref{lemcovEexpUnPhi} to
find the rate of the decay to $0$ of $\covariance{\Expectation{e^{-2\alpha \tau ^{(n)}}\vert \mathcal{Y}_{n}}}{\Expectation{\Psi^{(n)}\vert \mathcal{Y}_{n}}}$.
However, in order not to further lengthen the technicalities we remain at showing the sign of the
covariance, as we require only this property.
\end{remark}

\section{Proof of the Central Limit Theorems \ref{thmCLTYOUjpsConst} and \ref{thmCLTYOUjpsae0}}\label{secProof} 
To avoid unnecessary notation it will be always assumed 
that under a given summation sign
the random variables $(\Upsilon^{(n)},\mathrm{I}^{(n)},\left(J_{i}\right)_{i=1}^{\Upsilon^{(n)}})$ are derived
from the same random lineage and 
also $(\upsilon^{(n)},\mathrm{\tilde{I}}^{(n)},\left(\tilde{J}_{i}\right)_{i=1}^{\upsilon^{(n)}})$ are derived
from the same random pair of lineages 
\newpage
\begin{lemma}\label{condYOUjMom}
Conditional on $\mathcal{Y}_{n}$
the first two moments of the scaled sample average are

$$
\begin{array}{rcl}
\Expectation{\overline{Y}_{n} \vert \mathcal{Y}_{n}} & = & \delta e^{-\alpha U^{(n)}} \\
\Expectation{\overline{Y}_{n}^{2} \vert \mathcal{Y}_{n}} & = & 
n^{-1}-(1-\delta^{2})e^{-2\alpha U^{(n)}} +
(1-n^{-1})\Expectation{e^{-2\alpha \tau^{(n)}}\vert\mathcal{Y}_{n}}
\\&&
+
n^{-1} (\sigma_{a}^{2}/(2\alpha))^{-1} \Expectation{\sum\limits_{k=1}^{\Upsilon^{(n)}}
\sigma_{c,\mathrm{I}^{(n)}_{k}}^{2} J_{k} e^{-2\alpha (T_{n}+\ldots+T_{\mathrm{I}^{(n)}_{k}+1})} 
\vert\mathcal{Y}_{n}}
\\&&+(1-n^{-1}) (\sigma_{a}^{2}/(2\alpha))^{-1}
\Expectation{
\sum\limits_{k=1}^{\upsilon^{(n)}}\sigma_{c,\mathrm{\tilde{I}}^{(n)}_{k}}^{2} \tilde{J}_{k} 
e^{-2\alpha (\tau^{(n)}+\ldots+T_{\mathrm{\tilde{I}}_{k}+1})}
\vert \mathcal{Y}_{n} },
\\
\variance{\overline{Y}_{n} \vert \mathcal{Y}_{n}} & = & 
n^{-1}-e^{-2\alpha U^{(n)}} +
(1-n^{-1})\Expectation{e^{-2\alpha \tau^{(n)}}\vert\mathcal{Y}_{n}}
\\&&
+
n^{-1} (\sigma_{a}^{2}/(2\alpha))^{-1} \Expectation{\sum\limits_{k=1}^{\Upsilon^{(n)}}
\sigma_{c,\mathrm{I}^{(n)}_{k}}^{2}J_{k}e^{-2\alpha (T_{n}+\ldots+T_{\mathrm{I}_{k}+1})}  
\vert\mathcal{Y}_{n}}
\\&&+(1-n^{-1}) (\sigma_{a}^{2}/(2\alpha))^{-1}
\Expectation{
\sum\limits_{k=1}^{\upsilon^{(n)}}\sigma_{c,\mathrm{\tilde{I}}^{(n)}_{k}}^{2}
\tilde{J}_{k}e^{-2\alpha (\tau^{(n)}+\ldots+T_{\mathrm{\tilde{I}}^{(n)}_{k}+1})} 
\vert \mathcal{Y}_{n} }.
\end{array}
$$
\end{lemma}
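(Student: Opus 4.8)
\textsc{Proof plan.} The first moment is immediate. Since $Y^{(n)}_r=(X^{(n)}_r-\theta)/\sqrt{\sigma_a^2/2\alpha}$, dividing the recorded conditional mean of $X^{(n)}_r$ by $\sqrt{\sigma_a^2/2\alpha}$ gives $\Expectation{Y^{(n)}_r\vert\mathcal{Y}_n}=\delta e^{-\alpha U^{(n)}}$ for every tip $r$, so averaging over $r$ leaves this unchanged and yields $\Expectation{\overline{Y}_n\vert\mathcal{Y}_n}=\delta e^{-\alpha U^{(n)}}$. For the second moment the plan is to expand $\overline{Y}_n^2=n^{-2}\sum_{r,k}Y^{(n)}_r Y^{(n)}_k$ and split the double sum into the $n$ diagonal terms ($r=k$) and the $n(n-1)$ off--diagonal terms ($r\neq k$), writing $\Expectation{(Y^{(n)}_r)^2\vert\mathcal{Y}_n}=\variance{Y^{(n)}_r\vert\mathcal{Y}_n}+\big(\Expectation{Y^{(n)}_r\vert\mathcal{Y}_n}\big)^2$ on the diagonal and $\Expectation{Y^{(n)}_r Y^{(n)}_k\vert\mathcal{Y}_n}=\covariance{Y^{(n)}_r}{Y^{(n)}_k\vert\mathcal{Y}_n}+\Expectation{Y^{(n)}_r\vert\mathcal{Y}_n}\Expectation{Y^{(n)}_k\vert\mathcal{Y}_n}$ off the diagonal.

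Next I would feed in the conditional variance of $X^{(n)}_r$ and the conditional covariance of $(X^{(n)}_r,X^{(n)}_k)$ displayed just before the lemma, each multiplied by $2\alpha/\sigma_a^2$ to pass to the $Y$--scale; this converts the Brownian part $\tfrac{\sigma_a^2}{2\alpha}(1-e^{-2\alpha U^{(n)}})$ into $1-e^{-2\alpha U^{(n)}}$, the coalescent part into $e^{-2\alpha\tau^{(r,k,n)}}-e^{-2\alpha U^{(n)}}$, and leaves the jump sums carrying the factor $(\sigma_a^2/(2\alpha))^{-1}$. The crucial observation is that, conditionally on $\mathcal{Y}_n$, the tree shape and the entire jump pattern are fixed, so the only remaining randomness in $(\Upsilon^{(n)},\mathrm{I}^{(n)},(J_i))$ is the uniform choice of one lineage among the $n$ tips, and in $(\tau^{(n)},\upsilon^{(n)},\mathrm{\tilde I}^{(n)},(\tilde J_i))$ the uniform choice of one of the $\binom{n}{2}$ pairs. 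Consequently $\tfrac1{n^2}\sum_r$ of any lineage functional equals $n^{-1}$ times its conditional expectation $\Expectation{\,\cdot\,\vert\mathcal{Y}_n}$, while $\tfrac1{n^2}\sum_{r\neq k}$ of a symmetric pair functional equals $\tfrac{n(n-1)}{n^2}=1-n^{-1}$ times its conditional expectation, since summing over ordered pairs doubles the sum over unordered pairs and cancels the $\binom{n}{2}^{-1}$ in the definition of $\Expectation{\,\cdot\,\vert\mathcal{Y}_n}$.

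Substituting and collecting, the diagonal contributions produce $n^{-1}(1-e^{-2\alpha U^{(n)}})$, the squared--mean piece $n^{-1}\delta^2 e^{-2\alpha U^{(n)}}$, and the $n^{-1}$--weighted lineage jump term, while the off--diagonal contributions produce $(1-n^{-1})\big(\Expectation{e^{-2\alpha\tau^{(n)}}\vert\mathcal{Y}_n}-e^{-2\alpha U^{(n)}}\big)$, the squared--mean piece $(1-n^{-1})\delta^2 e^{-2\alpha U^{(n)}}$, and the $(1-n^{-1})$--weighted pair jump term. The only delicate bookkeeping is the coefficient of $e^{-2\alpha U^{(n)}}$: the two variance/covariance pieces contribute $-n^{-1}-(1-n^{-1})=-1$ and the two squared--mean pieces contribute $+n^{-1}+(1-n^{-1})=+1$ weighted by $\delta^2$, combining to $-(1-\delta^2)e^{-2\alpha U^{(n)}}$, exactly as stated. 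This gives the formula for $\Expectation{\overline{Y}_n^2\vert\mathcal{Y}_n}$. The variance then follows from $\variance{\overline{Y}_n\vert\mathcal{Y}_n}=\Expectation{\overline{Y}_n^2\vert\mathcal{Y}_n}-\big(\delta e^{-\alpha U^{(n)}}\big)^2$; subtracting the extra $\delta^2 e^{-2\alpha U^{(n)}}$ turns the coefficient $-(1-\delta^2)$ into $-1$, leaving the claimed expression.

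The main obstacle is not any single computation but the identification in the second paragraph: one must verify that averaging a lineage-- or pair--indexed quantity over the tips, respectively ordered pairs, of a fixed realization of $\mathcal{Y}_n$ reproduces precisely the conditional expectation $\Expectation{\,\cdot\,\vert\mathcal{Y}_n}$ of the associated sampling random variable, and that the weights $n^{-1}$ and $(1-n^{-1})$ arise correctly from the $n$ diagonal versus $n(n-1)$ off--diagonal terms against the $n^{-2}$ prefactor. Once this correspondence is pinned down, the rest is the routine collection of the $e^{-2\alpha U^{(n)}}$ coefficients described above.
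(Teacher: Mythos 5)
Your proposal is correct and follows essentially the same route as the paper: decompose the (sum's) second moment into diagonal conditional variances and off--diagonal conditional covariances, identify tip and pair averages over the fixed realization of $\mathcal{Y}_{n}$ with the conditional expectations of the uniformly sampled lineage and pair functionals (yielding the $n^{-1}$ and $1-n^{-1}$ weights), and pass between $\Expectation{\overline{Y}_{n}^{2}\vert\mathcal{Y}_{n}}$ and $\variance{\overline{Y}_{n}\vert\mathcal{Y}_{n}}$ via the squared conditional mean $\delta^{2}e^{-2\alpha U^{(n)}}$. The only cosmetic difference is ordering: the paper computes $\variance{Y_{1}+\ldots+Y_{n}\vert\mathcal{Y}_{n}}$ first and notes the second moment follows, whereas you derive the second moment first and subtract --- the two are the same identity read in opposite directions.
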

\begin{proof}
The first equality is immediate. The variance follows from 

$$
\begin{array}{rcl}
\variance{Y_{1}+\ldots+Y_{n} \vert \mathcal{Y}_{n}} &=&
n(1-e^{-2\alpha U^{(n)}}) + 
(\sigma_{a}^{2}/(2\alpha))^{-1} 
\sum\limits_{i=1}^{n} \sum\limits_{k=1}^{\Upsilon^{(i,n)}}
\sigma_{c,\mathrm{I}_{k}^{(i,n)}}^{2}J_{k}^{(i,n)}e^{-2\alpha (T_{n}+\ldots+T_{\mathrm{I}_{k}^{(i,n)}})}
\\&&
+2\sum\limits_{i=1}^{n}\sum\limits_{j=i+1}^{n}\left(
(e^{-2\alpha \tau^{(i,j,n)}} - e^{-2\alpha U^{(n)}})+
\right.\\&&\left.
(\sigma_{a}^{2}/(2\alpha))^{-1} 
\sum\limits_{k=1}^{\upsilon^{(i,j,n)}} \sigma_{c,\mathrm{I}_{k}^{(i,j,n)}}^{2} J_{k}^{(i,j,n)} e^{-2\alpha (\tau^{(i,j,n)}+\ldots+T_{\mathrm{I}_{k}^{(i,j,n)}})}
\right)
\\&=&
n-n^{2}e^{-2\alpha U^{(n)}} +
n(n-1)\Expectation{e^{-2\alpha \tau^{(n)}}\vert\mathcal{Y}_{n}}
\\~\\&&
+
n (\sigma_{a}^{2}/(2\alpha))^{-1} 
\Expectation{\sum\limits_{k=1}^{\Upsilon^{(n)}}
\sigma_{c,\mathrm{I}^{(n)}_{k}}^{2}J_{k}e^{-2\alpha (T_{n}+\ldots+T_{\mathrm{I}^{(n)}_{k}+1})}
\vert\mathcal{Y}_{n}}
\\~\\&&+n(n-1)(\sigma_{a}^{2}/(2\alpha))^{-1}  
\Expectation{
\sum\limits_{k=1}^{\upsilon^{(n)}}\sigma_{c,\mathrm{\tilde{I}}^{(n)}_{k}}^{2}\tilde{J}_{k}e^{-2\alpha (\tau^{(n)}+\ldots+T_{\mathrm{I}^{(n)}_{k}})}
\vert \mathcal{Y}_{n} }.
\end{array}
$$
This immediately entails the second moment.

\end{proof}

Before stating the next lemma we remind the reader of a key, for this manuscript, result presented in
\citet{KBar2014}'s Appendix A.$2$ (top of second column, p. $55$) in the case of $p$ constant

\be\label{eqEPsin}
\Expectation{\Psi^{(n)}} = p\left\{
\begin{array}{cc}
\frac{1}{\alpha}\left(\frac{2-(2\alpha+1)(2\alpha n -2\alpha+2)b_{n,2\alpha}}{(n-1)(2\alpha-1)}\right) & \alpha \neq 0.5\\
\frac{4}{n-1}\left(H_{n}-\frac{5n-1}{2(n+1)} \right) & \alpha = 0.5
\end{array}
\right. .
\ee
\begin{lemma}\label{lemSubmartEexpPhi}
Assume that the jump probability is constant, equalling $0<p<1$, at every speciation event.
Let 

$$
a_{n}(\alpha) = \left\{
\begin{array}{cc}
n^{2\alpha} & 0 < \alpha < 0.5, \\
n\ln ^{-1}n & 0.5=\alpha, \\
n & 0.5 < \alpha
\end{array}
\right.
$$
and then for all $\alpha>0$ and $n$ greater than some $n(\alpha)$

$$
W_{n}:= a_{n}(\alpha)\Expectation{\Psi^{(n)}\vert \mathcal{Y}_{n} },
$$
converges a.s. and in $L^{1}$ to a random variable
$W_{\infty}$ with expectation 

$$
\Expectation{W_{\infty}}=
\left\{
\begin{array}{cc}
\frac{2p(2\alpha+1)\Gamma(2\alpha+1)}{(1-2\alpha)} & 0 < \alpha < 0.5, \\
4p & 0.5=\alpha, \\
2p/(\alpha(2\alpha-1))  & 0.5 < \alpha .
\end{array}
\right.
$$
In particular for $\alpha=0.5$ (and also $p=1$, see Remark \ref{remSubmartEexpPhip1})
$W_{\infty}$ is a constant and the convergence is a.s. and $L^{2}$.
\end{lemma}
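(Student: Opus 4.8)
The plan is to realise $W_{n}$ as a nonnegative submartingale that is bounded in $L^{1}$, deduce almost sure convergence from the submartingale convergence theorem, and then upgrade to $L^{1}$ convergence by Scheff\'e's lemma once the limit of the expectations is known. First I would record a pathwise formula for the conditional expectation by decomposing over the jump-bearing nodes of the realised tree. Conditionally on $\mathcal{Y}_{n}$ everything but the sampled pair is fixed, so writing $t_{i}$ for the time separating an internal node from the present and $d$ for the number of tips below the daughter edge carrying a given jump, one gets $\Expectation{\Psi^{(n)}\vert\mathcal{Y}_{n}}=\binom{n}{2}^{-1}\sum(\text{jump indicator})\,e^{-2\alpha t_{i}}\binom{d}{2}$, since $\binom{d}{2}/\binom{n}{2}$ is the chance that a uniformly chosen pair coalesces strictly below that node. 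This is the same bookkeeping already carried out through $\phi_{i}$ and $\pi_{k,n}$ in the proof of Lemma \ref{lemvarEexpPhi}.

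Next I would set up the one-step recursion generated by the growth of the Yule tree from $n$ to $n+1$ tips. Passing to $n+1$ tips lengthens every lineage to the present by an independent $\mathrm{Exp}(n+1)$ increment, so each term acquires a factor $e^{-2\alpha T_{n+1}}$; the freshly created node is pendant and contributes $\binom{1}{2}=0$, hence adds no jump term; and the splitting tip, being uniform among the $n$ tips, inflates each descendant count by one with probability $d/n$, replacing $\binom{d}{2}$ by $\binom{d}{2}+d^{2}/n$ in expectation. Integrating out the increment, with $\Expectation{e^{-2\alpha T_{n+1}}}=(n+1)/(n+1+2\alpha)$, and the identity of the splitting tip yields $\Expectation{\Psi^{(n+1)}\vert\mathcal{Y}_{n}}=\frac{n+1}{n+1+2\alpha}\,\frac{n-1}{n+1}\bigl(\Expectation{\Psi^{(n)}\vert\mathcal{Y}_{n}}+R_{n}\bigr)$, where $R_{n}=\binom{n}{2}^{-1}\sum(\text{jump indicator})\,e^{-2\alpha t_{i}}\,d^{2}/n\ge 0$ records the inflation of the pair counts.

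The submartingale inequality $\Expectation{W_{n+1}\vert\mathcal{Y}_{n}}\ge W_{n}$ then reduces to $a_{n+1}\frac{n+1}{n+1+2\alpha}\frac{n-1}{n+1}\bigl(\Expectation{\Psi^{(n)}\vert\mathcal{Y}_{n}}+R_{n}\bigr)\ge a_{n}\Expectation{\Psi^{(n)}\vert\mathcal{Y}_{n}}$, i.e. to a lower bound on $R_{n}$ relative to $\Expectation{\Psi^{(n)}\vert\mathcal{Y}_{n}}$. I expect this step to be the main obstacle. The crude termwise estimate $d^{2}/n\ge\tfrac{2}{n}\binom{d}{2}$ gives only $R_{n}\ge\tfrac{2}{n}\Expectation{\Psi^{(n)}\vert\mathcal{Y}_{n}}$, which paired with $a_{n}$ is not by itself enough once $\alpha$ is large (the near-root nodes, which carry the largest $d$, are exactly the heavily discounted ones), so the regime-dependent scaling $a_{n}(\alpha)$ and a sharper control of the weighting by $t_{i}$ must be used to secure the inequality only for $n\ge n(\alpha)$; this is where the threshold $n(\alpha)$ enters, and where the drift changes character across the phase transition at $\alpha=0.5$.

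Finally, for convergence I would argue as follows. Nonnegativity of $W_{n}$ is immediate. Taking unconditional expectations in \eqref{eqEPsin} and using $b_{n,2\alpha}\sim\Gamma(2\alpha+1)n^{-2\alpha}$, I would check that $\Expectation{W_{n}}=a_{n}(\alpha)\Expectation{\Psi^{(n)}}$ converges to the finite constants listed in the statement, so that $\sup_{n}\Expectation{\vert W_{n}\vert}=\sup_{n}\Expectation{W_{n}}<\infty$; here the $\ln n$ in $a_{n}$ at $\alpha=0.5$ absorbs the harmonic growth of $\Expectation{\Psi^{(n)}}$ coming from the $H_{n}$ term. The submartingale convergence theorem then gives $W_{n}\to W_{\infty}$ almost surely. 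Since $W_{n}\ge 0$, $W_{n}\to W_{\infty}$ a.s. and $\Expectation{W_{n}}$ converges to a finite limit, Scheff\'e's lemma upgrades this to $L^{1}$ convergence and identifies $\Expectation{W_{\infty}}=\lim_{n}\Expectation{W_{n}}$, which is precisely the case-by-case value claimed.
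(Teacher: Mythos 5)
Your overall architecture is the same as the paper's: a one-step recursion over the Yule growth step with discount $\E{e^{-2\alpha T_{n+1}}}=(n+1)/(n+1+2\alpha)$, the freshly split cherry feeding in the single-lineage functional, limits of $\E{W_{n}}$ read off from Eq.~\eqref{eqEPsin}, and the submartingale convergence theorem. Your pathwise formula $\E{\Psi^{(n)}\vert\mathcal{Y}_{n}}=\binom{n}{2}^{-1}\sum Z_{e}e^{-2\alpha t_{e}}\binom{d_{e}}{2}$ and the bookkeeping $\E{\binom{d'}{2}}=\binom{d}{2}+d^{2}/n$ are correct. But the proposal does not prove the lemma: the submartingale inequality, which \emph{is} the content of the statement, is explicitly deferred (``sharper control\ldots must be used''), and it is not a routine estimate. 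Writing $d^{2}=2\binom{d}{2}+d$, your own recursion gives, for $\alpha>0.5$ and $a_{n}=n$,
$$
\E{W_{n+1}\vert\mathcal{Y}_{n}}=\frac{(n+1)(n-1)(n+2)}{n^{2}(n+1+2\alpha)}\,W_{n}
+\frac{n+1}{n+1+2\alpha}\,\frac{2}{n}\,\E{\Psi^{\ast^{(n)}}\vert\mathcal{Y}_{n}},
$$
and the coefficient of $W_{n}$ equals $1+(1-2\alpha)/n+O(n^{-2})<1$ eventually. Hence the inequality $\E{W_{n+1}\vert\mathcal{Y}_{n}}\ge W_{n}$ requires the \emph{pathwise} comparison $2\E{\Psi^{\ast^{(n)}}\vert\mathcal{Y}_{n}}\ge(2\alpha-1)W_{n}+O(W_{n}/n)$, a relation between the single-lineage and pair functionals that holds with asymptotic \emph{equality} in expectation (both sides tend to $2p/\alpha$), so no crude termwise bound can deliver it; worse, on the positive-probability configuration where the only jump sits on a root-daughter edge subtending $d=n-1$ tips one has $W_{n}=(n-2)e^{-2\alpha t}$ but $\E{\Psi^{\ast^{(n)}}\vert\mathcal{Y}_{n}}=\frac{n-1}{n}e^{-2\alpha t}$, and the one-step drift is strictly negative for large $n$ when $\alpha>0.5$. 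For comparison, the paper closes this step by a pair decomposition through the splitting indicator $\xi_{i}$ in which the $W_{n}$-coefficient comes out as $2(n-1)(n+1)/(n(n+1+2\alpha))\ge 1$, making the inequality immediate; your (correct) accounting yields the smaller coefficient above, the discrepancy tracing to the paper's simplification of $\frac{2n(n-1)}{n^{2}}\E{\Psi^{(n)}\vert\mathcal{Y}_{n}}$ to $\frac{n-1}{n}W_{n}$ instead of $\frac{2(n-1)}{n^{2}}W_{n}$. So your hesitation at this step is well founded, but a proof must resolve the crux, not flag it, and your recursion in fact indicates that the pathwise inequality cannot be secured for all large $n$ without additional ideas.

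The convergence step also has a genuine gap: Scheff\'e's lemma needs the \emph{hypothesis} $\E{W_{n}}\to\E{W_{\infty}}$, whereas you only know $\E{W_{n}}\to c$ for the explicit constant $c$ computed from \eqref{eqEPsin}; a.s.\ convergence plus convergence of the means to \emph{some} limit does not identify $\E{W_{\infty}}$ with that limit (take $W_{n}=n\mathbf{1}_{[0,1/n]}$: a.s.\ limit $0$, means identically $1$, no $L^{1}$ convergence). What is needed is uniform integrability, which the paper obtains from the uniform second-moment bound supplied by Lemma~\ref{lemvarEexpPhi} (bounded $\variance{W_{n}}$) together with the bounded means --- an ingredient entirely absent from your proposal. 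Adding that $L^{2}$ bound would close your final step and give $\E{W_{\infty}}=\lim_{n}\E{W_{n}}$ with the stated case-by-case constants; the submartingale inequality itself remains the unproved core.
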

\begin{proof}
\noindent {\sc for $\alpha>0.5$}
We know that $\Expectation{W_{n}} < C_{E}$
for some constant $C_{E}$, as  $\Expectation{W_{n}} \to 2p/(\alpha(2\alpha-1))$ by Eq. \eqref{eqEPsin}.
Furthermore, by Lemma \ref{lemvarEexpPhi} $\variance{W_{n}} < C_{V}$, for some constant $C_{V}$.
Looking in detail, one can see from Eq. \eqref{eqEPsin},
that $\Expectation{W_{n}}$ will (from $n$ large enough) converge monotonically to its limit.
It will be decreasing with $n$ for $\alpha>1$ and increasing for $0.5<\alpha \le 1$.
If one considers the asymptotic behaviour, then
the leading term will be $4p/(\alpha(2\alpha-1))(1+1/(n-1))(1-\alpha\Gamma(2\alpha+2)n^{-2\alpha+1})$.
Direct calculations show that for $\alpha>1$ it will be decreasing, as it behaves as 
$4p/(\alpha(2\alpha-1))(1+1/(n-1))$, for $\alpha=1$ it will be increasing as it behaves as 
$4p(1-5n^{-1})$, while for for $0.5<\alpha<1$ it will be increasing as it behaves as 
$4p/(\alpha(2\alpha-1))(1-\alpha\Gamma(2\alpha+2)n^{-2\alpha+1})$.

Therefore, if one studies the proof of the downcrossing inequality
and submartingale convergence theorem (e.g. Thm. $1.71$, Cor. $1.72$, p. $44$, \citet{PMed2007}) one
will notice that only the monotonicity (which in the classical submartingale convergence theorem is 
a consequence of the sequence being a submartingale) and boundedness 
of the expectations of the sequence of positive random variables are required for the 
almost sure convergence. All of the above is met in our case for $W_{n}$. 

Hence, by the above $W_{n} \to W_{\infty}$ a.s.
for some random variable $W_{\infty}$ and as all expectations are finite,
and the variance is uniformly bounded we have $\Expectation{W_{\infty}} < \infty$.  
This entails
$\Expectation{W_{n}} \to \Expectation{W_{\infty}} = 2p/(\alpha(2\alpha-1))$.
Also we have  
uniform integrability of $\{W_{n} \}$ and hence $L^{1}$ convergence. 
\\
\noindent {\sc Proof for $\alpha=0.5$} 
By Lemma \ref{lemvarEexpPhi} we know that $\variance{\Expectation{W_{n} \vert \mathcal{Y}_{n}}}$
behaves as $n^{-2}\ln n$. Therefore, 
$\variance{W_{n}}=a_{n}^{2}(0.5)\cdot\variance{\Expectation{W_{n} \vert \mathcal{Y}_{n}}}
\sim C (n^{2}\ln^{-2} n) (n^{-2}\ln n) = C \ln^{-1} n \to 0$.
Therefore, $W_{n}$ converges a.s. and in $L^{2}$ to a constant $W_{\infty}=4p$.

\noindent {\sc Proof for $0<\alpha<0.5$}
is the same as the proof for $\alpha>0.5$, except that
now the leading terms in the asymptotic behaviour of
$\Expectation{W_{n}}$ will be 
$p(2\alpha\Gamma(2\alpha+2)+2n^{2\alpha-1})/(\alpha(1-2\alpha))$.
This causes the sequence of expectations to be increasing (from $n$ large enough) and we may
argue similar as when  $\alpha > 0.5$.
From Eq. \eqref{eqEPsin} 
we obtain $\Expectation{W_{n}}\to 2p(2\alpha+1)\Gamma(2\alpha+1)/(1-2\alpha)$ and
$\variance{W_{n}}$ is bounded by a constant by Lemma \ref{lemvarEexpPhi}.
\end{proof}

\begin{remark}\label{remSubmartEexpPhip1}
If $p=0$, we are in the trivial case of no jumps. When $p=1$, in $\alpha>0.5$ regime
we will have $W_{n}$ converging a.s. and in $L^{2}$ to a constant, denoted above as 
$\Expectation{W_{\infty}}$, by the same argument
that takes place for $\alpha=0.5$, i.e. as the rate of decay to $0$ of 
$\variance{\Expectation{W_{n} \vert \mathcal{Y}_{n}}}$ is faster than $n^{-2}$. 
In the $\alpha <0.5$ regime the argumentation presented above
holds for $p=1$ and no convergence to a constant can be deduced, as Lemma \ref{lemvarEexpPhi}
does not provide a different rate of decay of $\variance{\Expectation{W_{n} \vert \mathcal{Y}_{n}}}$.
\end{remark}

\begin{remark}
It is worth noticing that $W_{n}$ has a very interesting recursive structure. 
Denote by $\Psi^{(n+1)}_{ij}$ the value that $\Psi^{(n+1)}$ would 
take if the randomly chosen pair of species would be tips $i$ and $j$
and by $\Psi^{\ast^{(n)}}_{i}$ the value that $\Psi^{\ast^{(n)}}$ would 
take if tip $i$ is sampled.

$$
\begin{array}{rcl}
W_{n+1} & = & (n+1)\frac{2}{(n+1)n}\sum\limits_{i=1}^{n}\sum\limits_{j=i+1}^{n+1}\Psi^{(n+1)}_{ij}
\\ &=&  e^{-2\alpha T_{n+1}}
\left( \frac{n-1}{n}W_{n} + \frac{2}{n}\sum\limits_{i=1}^{n}\xi_{i}
\sum\limits_{k=1}^{\Upsilon^{(i,n)}}J_{k}^{(i,n)}e^{-2\alpha 
(T_{n}+\ldots + T_{\mathrm{I}^{(i,n)}_{k}+1})}
\right. \\&& \left. 
+ \frac{2}{n}\sum\limits_{i=1}^{n}\xi_{i}\sum\limits_{j\neq i}^{n}
\sum\limits_{k=1}^{\upsilon^{(i,j,n)}}J_{k}^{(i,j,n)}
e^{-2\alpha (\tau^{(i,j,n)}+\ldots+T_{\mathrm{I}^{(i,j,n)}_{k}+1})}
\right)
\\ &=& e^{-2\alpha T_{n+1}}
\left( \frac{n-1}{n}W_{n} + \frac{2}{n}\sum\limits_{i=1}^{n}\xi_{i}\Psi^{\ast^{(n)}}_{i}
+\frac{2}{n}\sum\limits_{i=1}^{n}\xi_{i}\sum\limits_{j\neq i}^{n}\Psi^{(n)}_{ij} \right),
\end{array}
$$
where $\xi_{i}$ is a binary random variable indicating whether it  is the $i$--th lineage that split
(see Fig. \ref{figSplit}). It is worth emphasizing that the sum defining $W_{n+1}$ splits
according to whether one picks both members of the pair of species splitting in the last
speciation event or only one of them.

\begin{figure}[!ht]
\begin{center}
\includegraphics[width=1\textwidth]{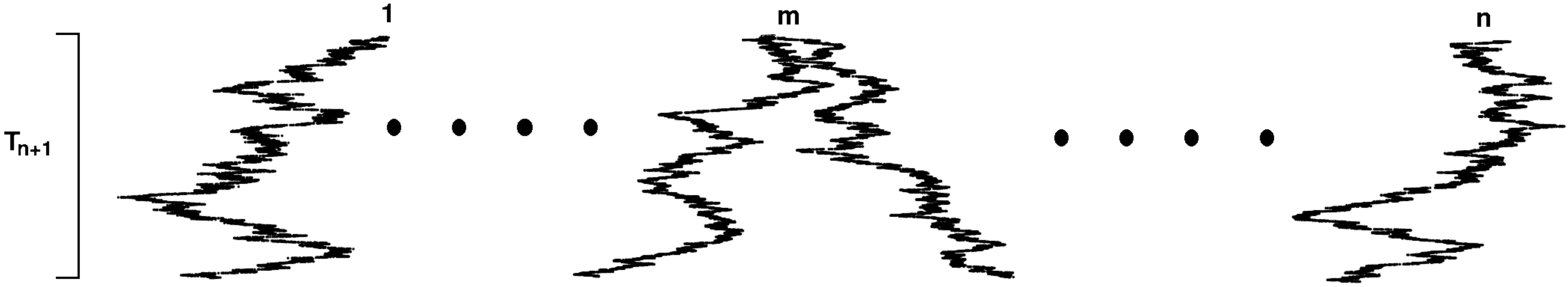}
\caption{The situation of the process between the $n$--th and $n+1$--st split. Node $m$ split
so $\xi_{m}=1$ and $\xi_{i}=0$ for $i\neq m$. The time between the splits is $T_{n+1}\sim \exp(n+1)$.
}\label{figSplit}
\end{center}
\end{figure}

Obviously the distribution of the vector $(\xi_{1},\ldots,\xi_{n})$ is uniform
on the $n$--element set \\ $\{(1,0,\ldots,0),\ldots,(0,\ldots,0,1)\}$. In particular note

$$
\begin{array}{rcl}
\Expectation{W_{n+1} \vert \mathcal{Y}_{n}} 
&=&  
\frac{n+1}{n+1+2\alpha}
\left( \frac{n-1}{n}W_{n} + 
\frac{2}{n^{2}}\Expectation{\sum\limits_{i=1}^{n}\Psi^{\ast^{(n)}}_{i}\vert \mathcal{Y}_{n}}
+\frac{2}{n^{2}}\Expectation{\sum\limits_{i=1}^{n}\sum\limits_{j\neq i}^{n}\Psi^{(n)}_{ij}\vert \mathcal{Y}_{n}}
\right)
\\ & = &
\frac{n+1}{n+1+2\alpha}
\left( \frac{n-1}{n}W_{n} + 
\frac{2}{n}\Expectation{\Psi^{\ast^{(n)}}\vert \mathcal{Y}_{n}}
+\frac{2n(n-1)}{n^{2}}\Expectation{\Psi^{(n)}\vert \mathcal{Y}_{n}}
\right)
\\ & = &
\frac{n+1}{n+1+2\alpha}
\left( \frac{n-1}{n}W_{n} + 
\frac{2}{n}\Expectation{\Psi^{\ast^{(n)}}\vert \mathcal{Y}_{n}}
+\frac{2(n-1)}{n^{2}}W_{n}
\right)
\\ & = &
\frac{n+1}{n+1+2\alpha}
\left( \frac{(n-1)(n+2)}{n^{2}}W_{n} + 
\frac{2}{n}\Expectation{\Psi^{\ast^{(n)}}\vert \mathcal{Y}_{n}}
\right)
\\ & = &
\frac{(n-1)(n+1)(n+2)}{n^{2}(n+1+2\alpha)}W_{n} + 
\frac{2(n+1)}{n(n+1+2\alpha)}\Expectation{\Psi^{\ast^{(n)}}\vert \mathcal{Y}_{n}}.
\end{array}
$$
Furthermore, $W_{n}$ shows resemblance to a martingale as the coefficient 
$\frac{(n-1)(n+1)(n+2)}{n^{2}(n+1+2\alpha)}$ converges to $1$ monotonically, depending on $\alpha$ 
from above or below, while 
$\frac{2(n+1)}{n(n+1+2\alpha)}\Expectation{\Psi^{\ast^{(n)}}\vert \mathcal{Y}_{n}} \xrightarrow{\mathbb{P},L^{2}} 0$.
\end{remark}

\noindent 
{\sc Proof of Theorem \ref{thmCLTYOUjpsConst}, Part \ref{rCLTthmpIpConst}, $\alpha > 0.5$} \\
We will show convergence in probability of the conditional mean and variance

$$
\begin{array}{rcccl}
\mu_{n}&:=&
\sqrt{n}\Expectation{\overline{Y}_{n} \vert \mathcal{Y}_{n}}
&\xrightarrow{\mathbb{P}} &
0 ~~~n\to \infty \\
\sigma_{n}^{2}&:= &
n\variance{\overline{Y}_{n} \vert \mathcal{Y}_{n}}
& \xrightarrow{\mathbb{P}} & \sigma_{\infty}^{2}
 ~~~n\to \infty ,
\end{array}
$$
for a finite mean and variance random variable $\sigma_{\infty}^{2}$.
Then, due to the conditional normality of $\overline{Y}_{n}$ this will give the 
convergence of 
characteristic functions 
and the desired weak convergence, i.e.

$$
\Expectation{e^{i x \sqrt{n} \cdot\overline{Y}_{n}}}=\Expectation{e^{i\mu_{n}x-\sigma_{n}^{2}x^{2}/2}}\to 
\Expectation{e^{-\sigma_{\infty}^{2}x^{2}/2}}.
$$

Using Lemma \ref{condYOUjMom} and that the Laplace transform 
of the average coalescent time \cite[Lemma $3$ in][]{KBarSSag2015} is

\be
\Expectation{e^{-2\alpha \tau^{(n)}_{ij}}} = \frac{2-(n+1)(2\alpha+1)b_{n,2\alpha}}{(n-1)(2\alpha-1)} = \frac{2}{2\alpha-1} n^{-1} + O(n^{-2\alpha})
\ee
we can calculate 

$$
\begin{array}{rcl}
\Expectation{\mu_{n}} & = & \delta \Expectation{e^{-\alpha U^{(n)}}} =  \delta b_{n,\alpha} = O(n^{-\alpha}), \\
\variance{\mu_{n}} & = & n\left(\Expectation{\mu_{n}^{2}} - \left(\Expectation{\mu_{n}}\right)^{2}\right)
=  \delta^{2}n\left(\Expectation{e^{-2\alpha U^{(n)}}} - \left(\Expectation{e^{-\alpha U^{(n)}}}\right)^{2}\right)
=  \delta^{2}n\left(b_{n,2\alpha} - b_{n,\alpha}^{2}\right)
\\
& = &  \delta^{2}\alpha n b_{n,2\alpha} \sum\limits_{j=1}^{n} \frac{b_{j,\alpha}^{2}}{b_{j,2\alpha}}\frac{1}{j(j+2\alpha)}
 = O(n^{-2\alpha+1}).  
\end{array} $$
Therefore we have $\mu_{n} \to 0$ in $L^{2}$ and hence in $\mathbb{P}$.

Remembering that 
$\sigma_{c,k}^{2}$ was assumed constant, equalling $\sigma_{c}^{2}$,
Lemma \ref{condYOUjMom} states that

$$
\begin{array}{rcl}
\sigma_{n}^{2} & = & 
1-ne^{-2\alpha U^{(n)}} +
n(1-n^{-1})\Expectation{e^{-2\alpha \tau^{(n)}}\vert\mathcal{Y}_{n}}
\\&&
+(\sigma_{a}^{2}/(2\alpha))^{-1}\sigma_{c}^{2}
\Expectation{\Psi^{\ast^{(n)}}\vert \mathcal{Y}_{n} }
+n(1-n^{-1}) (\sigma_{a}^{2}/(2\alpha))^{-1}\sigma_{c}^{2}
\Expectation{\Psi^{(n)}\vert \mathcal{Y}_{n} }
\end{array}
$$
Remembering that $p_{k}$ was assumed constant, equalling $p$,
we know that 
\begin{enumerate}
\item $n\Expectation{e^{-2\alpha \tau^{(n)}}} \to 2/(2\alpha-1)$
(Eq. $(4)$ in Lemma $3$, \citet{KBarSSag2015}),
\item $n^{2}\variance{\Expectation{e^{-2\alpha \tau^{(n)}}\vert\mathcal{Y}_{n}}} \to 0$
(Lemma \ref{lem11}),
\item $\Expectation{\Psi^{\ast^{(n)}}} \to 2p/(2\alpha)$
(Appendix A.$2$, p. $54$ just above Fig. A.$8$., \citet{KBar2014}),
\item $\variance{\Expectation{\Psi^{\ast^{(n)}}\vert \mathcal{Y}_{n} }}\to 0$ (Lemma \ref{lemvarEexpPsi}),
\item $n\Expectation{\Psi^{(n)}\vert \mathcal{Y}_{n} } \xrightarrow{\mathbb{P}} W_{\infty}$
(Lemmata \ref{lemvarEexpPhi}, \ref{lemSubmartEexpPhi}).
\end{enumerate}
Hence, we have $n\Expectation{e^{-2\alpha \tau^{(n)}}\vert\mathcal{Y}_{n}} \xrightarrow{\mathbb{P},L^{2}} 2/(2\alpha-1)$
and $\Expectation{\Psi^{\ast^{(n)}}\vert \mathcal{Y}_{n} }\xrightarrow{\mathbb{P},L^{2}} 2p/(2\alpha)$.
Putting these individual components together we obtain

$$
\sigma_{n}^{2} \xrightarrow{\mathbb{P}} 1 + \frac{2}{2\alpha-1} + \frac{2p\sigma^{2}_{c}}{\sigma_{a}^{2}}+ 
\frac{\sigma^{2}_{c}W_{\infty}}{\sigma_{a}^{2}/(2\alpha)} =: \sigma_{\infty}^{2}. 
$$
By Lemma \ref{lemSubmartEexpPhi} 

$$
\Expectation{\sigma_{\infty}^{2}} = 
1+ \frac{2}{2\alpha-1}+
\frac{2p\sigma^{2}_{c}}{\sigma_{a}^{2}}+ 
\frac{4p\sigma^{2}_{c}}{(2\alpha-1)\sigma_{a}^{2}}.
$$
\\~\\
{\sc Proof of Part \ref{rCLTthmpIIpConst}, $\alpha=0.5$}\\
We again show convergence in probability of the conditional mean and variance

$$
\begin{array}{rcccl}
\mu_{n}&:=&
\sqrt{(n\ln^{-1}n)}\Expectation{\overline{Y}_{n} \vert \mathcal{Y}_{n}}
&\xrightarrow{\mathbb{P}} &
0 ~~~n\to \infty \\
\sigma_{n}^{2}&:= &
(n\ln^{-1}n)\variance{\overline{Y}_{n} \vert \mathcal{Y}_{n}}
& \xrightarrow{\mathbb{P}} & 2+4p\sigma_{c}^{2}/\sigma_{a}^{2}
 ~~~n\to \infty.
\end{array}
$$
As all the steps are the same as in {\sc Part \ref{rCLTthmpIpConst}} we just explicitly write the key part
concerning 

$$
\begin{array}{rcl}
\sigma_{n}^{2} & = & 
\left(\ln^{-1}n\right)\left(1-ne^{- U^{(n)}}\right) +
(n\ln^{-1} n)(1-n^{-1})\Expectation{e^{- \tau^{(n)}}\vert\mathcal{Y}_{n}}
\\&&
+(\sigma_{c}^{2}/\sigma_{a}^{2})(\ln^{-1}n)
\Expectation{\Psi^{\ast^{(n)}}\vert \mathcal{Y}_{n} }
+(n\ln^{-1} n)(1-n^{-1}) (\sigma_{c}^{2}/\sigma_{a}^{2})
\Expectation{\Psi^{(n)}\vert \mathcal{Y}_{n} }
\end{array}
$$
As before $p_{k}\equiv p$ and
we know that 
\begin{enumerate}
\item $(n\ln^{-1}n)\Expectation{e^{- \tau^{(n)}}} \to 2$
(Eq. $(4)$ in Lemma $3$, \citet{KBarSSag2015}),
\item $(n^{2}\ln^{-2}n)\variance{e^{- \tau^{(n)}}\vert\mathcal{Y}_{n}} \to 0$
(Lemma \ref{lem11}),
\item $(\ln^{-1}n)\Expectation{\Psi^{\ast^{(n)}}} \to 0$
(Appendix A.$2$, p. $54$ just above Fig. A.$8$., \citet{KBar2014}),
\item $(\ln^{-2}n)\variance{\Expectation{\Psi^{\ast^{(n)}}\vert \mathcal{Y}_{n} }}\to 0$ (Lemma \ref{lemvarEexpPsi}),
\item $(n\ln^{-1} n)\Expectation{\Psi^{(n)}} \to 4p$ (Eq. \ref{eqEPsin}),
$(n^{2}\ln^{-2} n)\variance{\Expectation{\Psi^{(n)}\vert \mathcal{Y}_{n} }} \to 0$
(Lemma \ref{lemvarEexpPhi}).
\end{enumerate}
Putting these individual components together we obtain the $L^{2}$ convergence and hence

$$
\sigma_{n}^{2} \xrightarrow{\mathbb{P}} 2 + 4p\sigma_{c}^{2}/\sigma_{a}^{2}. 
$$
\\~\\
{\sc Proof of Part \ref{rCLTthmpIIIpConst}, $0<\alpha<0.5$}\\
We notice that the martingale (with respect to $\mathcal{F}_{n}$)
$H_{n}=(n+1)e^{(\alpha-1)U^{(n)}}\overline{Y}_{n}$ has
uniformly bounded second moments. Namely by Lemma \ref{condYOUjMom},
a modification of Lemma \ref{lemcovEexpUnPhi}, Cauchy--Schwarz, bounding $\Expectation{\left(\Psi^{\ast^{(n)}} \right)^{2}}$ by a constant
and remembering that in this case $\sigma_{c}^{2}$ is constant

$$
\begin{array}{l}
\Expectation{H_{n}^{2}}=(n+1)^{2}\Expectation{e^{2(\alpha-1)U^{(n)}}\Expectation{\overline{Y}_{n}^{2} \vert \mathcal{Y}_{n}}}
\le C
n^{2}\left(
n^{-1}\Expectation{e^{-2(1-\alpha)U^{(n)}}}+\Expectation{e^{-2(1-\alpha)U^{(n)}-2\alpha \tau^{(n)}}}
\right. \\ \left.
+n^{-1}(\sigma_{a}^{2}/(2\alpha))^{-1}\Expectation{e^{-2(1-\alpha)U^{(n)}}\Psi^{\ast^{(n)}}}
+(\sigma_{a}^{2}/(2\alpha))^{-1}\Expectation{e^{-2(1-\alpha)U^{(n)}}\Psi^{(n)}}
\right)
\\
\le Cn^{2}\left(
n^{-1}n^{-2(1-\alpha)}+n^{-2(1-\alpha)}n^{-2\alpha}
+n^{-1}n^{-2(1-\alpha)}
+ n^{-2}
\right)
\\
\le C
\left(
n^{-1+2\alpha}+1
+n^{-1+2\alpha}
+1
\right) \to C < \infty.
\end{array}
$$
To deal with $\Expectation{e^{-2(1-\alpha)U^{(n)}}\Psi^{(n)}}$ one slightly modifies the proof
of Lemma \ref{lemcovEexpUnPhi}. Namely instead of considering the random variable $\bar{\phi}_{i}$,
consider 

$$\tilde{Z}_{i}\tilde{\mathbf{1}}_{i}\exp\left(-(2(T_{n}+\ldots+T_{i+1})+2(1-\alpha)(T_{i}+\ldots+T_{1})) \right)$$
and then doing similar calculations one will obtain a decay of order $n^{-2}$.
It is also worth pointing out that using \citet{KBarSSag2015}'s Lemma 3
for a more detailed consideration of $\Expectation{e^{-2(1-\alpha)U^{(n)}-2\alpha \tau^{(n)}}}$,
would not result in a different rate of decay, than what Cauchy--Schwarz provides, i.e. $n^{-2}$.
Hence, $\sup_{n} \Expectation{H_{n}^{2}} < \infty$ and by the martingale convergence theorem,
$H_{n} \to H_{\infty}$ a.s. and in $L^{2}$. 
We obtain 
$n^{\alpha}\overline{Y}_{n} \to V^{(\alpha-1)}H_{\infty}$
a.s. and in $L^{2}$, where $V^{(x)}$ is the a.s. and $L^{2}$ limit of
$V_{n}^{(x)}=b_{n,x}^{-1}e^{-xU^{(n)}}$ (cf. Lemma 9 in \citet{KBarSSag2015}).
Notice that for the convergence to hold in the $0<\alpha<0.5$ regime, 
it is not required that 
$\sigma_{c,k}^{2}$ is constant, only bounded.
We may also obtain directly the first two moments of 
$n^{\alpha}\overline{Y}_{n}$ (however, for these formul\ae\ to hold, $\sigma_{c,k}^{2}$ has to be constant)

$$
\begin{array}{rcl}
n^{\alpha}\Expectation{\overline{Y}_{n}} & = & \delta n^{\alpha} b_{n,\alpha} \to \delta \Gamma(1+\alpha) \\
n^{2\alpha}\Expectation{\overline{Y}_{n}^{2}} & = & 
n^{2\alpha-1} - (1-\delta^{2})n^{2\alpha}b_{n,2\alpha}
+n^{2\alpha}(1-n^{-1})\Expectation{e^{-2\alpha \tau^{(n)}}}
\\ &&
+n^{2\alpha-1}\sigma_{c}^{2}(\sigma_{a}^{2}/(2\alpha))^{-1}\Expectation{\Psi^{\ast^{(n)}}}
+ n^{2\alpha}\sigma_{c}^{2}(\sigma_{a}^{2}/(2\alpha))^{-1}\Expectation{\Psi^{(n)}}
\\ & \to &
- (1-\delta^{2})\Gamma(2\alpha+1)
+\frac{1+2\alpha}{1-2\alpha}\Gamma(1+2\alpha)(1+2p\sigma_{c}^{2}(\sigma_{a}^{2}/(2\alpha))^{-1}).
\end{array}
$$
\\~\\ \noindent
{\sc Proof of Theorem \ref{thmCLTYOUjpsae0}, Part \ref{rCLTthmpI}, $\alpha > 0.5$} 
\\~\\
From the proof of Part \ref{rCLTthmpIpConst}, Theorem \ref{thmCLTYOUjpsConst} 
we know that $\mu_{n} \to 0$ in probability.
Then, by the assumptions of the theorem on the expectations

$$
\begin{array}{rcl}
\Expectation{\sigma_{n}^{2}} & = & n\left(
n^{-1}-\Expectation{e^{-2\alpha U^{(n)}}} +
(1-n^{-1})\Expectation{e^{-2\alpha \tau^{(n)}}}
\right.\\~\\ &&\left.
+
n^{-1}(\sigma_{a}^{2}/(2\alpha))^{-1} 
\Expectation{
\sum\limits_{k=1}^{\Upsilon^{(n)}} \sigma_{c,\mathrm{I}^{(n)}_{k}}^{2}J_{k}e^{-2\alpha (T_{n}+\ldots+T_{\mathrm{I}^{(n)}_{k}+1})}
}
\right.\\~\\&&\left.
+(1-n^{-1})(\sigma_{a}^{2}/(2\alpha))^{-1}
\Expectation{
\sum\limits_{k=1}^{\upsilon^{(n)}}\sigma_{c,\mathrm{\tilde{I}}^{(n)}_{k}}^{2}\tilde{J}_{k}e^{-2\alpha (\tau^{(n)}+\ldots+T_{\mathrm{\tilde{I}}^{(n)}_{k}+1})}
}\right)
\\ & \to & \frac{2\alpha+1}{2\alpha-1} + (\sigma_{\Upsilon}^{2}+\sigma_{\upsilon}^{2})/(\sigma_{a}^{2}/(2\alpha)).
\end{array}
$$
Furthermore, Lemma 
\ref{lemvarEexpPsi} (the sequence $\sigma_{c,k}^{4}p_{k}$ is bounded by assumption) 
and Corollary \ref{corvarEexpPhi} ($\sigma_{c,k}^{4}p_{k}(1-p_{k})\to 0$ with density $1$ by assumption) imply 

$$
\begin{array}{rcl}
\variance{\sigma_{n}^{2}} & = & n^{2}\variance{\variance{\overline{Y}_{n} \vert \mathcal{Y}_{n}}}
= n^{-2}\variance{\variance{Y_{1}+ \ldots + Y_{n} \vert \mathcal{Y}_{n}}}
\\ & \le & 
C\left( 
n^{2}\variance{e^{-2\alpha U^{(n)}}} +
(n-1)^{2}\variance{\Expectation{e^{-2\alpha \tau^{(n)}}\vert\mathcal{Y}_{n}}}
\right. \\~\\ && \left.
+
(\sigma_{a}^{2}/(2\alpha))^{-2} 
\variance{\Expectation{
\sum\limits_{k=1}^{\Upsilon^{(n)}} \sigma_{c,\mathrm{I}^{(n)}_{k}}^{2}J_{k}e^{-2\alpha (T_{n}+\ldots+T_{\mathrm{I}^{(n)}_{k}+1})}
\vert\mathcal{Y}_{n} }}
\right. \\~\\ && \left.
+(n-1)^{2}(\sigma_{a}^{2}/(2\alpha))^{-2}
\variance{\Expectation{
\sum\limits_{k=1}^{\upsilon^{(n)}}\sigma_{c,\mathrm{\tilde{I}}^{(n)}_{k}}^{2}\tilde{J}_{k}e^{-2\alpha (\tau^{(n)}+\ldots+T_{\mathrm{\tilde{I}}^{(n)}_{k}+1})}
\vert \mathcal{Y}_{n} }} 
\right)
\\ &  \to & 0.
\end{array}
$$
Therefore we obtain that $\sigma_{n}^{2} \to (2\alpha+1)/(2\alpha-1)+ (\sigma_{\Upsilon}^{2}+\sigma_{\upsilon}^{2})/(\sigma_{a}^{2}/(2\alpha))$ in probability
and by convergence of characteristic functions

$$
\Expectation{e^{i x \sqrt{n} \cdot\overline{Y}_{n}}}=\Expectation{e^{i\mu_{n}x-\sigma_{n}^{2}x^{2}/2}}\to 
\Expectation{e^{-((2\alpha+1)/(2\alpha-1)+ (\sigma_{\Upsilon}^{2}+\sigma_{\upsilon}^{2})/(\sigma_{a}^{2}/(2\alpha)))x^{2}/2}}
$$
we obtain the asymptotic normality. Notice that on the other hand using
the Cauchy--Schwarz inequality, Lemmata \ref{lemcovEexpUnPhi} and \ref{lemcovEexptaunPhi}
we obtain

$$
\begin{array}{l}
\variance{\sigma_{n}^{2}} \ge 
n^{2}\variance{e^{-2\alpha U^{(n)}}} +(n-1)^{2}(\sigma_{a}^{2}/(2\alpha))^{-2}
\variance{
\Expectation{
\sum\limits_{k=1}^{\upsilon^{(n)}}\sigma_{c,\mathrm{\tilde{I}}^{(n)}_{k}}^{2}\tilde{J}_{k}e^{-2\alpha (\tau^{(n)}+\ldots+T_{\mathrm{\tilde{I}}^{(n)}_{k}+1})}
\vert \mathcal{Y}_{n} }} + b_{n}(\alpha),
\end{array}
$$
where $b_{n}(\alpha)$ is some sequence decaying to $0$ with a rate depending on $\alpha$.
Assume now that $p_{k}(1-p_{k})\sigma_{c,k}^{4}$ does not converge $0$ with density $1$.
Then, by Corollary \ref{corvarEexpPhi} we will have 

$$
\limsup\limits_{n\to\infty}\variance{
\Expectation{
\sum\limits_{k=1}^{\upsilon^{(n)}}\sigma_{c,\mathrm{\tilde{I}}^{(n)}_{k}}^{2}\tilde{J}_{k}e^{-2\alpha (\tau^{(n)}+\ldots+T_{\mathrm{\tilde{I}}^{(n)}_{k}+1})}
\vert \mathcal{Y}_{n} }}>0$$ implying $\limsup\limits_{n\to\infty}\variance{\sigma_{n}^{2}}>0$
and hence, the convergence of the characteristic functions as above does not hold.
Therefore, the convergence $p_{k}(1-p_{k})\sigma_{c,k}^{4}\to 0$ with density $1$ is 
a necessary assumption for the asymptotic normality.

{\sc Proof of Part \ref{rCLTthmpII},  $\alpha = 0.5$}
This is proved in the same way as {\sc Part \ref{rCLTthmpI}}. 
Due to the boundedness of $\sigma_{c,k}^{4}p_{k}$ (implying $\sigma_{c,k}^{2}p_{k}$ is bounded)

$$
\begin{array}{l}
(\ln^{-1}n)\Expectation{
\sum\limits_{k=1}^{\Upsilon^{(n)}} \sigma_{c,\mathrm{I}^{(n)}_{k}}^{2}J_{k}e^{- (T_{n}+\ldots+T_{\mathrm{I}^{(n)}_{k}+1})}
}\to 0 
\end{array}
$$
and as before

$$
(\ln^{-2}n)\variance{\Expectation{
\sum\limits_{k=1}^{\Upsilon^{(n)}} \sigma_{c,\mathrm{I}^{(n)}_{k}}^{2}J_{k}e^{- (T_{n}+\ldots+T_{\mathrm{I}^{(n)}_{k}+1})}
\vert\mathcal{Y}_{n} }}\to 0.
$$
Then, due to the assumption on the expectation we have that 
$\sigma^{2}_{n} \to 2 + \sigma_{\upsilon}^{2}/\sigma^{2}_{a}$
as due to the boundedness of $\sigma_{c,k}^{4}p_{k}$ by Lemma \ref{lemvarEexpPhi}

$$
(n^{2} \ln^{-2}n)
\variance{\Expectation{
\sum\limits_{k=1}^{\upsilon^{(n)}}\sigma_{c,\mathrm{\tilde{I}}^{(n)}_{k}}^{2}\tilde{J}_{k}e^{- (\tau^{(n)}+\ldots+T_{\mathrm{\tilde{I}}^{(n)}_{k}+1})}
\vert \mathcal{Y}_{n} }} \to 0.
$$

\begin{remark}\label{remBound}
The boundedness assumption for $\sigma_{c,k}^{4}p_{k}$ ($\alpha \ge 0.5$), together with the convergence to $0$
with density $1$ of $\sigma_{c,k}^{4}p_{k}(1-p_{k})$ (for $\alpha > 0.5$) allows for controlling
$\variance{\sigma_{n}^{2}}\to 0$. The boundedness assumption would still allow for showing that
$\Expectation{\sigma_{n}^{2}}$ is bounded but would not suffice for convergence. For example,
consider $p_{k}\equiv 1$ constant and $\sigma_{c,k}^{2}=1$ for $k$ odd and $2$ for $k$ even. 
Then, for $\alpha>0.5$ we would have in probability

$$
\liminf\limits_{n\to \infty}\sigma_{n}^{2} = 
1+ \frac{2}{2\alpha-1}+
\frac{2}{\sigma_{a}^{2}}+ 
\frac{4}{(2\alpha-1)\sigma_{a}^{2}}
~~\mathrm{and}~~
\limsup\limits_{n\to \infty}\sigma_{n}^{2} = 
1+ \frac{2}{2\alpha-1}+
\frac{4}{\sigma_{a}^{2}}+ 
\frac{8}{(2\alpha-1)\sigma_{a}^{2}}.
$$

It does seem that for $\alpha>0.5$ the assumption that $\sigma_{c,k}^{4}p_{k}$ is bounded
could be relaxed. However, it would essentially require that one explicitly
assumes that the sequences $\{\sigma_{c,k}^{2}\}$, $\{p_{k}\}$ are such that the 
sequences of the variances of the conditional expectations converge to $0$
(as was needed for the sequences of the expectations). 
\end{remark}

\begin{exam}\label{exPsae0}
Assume that $\sigma_{c,k}^{4}p_{k}\to 0$ with density $1$.
Then,
by the same ergodic argument as in Corollary \ref{corvarEexpPsi} 
and following the steps in the proof of Thm. \ref{thmCLTYOUjpsae0}
we obtain 
that for $\alpha>0.5$ we have 

$$
\begin{array}{l}
\Expectation{
\sum\limits_{k=1}^{\Upsilon^{(n)}} \sigma_{c,\mathrm{I}^{(n)}_{k}}^{2}J_{k}e^{-2\alpha (T_{n}+\ldots+T_{\mathrm{I}^{(n)}_{k}+1})}
}\to 0
\\ n
\Expectation{
\sum\limits_{k=1}^{\upsilon^{(n)}}\sigma_{c,\mathrm{\tilde{I}}^{(n)}_{k}}^{2}\tilde{J}_{k}e^{-2\alpha (\tau^{(n)}+\ldots+T_{\mathrm{\tilde{I}}^{(n)}_{k}+1})}
} \to 0
\end{array}
$$
resulting in 
$\sigma_{n}^{2}\xrightarrow{\mathbb{P}} (2\alpha+1)/(2\alpha-1)$
and for $\alpha=0.5$ we have also

$$
(n\ln^{-1}n)
\Expectation{
\sum\limits_{k=1}^{\upsilon^{(n)}}\sigma_{c,\mathrm{\tilde{I}}^{(n)}_{k}}^{2}\tilde{J}_{k}e^{- (\tau^{(n)}+\ldots+T_{\mathrm{\tilde{I}}^{(n)}_{k}+1})}
} \to 0
$$
resulting in
$\sigma_{n}^{2}\xrightarrow{\mathbb{P}} 2$.
Hence, to recover \citep{KBarSSag2015}'s CLTs one needs the stronger 
assumption of $\sigma_{c,k}^{4}p_{k}\to 0$ with density $1$.
\end{exam}

\section*{Acknowledgements}
A significant part of this work was done at the Department of 
Mathematics, Uppsala University, Sweden, during my postdoc which was supported 
by the Knut and Alice Wallenberg Foundation. Currently I am supported by the
Swedish Research Council (Vetenskapsr\aa det) grant no. $2017$--$04951$.
I would like to acknowledge Olle Nerman for his suggestion on adding jumps to the 
branching OU process, Wojciech Bartoszek for helpful suggestions concerning 
ergodic arguments, Venelin Mitov and Tanja Stadler for many discussions.
I would like to thank anonymous reviewers who found a number of errors and 
whose comments immensely improved the manuscript. I am especially grateful
for pointing out a number of flaws in the original proof 
of Corollary \ref{corvarEexpPsi} and suggesting a significantly more elegant way
of proving it. I am also grateful to Torkel Erhardsson and our collaboration
\citep[][]{KBarTErh2019arXiv} which allowed for a correct formulation
of Thms. \ref{thmCLTYOUjpsConst} and \ref{thmCLTYOUjpsae0}.

\bibliographystyle{plainnat}
\bibliography{OUjCLT}

\end{document}